\numberwithin{equation}{section}
\theoremstyle{plain}
\newtheorem{theorem}{Theorem}[section] %[section]
\newtheorem{lemma}[theorem]{Lemma}
\newtheorem{example}[theorem]{Example}
\newtheorem{definition}[theorem]{Definition}
\newtheorem{corollary}[theorem]{Corollary}
\newtheorem{remark}[theorem]{Remark}
\newtheorem{proposition}[theorem]{Proposition}
\theoremstyle{definition}
\newcommand{\bN}{\mathbb{N}}
\newcommand{\cF}{\mathcal{F}}
\newcommand{\cC}{\mathcal{C}}
\newcommand{\cB}{\mathcal{B}}
\newcommand{\la}{\langle}
\newcommand{\ra}{\rangle}
\newcommand{\cD}{\mathcal{D}}
\newcommand{\cH}{\mathcal{H}}
\newcommand{\cA}{\mathcal{A}}
\newcommand{\cU}{\mathcal{U}}
\newcommand{\bC}{\mathbb{C}}
\newcommand{\bR}{\mathbb{R}}
\newcommand{\cG}{\mathcal{G}}
\newcommand{\cR}{\mathcal{R}}
\newcommand{\id}{\text{id}}
\newcommand{\cI}{\mathcal{I}}
\newcommand{\cE}{\mathcal{E}}
\newcommand{\tr}{\text{tr}}
\title[Universality of graph homomorphism games]{Universality of graph homomorphism games and the quantum coloring problem}
\author{Samuel J. Harris}
\address{Northern Arizona University\\
Department of Mathematics \& Statistics\\
801 S. Osborne Dr.\\
Flagstaff, AZ\\
86011 USA}
\email{samuel.harris@nau.edu}
\begin{document}

\begin{abstract}
We show that quantum graph parameters for finite, simple, undirected graphs encode winning strategies for all possible synchronous non-local games. Given a synchronous game $\cG=(I,O,\lambda)$ with $|I|=n$ and $|O|=k$, we demonstrate what we call a weak $*$-equivalence between $\cG$ and a $3$-coloring game on a graph with at most $3+n+9n(k-2)+6|\lambda^{-1}(\{0\})|$ vertices, strengthening and simplifying work implied by Z. Ji \cite{Ji13} for winning quantum strategies for synchronous non-local games. As an application, we obtain a quantum version of L. Lov\'{a}sz's reduction \cite{Lo73} of the $k$-coloring problem for a graph $G$ with $n$ vertices and $m$ edges to the $3$-coloring problem for a graph with $3+n+9n(k-2)+6mk$ vertices. Moreover, winning strategies for a synchronous game $\cG$ can be transformed into winning strategies for an associated graph coloring game, where the strategies exhibit perfect zero knowledge for an honest verifier. We also show that, for ``graph of the game" $X(\cG)$ associated to $\cG$ from A. Atserias et al \cite{AMRSSV19}, the independence number game $\text{Hom}(K_{|I|},\overline{X(\cG)})$ is hereditarily $*$-equivalent to $\cG$, so that the possibility of winning strategies is the same in both games for all models, except the game algebra. Thus, the quantum versions of the chromatic number, independence number and clique number encode winning strategies for all synchronous games in all quantum models.
\end{abstract}

\maketitle

\section{Introduction}

Non-local games, and in particular, synchronous non-local games, have been the source of a great deal of study in recent years, due to their direct relation to the nature of entanglement, quantum computing, and quantum cryptography. Such a game involves two players (Alice and Bob) that are working cooperatively to win a multi-round game as many times as possible, by providing correct answers to questions posed by a neutral referee. The players are not allowed to communicate once the game begins, but can agree upon a strategy beforehand. Such games can be constructed where the players are unable to win the game with any classical strategy, but can win the game using quantum entanglement. Formally, a game is given by $\cG=(I,O,\lambda)$, where $I$ is a finite input set, $O$ is a finite output set, and $\lambda:O \times O \times I \times I \to \{0,1\}$ is a rule function. Each round of the game, each player receives a question from the set $I$, and must respond with an answer from the set $O$. If their questions and answers satisfy $\lambda(a,b,x,y)=1$, then they win the round of the game; otherwise they lose.

Examples of such games where the players can only win using entanglement are of great interest in the field and have been used to strengthen our understanding of the different models of bipartite correlations. The strategy that Alice and Bob uses for the game involves the probability $p(a,b|x,y)$ that Alice and Bob answer $a$ and $b$, respectively, given that they received the questions $x$ and $y$, respectively. In each model, the possible probability distributions in an $n$-input, $k$-output system are modelled by the elements of $\bR^{n^2k^2}$ of the form 
\[(p(a,b|x,y))_{1 \leq x,y \leq n, \, 1 \leq a,b \leq k}=\la E_{a,x}F_{b,y}\psi,\psi \ra,\] where $\psi$ is a pure state (unit vector) in a Hilbert space $\cH$; for each $1 \leq x \leq n$ and $1 \leq y \leq n$, the operators $\{E_{a,x}\}_{a=1}^k$ (Alice's measurement system for input $x$) and $\{F_{b,y}\}_{b=1}^k$ (Bob's measurement system for input $y$) are positive in $\mathcal{B}(\mathcal{H})$ and $\displaystyle \sum_{a=1}^k E_{a,x}=\sum_{b=1}^k F_{b,y}=I_{\cH}$, and $[E_{a,x},F_{b,y}]=0$ for all $a,b,x,y$. Our convention is that the inner product above is linear on the left, and conjugate-linear on the right. The four models that are usually considered in an $n$-input, $k$-output system, are given by restrictions on what probability densities above are allowed. The ``loc" model (for local, or classical) is given by imposing the extra condition that all the measurement operators of both players commute; the set of such correlations is denoted by $C_{loc}(n,k)$. The ``q" model (for quantum, assuming a finite-dimensional entanglement resource space) instead assumes that $\cH$ is finite-dimensional and decomposes as $\cH=\cH_A \otimes \cH_B$, and that each $E_{a,x}$ acts as the identity on $\cH_B$, and each $F_{b,y}$ acts as the identity on $\cH_A$; the set of such correlations is denoted by $C_q(n,k)$. The ``qa" model (for quantum approximate) is the collection of all probability densities that are pointwise limits of those in $C_q(n,k)$; in other words, $C_{qa}(n,k)=\overline{C_q(n,k)}$. The set of all possible correlations is given by the ``qc" model (for quantum commuting, arising from quantum field theory); the set of such correlations is denoted by $C_{qc}(n,k)$. For each of the sets $C_t(n,k)$ and a non-local game $\cG=(I,O,\lambda)$, a \textbf{winning $t$-strategy} for $\cG$ is given by an element $(p(a,b|x,y)) \in C_t(n,k)$ satisfying $p(a,b|x,y)=0$ whenever $\lambda(a,b,x,y)=0$. If such a strategy exists, then the players win the game using that strategy with probability $1$. Finding games where a winning strategy exists in one model, but does not exist in a more restrictive model, is a common method of exhibiting new separations between the models. It is well-known that $C_{loc}(n,k) \subseteq C_q(n,k) \subseteq C_{qa}(n,k) \subseteq C_{qc}(n,k)$, and all three of these containments are known to be distinct by using non-local games \cite{CHSH69,Slo19,JNVWY20}.

For \textbf{synchronous games}--that is, those satisfying $\lambda(a,b,x,x)=0$ for $a \neq b$, the structure of winning strategies is refined by a theorem of V. Paulsen et al \cite[Theorem~5.5]{PSSTW16}: any winning strategy $(p(a,b|x,y))$ in $C_{qc}(n,k)$ for a synchronous game $\cG=(I,O,\lambda)$ with $|I|=n$ and $|O|=k$ can be given by a tracial state $\tau$ on a unital $C^*$-algebra $\mathcal{A}$ (that is, $\tau$ is a positive linear functional with $\tau(1_{\cA})=1$ and $\tau(XY)=\tau(YX)$ for all $X,Y \in \mathcal{A}$), and projections $E_{a,x} \in \cA$ satisfying $\sum_{a=1}^k E_{a,x}=1_{\cA}$ for all $1 \leq x \leq n$ and $p(a,b|x,y)=\tau(E_{a,x}E_{b,y})$. Moreover, $\tau$ can always be arranged to be faithful. (Such correlations can be viewed as the players having a ``quantum shared function"; indeed, in the case when $p \in C_{loc}(n,k)$ and $p$ is deterministic--that is, $p(a,b|x,y) \in \{0,1\}$ for all $a,b,x,y$--then $p$ does arise from a shared function.) For $loc$ synchronous strategies, one can arrange to have $\cA$ abelian, while for $q$ synchronous strategies, one can arrange to have $\cA$ finite-dimensional \cite{HMPS19}. (In fact, for winning strategies for synchronous games, an extreme point argument allows for $\cA$ to simply be a matrix algebra.) For $t=qa$, one can arrange for $\cA=\cR^{\cU}$, a tracial ultrapower of the hyperfinite $II_1$-factor von Neumann algebra $\cR$ with respect to a free ultrafilter $\cU$ on $\bN$, using the unique trace on $\cR$ \cite{KPS18}. These tracial descriptions have important applications in the theory of operator algebras. One additional advantage of synchronous games is that, because only one player's operators need to be considered, it is possible to define an associated game $*$-algebra $\cA(\cG)$: a universal, unital $*$-algebra whose representations encode the existence of winning strategies for $\cG$ in the various models \cite{HMPS19}.

Despite recent advances in our understanding of how entanglement can be used in non-local games and how the models are distinguished from each other, there is still much work being done on understanding how these separations arise and how entanglement truly varies in the different models. For example, while it is known that the $qa$ and $qc$ models are not the same, due to the significant result that $\text{MIP}^*=\text{RE}$ \cite{JNVWY20}, the ramifications of this result are widespread and still being worked out. Even the known separations of $q$ and $qa$ that arise from synchronous games are very large \cite{KPS18}.

To this end, it is helpful to know what classes of synchronous games are ``universal", in the sense that every synchronous game can be transformed into a game from this class, while preserving representations of the game $*$-algebra, in some sense. The idea behind such equivalences is to preserve the existence of winning strategies in the different models. The most common tool for such a transformation is $*$-equivalence. Two games $\cG_1$ and $\cG_2$ are $*$-equivalent if there are unital $*$-homomorphisms $\cA(\cG_1) \to \cA(\cG_2)$ and $\cA(\cG_2) \to \cA(\cG_1)$ \cite{KPS18,HMPS19}. If synchronous games $\cG_1$ and $\cG_2$ are $*$-equivalent, then for $t \in \{loc,q,qa,qc\}$, $\cG_1$ has a winning $t$-strategy if and only if $\cG_2$ does. For example, synchronous non-local games with $3$ answers are universal \cite{Fr20,H22}: every synchronous game is $*$-equivalent to a synchronous game with $3$ answers. The class of bisynchronous games is universal \cite{PR21}; later, it was shown that this even holds with equal question and answer sets in \cite{H22}. The class of graph homomorphism games $\text{Hom}(G,H)$ between two finite simple undirected graphs, exhibiting some of the known peculiarities of synchronous games (see \cite{MR16,HMPS19}) has been thought to be universal. Such equivalences between games can help simplify the over-arching structure of synchronous games that separate the distinct models of quantum mechanics. Equivalences between classes of games are also helpful for understanding their properties, even if such games are not universal. For example, the graph isomorphism game $\text{Iso}(G,H)$, where $G$ and $H$ have the same number of vertices, cannot be universal, since the existence of the game algebra $\cA(\text{Iso}(G,H))$ automatically forces the existence of a winning strategy in $C_{qc}(n,k)$ \cite{BCEHPSW20}. In contrast, the homomorphism game $\text{Hom}(K_5,K_4)$ does not have this property \cite{HMPS19}, and cannot be equivalent to a graph isomorphism game. Similarly, the synchronous linear binary constraint system game $\text{syncBCS}(A,b)$ cannot be universal \cite{BCEHPSW20}. Nevertheless, every synchronous linear binary constraint system game is $*$-equivalent to a graph isomorphism game \cite{BCEHPSW20,G21}.

In this paper, we prove that three of the so-called graph parameter games are universal. The main reuslt is that every synchronous non-local game is equivalent, in a weak sense, to a $3$-coloring game of an associated graph. We also construct hereditary $*$-equivalences of synchronous games to games involving the independence number and the clique number of other graphs associated with the original game that arise in \cite{AMRSSV19}. As a result of our work, we obtain a quantum version of L. Lov\'{a}sz's reduction theorem of the $k$-coloring problem to the $3$-coloring problem \cite{Lo73}. Another immediate application is that the $qa$ and $qc$ versions of the chromatic number of a graph are distinct, which was not previously known. The analogous statements for independence number and clique number also follow from this paper. Moreover, each of the quantum chromatic number $\chi_t$, quantum independence number $\alpha_t$, and quantum clique number $\omega_t$ are sufficiently rich enough quantities to detect the existence of winning strategies for any synchronous non-local game. In other words, for a synchronous game $\cG=(I,O,\lambda)$ and for any of the four models loc, q, qa and qc, there are two associated graphs $G$ and $H$ with the following properties:
\begin{itemize}
\item $\cG$ has a winning strategy in $C_t(n,k)$ if and only if $\chi_t(G)=3$; and
\item $\cG$ has a winning strategy in $C_t(n,k)$ if and only if $\alpha_t(H)=|I|=\omega_t(\overline{H})$.
\end{itemize}

As an application, any synchronous game $\cG$ with a winning $t$-strategy (where $t \in \{loc,q,qa,qc\}$) can be transformed into a game with a winning $t$-strategy that exhibits perfect zero knowledge for an honest verifier (see Proposition \ref{proposition: perfect zero knowledge} and the discussion thereafter). Such a strategy is one where, if the referee only asks questions where the players might lose, then probabilities for the answers given by the players from a winning strategy do not yield any information to the referee about the strategy used. The only thing that the referee can deduce is that the players possess a winning $t$-strategy for the game. This is helpful if the players want to demonstrate that they can perform a certain protocol (for example, in cryptography), without revealing to the referee what kind of entanglement and measurement operators they used. The concept of verifying statements while exhibiting zero knowledge (often referred to as ``zero knowledge proofs")  originated in \cite{GMR89}; we refer the reader to \cite{GSY19} for more information on perfect zero knowledge in the context of interactive provers.

Our method of transforming a synchronous game $\cG$ into a $3$-coloring game for an associated graph $G_{\lambda}$ involves a graph whose 3-coloring game roughly keeps track of sums of projections. One of our main tools to this end is the $3 \times 3$ rook's graph (see Figure \ref{figure: 3x3 rook's graph}), with two extra vertices that form a ``control" triangle with the $(1,2)$ vertex of the $3 \times 3$ rook's graph. We use representations of $A(Hom(G_{\lambda},K_3))$ and consider the corner given by a product of projections corresponding to colors that are used for the control triangle. This approach of cutting down by a projection first appeared in work of Z. Ji \cite{Ji13}, although the subgraph that we use as the ``gadget" here is simplified, and as a result, the graph obtained is much smaller. We rely heavily on the structure of $3 \times 3$ quantum permutations. A \textbf{quantum permutation} over a unital $*$-algebra $\cA$ is an $n \times n$ matrix $P=(p_{ij})$ with entries in $\cA$ such that each $p_{ij}$ is a self-adjoint idempotent, $p_{ij}p_{ik}=0$ for $j \neq k$ and $p_{ij}p_{kj}=0$ for $i \neq k$, and $\sum_{i=1}^n p_{ij}=\sum_{j=1}^n p_{ij}=1$. The entries of $3 \times 3$ quantum permutations automatically commute, even in the $*$-algebra setting (see \ref{proposition: 3x3 quantum permutation commutes}). While such a coloring game construction could be done in the hereditary game $*$-algebra setting for transforming synchronous games into $k$-coloring games for $k>3$, this becomes more complicated as entries of a $k \times k$ quantum permutation need not commute as soon as $k \geq 4$. Moreover, in $3$-coloring games, projections corresponding to adjacent vertices commute (see Proposition \ref{proposition: key three projection proposition, orthogonal version}), and this feature also arises from $3\times 3$ quantum permutations having pairwise commuting entries. In light of these technicalities, we restrict ourselves to the $3$-coloring setting here.

Our method for transforming $\cG$ into a game involving the independence number involves a construction from A. Atserias et al \cite{AMRSSV19} that uses a canonical graph associated to the game $\cG$, with vertex set $O \times I$ and edges given by the disallowed $4$-tuples from the rule function $\lambda$. While the equivalence for $3$-coloring games involves the game $*$-algebra, transforming $\cG$ into an independence number game requires a positivity argument, and can only be done at the level of the hereditary game $*$-algebra. This restriction is necessary as soon as $|I| \geq 4$ (see Remark \ref{remark: alg independence infinity} for more details).

It would be interesting to determine whether the game equivalences in this paper can be extended to approximate strategies. For example, suppose that $\cG$ is a synchronous non-local game, and that $G_{\lambda}$ is the associated graph in Theorem \ref{theorem: coloring game weakly subequivalent to original game} such that $\cG$ is weakly $*$-equivalent to the $3$-coloring game for $G_{\lambda}$. If the players can win the game $\cG$ with probability close to $1$, it is not clear whether their strategy can be transformed into a strategy that wins the $3$-coloring game for $G_{\lambda}$ with probability close to $1$ (and vice versa). We leave this problem open for future work.

The remainder of the paper is organized as follows. In Section \ref{section: preliminaries}, we recall the basics of the game $*$-algebra associated with a synchronous game, and introduce weak $*$-subequivalence and weak $*$-equivalence of games. In Section \ref{section: 3x3 rook's graph}, we prove that the game algebra of the $3$-coloring game of the $3 \times 3$ rook's graph is abelian (Corollary \ref{corollary: 3 coloring of 3x3 rook is abelian}). In Section \ref{section: equivalence}, we show that, to each synchronous game $\cG=(I,O,\lambda)$ with at least two inputs and $3$ outputs, there is an associated graph $G_{\lambda}$ so that the $3$-coloring game for $G_{\lambda}$ is weakly $*$-subequivalent to $\cG$ (Theorem \ref{theorem: coloring game weakly subequivalent to original game}). Conversely, $\cG$ is $*$-subequivalent to the $3$-coloring game for $G_{\lambda}$, making the two games weakly $*$-equivalent (Theorem \ref{theorem: game subequivalent to coloring}). In particular, for any $t$, $\cG$ has a winning $t$-strategy if if the $t$-coloring number of $G_{\lambda}$ is equal to $3$. We also discuss the application to perfect zero knowledge in this section. Lastly, in Section \ref{section: independence and clique}, we prove that every synchronous game $\cG=(I,O,\lambda)$ is hereditarily $*$-equivalent to an independence number game associated with the ``graph of the game" $X(\cG)$ arising from Atserias et al \cite{AMRSSV19}. In particular, for each $t \leq hered$, we prove that $\cG$ has a winning $t$-strategy if and only if $\alpha_t(X(\cG))=|I|$. It also follows that $\cG$ has a winning $t$-strategy if and only if $\omega_t(\overline{X(\cG)})=|I|$.

\section{Preliminaries}\label{section: preliminaries}

A \textbf{synchronous non-local game} $\cG$ is a triple $(I,O,\lambda)$, where $I$ is the finite question set, $O$ finite is the answer set, and $\lambda:O \times O \times I \times I \to \{0,1\}$ is the rule function, satisfying $\lambda(a,b,x,x)=0$ for all $x \in I$ and $a \neq b$ in $O$. Our main example in this paper is the \textbf{graph homomorphism game} $\text{Hom}(G,H)$, where $G$ and $H$ are finite, simple undirected graphs. The question set is $I=V(G)$; the answer set is $O=V(H)$; and the rules are
\begin{itemize}
\item $\lambda(a,b,x,x)=\delta_{ab}$ (synchronicity), and
\item $\lambda(a,b,x,y)=0$ if $(x,y)$ is an edge in $G$ but $(a,b)$ is not an edge in $H$. (adjacency)
\end{itemize}

The possible strategies for non-local games are given by quantum bipartite correlations, which are probability densities in a finite-input, finite-output system corresponding to two space-like separated players that may share entanglement. In general, given inputs $x$ and $y$ for Alice and Bob, respectively, there is an associated joint probability $p(a,b|x,y)$ that Alice and Bob output $a$ and $b$, respectively. This probability density arises from a strategy upon which the players agree beforehand; the players are not allowed to communicate once the game begins. We call the probability density $(p(a,b|x,y))$ a \textbf{winning strategy} for $\cG$ if, whenever $\lambda(a,b,x,y)=0$, we have $p(a,b|x,y)=0$.

There are essentially four models that are often analyzed for synchronous non-local games: the classical (or local) model, the quantum model, the quantum approximate model, and the quantum commuting model. These models are often abbreviated as loc, q, qa and qc, respectively. We refer the reader to \cite{J+,Fr11,JNVWY20,Oz13} and the references therein for more information on these correlations; in this paper, we will only focus on \textbf{synchronous correlations}; these are the correlations that satisfy $p(a,b|x,x)=0$ whenever $a \neq b$.

The work of \cite{PSSTW16}, along with \cite{KPS18} for the quantum approximate model, gives us the following criteria for synchronous correlations: any correlation $p(a,b|x,y)$ that is synchronous and belongs to the qc model can be written as $p(a,b|x,y)=\tau(E_{a,x}E_{b,y})$, where $\tau$ is a tracial state on a unital $C^*$-algebra $\cA$, and the elements $\{E_{a,x}\}_{a \in O, \, x \in I}$ are projections (self-adjoint idempotents) satisfying $\sum_{a \in O} E_{a,x}=1_{\cA}$ for each $x \in I$. The set of all such correlations is denoted by $C_{qc}^s(n,k)$, where $n=|I|$ and $k=|O|$.

The correlations in the qa model that are synchronous have a similar characterization, except that $\cA$ can be arranged to be the ultrapower $\cR^{\cU}$ of the hyperfinite $II_1$ factor (see \cite{BO08} for more on $\cR$). The set of such correlations is denoted by $C_{qa}^s(n,k)$.

For the q model, the algebra $\cA$ can be arranged to be finite-dimensional (in particular, a direct sum of matrix algebras); the set of such correlations is denoted by $C_q^s(n,k)$. For the loc model, the algebra $\cA$ can be arranged to be abelian (in particular, since it is already unital, one can arrange to have $\cA=C(X)$ for some compact Hausdorff space $X$). The set of all local synchronous correlations on $n$ inputs and $k$ outputs is denoted by $C_{loc}^s(n,k)$.

For $t \in \{loc,q,qa,qc\}$, a synchronous non-local game $\cG=(I,O,\lambda)$ with $|I|=n$ and $|O|=k$ is said to have a \textbf{winning $t$-strategy} if there exists a correlation $p=(p(a,b|x,y)) \in C_t^s(n,k)$ that is a winning strategy for $\cG$. Applying this terminology to the graph homomorphism game yields quantum versions of the chromatic number, independence number, and clique number. Indeed, for graphs $G$ and $H$, we write $G \to_t H$ if $\text{Hom}(G,H)$ has a winning $t$-strategy. Then, with $K_n$ denoting the complete graph on $n$ vertices and $\overline{G}$ denoting the graph complement of a graph $G$, we define
\begin{align*}
\chi_t(G)&=\min \{ c: G \to_t K_c\} \\
\alpha_t(G)&=\max \{ m: K_m \to_t \overline{G}\} \\
\omega_t(G)&=\max \{ m: K_m \to_t G\}
\end{align*}
as the $t$-chromatic number, the $t$-independence number, and the $t$-clique number, respectively, of $G$. It is well known that, if $t=loc$, then these quantities are precisely the usual chromatic, independence and clique numbers of $G$.

Associated to any synchronous non-local game $\cG=(I,O,\lambda)$ is the \textbf{game $*$-algebra} of $\cG$, denoted $\cA(\cG)$, which is the universal unital $*$-algebra generated by elements $e_{a,x}$, $a \in O$, $x \in I$, satisfying
\begin{itemize}
\item $e_{a,x}^2=e_{a,x}=e_{a,x}^*$;
\item $\sum_{a \in O} e_{a,x}=1$ for all $x \in I$;
\item $e_{a,x}e_{b,y}=0$ whenever $\lambda(a,b,x,y)=0$.
\end{itemize}

Equivalently, one takes the free algebra $\bC[\mathbb{F}(|I|,|O|)]$ generated by $|I|$ free unitaries, each of order $|O|$, and takes the quotient by the $*$-closed, two-sided ideal $\cI(\cG)$ generated by the elements of the form
\begin{itemize}
\item $e_{a,x}^2-e_{a,x}$, $a \in O, \, x \in I$;
\item $e_{a,x}^*-e_{a,x}$, $a \in O, \, x \in I$;
\item $1-\sum_{a \in O} e_{a,x}$, $x \in I$;
\item $e_{a,x}e_{b,y}$, $(a,b,x,y) \in \lambda^{-1}(\{0\})$.
\end{itemize}
(See \cite{HMPS19} for more information on the game algebra of $\cG$.) In a unital $*$-algebra $\cA$, we call an element $p \in \cA$ \textbf{positive}, and write $p \geq 0$, if $p=x_1^*x_1+\cdots+x_m^*x_m$ for some elements $x_1,...,x_m \in \cA$. We similarly write $p \leq q$ if $q-p$ is positive in $\cA$. With these notions in mind, associated to the game $\cG$, the \textbf{hereditary closure} of the ideal $\cI(\cG)$, denoted $\cI^h(\cG)$, is the smallest two-sided $*$-closed ideal in $\bC[\mathbb{F}(|I|,|O|)]$ that is hereditary; that is, if $0 \leq f \leq g$ and $g \in \cI^h(\cG)$, then $f \in \cI^h(\cG)$ as well. Then the \textbf{hereditary game $*$-algebra} is $\cA^h(\cG)=\mathbb{C}[\mathbb{F}(|I|,|O|)]/\cI^h(\cG)$. Note that, if $\cA^h(\cG) \neq (0)$, then whenever $x_1,...,x_m \in \cA^h(\cG)$ satisfy $x_1^*x_1+ \cdots + x_m^*x_m=0$, then $x_1=x_2=\cdots=x_m=0$. The set of positive elements in a hereditary $*$-algebra form a cone. Thus, the hereditary game algebra is a useful object when one is considering equivalences between synchronous games that require positivity arguments, as such arguments often cannot be used at the level of the game algebra.

Winning strategies for $\cG$ arise precisely from unital $*$-homomorphisms of $\cA(\cG)$. For convenience, given a unital $*$-algebra $\cB$, we will write $\cA(\cG) \to \cB$ if there is a unital $*$-homomorphism $\pi:\cA(\cG) \to \cB$. Then the following hold:
\begin{itemize}
\item $\cG$ has a winning loc strategy if and only if $\cA(\cG) \to \bC$.
\item $\cG$ has a winning $q$ strategy if and only if $\cA(\cG) \to M_d(\bC)$ for some $d \in \bN$.
\item $\cG$ has a winning $qa$ strategy if and only if $\cA(\cG) \to \cR^{\cU}$.
\item $\cG$ has a winning $qc$ strategy if and only if $\cA(\cG) \to (\cA,\tau)$ for some unital $C^*$-algebra $\cA$ with a (faithful) tracial state $\tau$.
\end{itemize}
More information on these statements can be found in \cite{PSSTW16,KPS18,HMPS19}. Based on these correspondences, three more general models have been considered for synchronous non-local games. Such a game $\cG$ is said to have a \textbf{winning $C^*$-strategy} if $\cA(\cG) \to \cB$ for some unital $C^*$-algebra $\cB$. We say that $\cG$ has a \textbf{winning hereditary strategy} (sometimes abbreviated ``hered") if $\cA(\cG) \to \cC$ for some hereditary unital $*$-algebra $\cC$. Equivalently, $\cG$ has a winning hereditary strategy if $\cA^h(\cG) \neq (0)$. Finally, we simply say that $\cA(\cG)$ has a \textbf{winning algebraic strategy} (often abbreviated ``alg") if the game algebra $\cA(\cG)$ is non-trivial; i.e., $\cA(\cG) \neq (0)$.

For convenience, we assign an ordering to the set of models described by \[loc \leq q \leq qa \leq qc \leq C^* \leq hered \leq alg.\] In particular, if $t_1,t_2 \in \{loc,q,qa,qc,C^*,hered,alg\}$ and $t_1 \leq t_2$, and if $\cG$ has a winning $t_1$ strategy, then it also has a winning $t_2$ strategy. In particular, in the case of the graph parameter games, if $t_1 \leq t_2$, then $\chi_{t_2}(G) \leq \chi_{t_1}(G)$, while $\alpha_{t_1}(G) \leq \alpha_{t_2}(G)$ and $\omega_{t_1}(G) \leq \omega_{t_2}(G)$.
 
One important tool for understanding classes of synchronous games is $*$-equivalence. There are a few kinds of equivalence in the literature, the first of which is $*$-equivalence. For our purposes it is also helpful to define $*$-subequivalence, which is new.

\begin{definition}
Let $\cG_1$ and $\cG_2$ be synchronous non-local games. We will say that $\cG_1$ is \textbf{$*$-subequivalent} to $\cG_2$ if there is a unital $*$-homomorphism $\cA(\cG_2) \to \cA(\cG_1)$. We will call $\cG_1$ and $\cG_2$ \textbf{$*$-equivalent} if $\cG_1$ is $*$-subequivalent to $\cG_2$ and $\cG_2$ is $*$-subequivalent to $\cG_1$.
\end{definition}

We note that the notion of $*$-equivalence here agrees with $*$-equivalence as it has appeared in the literature. These maps are only assumed to be unital $*$-homomorphisms, and may be neither injective nor surjective. Nevertheless, if $\cG_1$ and $\cG_2$ are $*$-equivalent, and if $t \in \{loc,q,qa,qc,C^*,hered,alg\}$, then $\cG_1$ has a winning $t$-strategy if and only if $\cG_2$ does. (See, for example, \cite{KPS18}.) On the other hand, if $\cG_1$ is only $*$-subequivalent to $\cG_2$, then whenever $\cG_1$ has a winning $t$-strategy, so does $\cG_2$.

A stronger notion of $*$-equivalence is when the game algebras are actually $*$-isomorphic. Several examples of this phenomenon are exhibited in \cite{H22}. Another type of equivalence (which first appeared in \cite{BCEHPSW20}) is hereditary $*$-equivalence. We say that two synchronous games $\cG_1$ and $\cG_2$ are \textbf{hereditarily $*$-equivalent} if there are unital $*$-homomorphisms $\pi:\cA^h(\cG_1) \to \cA^h(\cG_2)$ and $\rho:\cA^h(\cG_2) \to \cA^h(\cG_1)$.  The same type of equivalence of winning strategies holds for hereditarily $*$-equivalent games $\cG_1$ and $\cG_2$, except possibly for $t=alg$.

The last kind of equivalence that we will use appears to be new, although arguments in \cite{Ji13} essentially relied on such an equivalence.

\begin{definition}
Let $\cG_1$ and $\cG_2$ be synchronous games. We say that $\cG_1$ is \textbf{weakly $*$-subequivalent} to $\cG_2$ if, for each unital $*$-homomorphism $\varphi:\cA(\cG_1) \to \cD$ into a non-zero unital $*$-algebra $\cD$, there exists a non-zero subalgebra $\cC$ of $\cD$ with unit $1_{\cC}$ and a unital $*$-homomorphism $\rho:\cA(\cG_2) \to \cC$.

We will call the games $\cG_1$ and $\cG_2$ \textbf{weakly $*$-equivalent} if $\cG_1$ is weakly $*$-subequivalent to $\cG_2$ and $\cG_2$ is weakly $*$-subequivalent to $\cG_1$.
\end{definition}

We observe that, if $\cG_1$ is weakly $*$-subequivalent to $\cG_2$, then winning strategies for $\cG_1$ can be transformed into winning strategies for $\cG_2$.

\begin{proposition}
Suppose that $\cG_1$ and $\cG_2$ are synchronous games, and that $\cG_1$ is weakly $*$-subequivalent to $\cG_2$. If $t \in \{loc,q,qa,qc,C^*,hered,alg\}$ and $\cG_1$ has a winning $t$-strategy, then so does $\cG_2$.
\end{proposition}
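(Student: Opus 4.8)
The plan is to argue one model $t$ at a time, using the correspondence recalled above that translates ``$\cG$ has a winning $t$-strategy'' into the existence of a unital $*$-homomorphism from $\cA(\cG)$ into a non-zero unital $*$-algebra of a type appropriate to $t$: into $\bC$ for $t = loc$; into some full matrix algebra $M_d(\bC)$ for $t = q$; into $\cR^{\cU}$ for $t = qa$; into a unital $C^*$-algebra equipped with a faithful tracial state for $t = qc$; into a non-zero unital $C^*$-algebra for $t = C^*$; into a non-zero hereditary unital $*$-algebra for $t = hered$; and, tautologically, into $\cA(\cG)$ itself, which is non-zero precisely when $\cG$ has a winning $alg$-strategy. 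So, assuming $\cG_1$ has a winning $t$-strategy, I would first fix such a unital $*$-homomorphism $\varphi \colon \cA(\cG_1) \to \cD$ (taking $\varphi = \id_{\cA(\cG_1)}$ when $t = alg$), note that $\cD \neq (0)$, and apply the definition of weak $*$-subequivalence of $\cG_1$ to $\cG_2$ to $\varphi$. This produces a non-zero unital $*$-subalgebra $\cC \subseteq \cD$, with unit $1_{\cC}$, together with a unital $*$-homomorphism $\rho \colon \cA(\cG_2) \to \cC$. What remains is to verify that $\cC$ -- possibly replaced by its norm closure, or by the corner $1_{\cC}\cD 1_{\cC}$ -- is again an algebra of the type appropriate to $t$, so that $\rho$, followed by the relevant inclusion or isomorphism, certifies a winning $t$-strategy for $\cG_2$.

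For every model other than $qa$ this last step is purely formal. For $t = alg$: $\rho$ is unital, so $1_{\cA(\cG_2)} \mapsto 1_{\cC} \neq 0$, whence $\cA(\cG_2) \neq (0)$. For $t = hered$: the defining property of a hereditary unital $*$-algebra -- that $x_1^*x_1 + \cdots + x_m^*x_m = 0$ forces $x_1 = \cdots = x_m = 0$ -- is inherited by every $*$-subalgebra, so $\cC$ is itself a non-zero hereditary unital $*$-algebra. For $t \in \{C^*, qc\}$: I would pass to the norm closure $\overline{\cC}$ taken inside $\cD$, a non-zero unital $C^*$-algebra whose unit is the projection $1_{\cC}$; in the $qc$ case, $x \mapsto \tau(x)/\tau(1_{\cC})$ is a faithful tracial state on $\overline{\cC}$, where $\tau(1_{\cC}) > 0$ since $\tau$ is faithful and $1_{\cC} \neq 0$. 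For $t \in \{loc, q\}$: the subalgebra $\cC$ is finite-dimensional, hence $*$-isomorphic to a finite direct sum of full matrix algebras, and any such algebra embeds unitally into a single full matrix algebra (into $\bC$ itself when $\cD = \bC$); composing $\rho$ with this embedding gives the certificate.

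The only case carrying genuine content is $t = qa$, and I expect the corner computation to be the main obstacle. Here $\cD = \cR^{\cU}$ and, writing $p := 1_{\cC}$, we have $\cC \subseteq p\,\cR^{\cU}p$ because $p$ is the unit of $\cC$. The point to establish is that a non-zero corner of $\cR^{\cU}$ is itself $*$-isomorphic to $\cR^{\cU}$. To see this, lift $p$ to a sequence $(p_k)$ of projections in $\cR$ with $\lim_{k \to \cU}\tau_{\cR}(p_k) = \tau(p) > 0$, so that $p_k \neq 0$ for $\cU$-almost every $k$; a routine computation identifies $p\,\cR^{\cU}p$ with the tracial ultraproduct $\prod_{k \to \cU} p_k \cR p_k$; and each $p_k \cR p_k$, being a non-zero corner of the hyperfinite $II_1$ factor, is isomorphic to $\cR$ by uniqueness of the hyperfinite $II_1$ factor. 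Hence $p\,\cR^{\cU}p \cong \prod_{k \to \cU}\cR = \cR^{\cU}$, by an isomorphism carrying the unit $p$ of the corner to $1_{\cR^{\cU}}$. Composing $\cA(\cG_2) \xrightarrow{\rho} \cC \hookrightarrow p\,\cR^{\cU}p \cong \cR^{\cU}$ yields a unital $*$-homomorphism, so $\cG_2$ has a winning $qa$-strategy. Apart from this identification of the corner, the whole argument is a short sequence of completion and compression steps, uniform across the models.
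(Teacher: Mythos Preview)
Your proposal is correct and follows essentially the same case-by-case approach as the paper's proof. The only differences are cosmetic: you take norm closures for $t \in \{C^*, qc\}$ where the paper is tacit, and you spell out the identification $p\,\cR^{\cU}p \cong \cR^{\cU}$ via the ultraproduct of the corners $p_k\cR p_k$, whereas the paper simply cites this isomorphism from the literature.
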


\begin{proof}
If $\cA(\cG_1)=\cA(\cG_2)=(0)$, then there is nothing to prove, so we assume that $\cA(\cG_1) \neq (0)$ and that $\pi:\cA(\cG_1) \to \cD$ is a unital $*$-homomorphism, where $\cD \neq (0)$. By assumption, there is a non-zero subalgebra $\cC$ of $\cD$ with unit $1_{\cC}$, and a unital $*$-homomorphism $\rho:\cA(\cG_2) \to \cC$. Most of the cases for the different choices of $t \in \{loc,q,qa,qc,C^*,hered,alg\}$ will amount to similar proofs, though some have subtleties.

If $t=loc$, then we can arrange to have $\cD=\bC$. But the only non-zero subalgebra $\cC$ of $\cD$ with unit is $\bC$, so $\cC=\bC$. Hence, the homomorphism $\rho:\cA(\cG_2) \to \cC=\bC$ shows that $\cG_2$ has a winning loc strategy.

If $t=q$, then we can arrange to have $\cD$ to be a finite-dimensional $C^*$-algebra. Then clearly the non-zero subalgebra $\cC$ is also finite-dimensional, so $\cG_2$ has a winning $q$-strategy.

If $t=qa$, then we can arrange to have $\cD \subseteq \cR^{\cU}$, where $\cR^{\cU}$ is a tracial ultrapower of the hyperfinite $II_1$ factor $\cR$. Then $\cC$ is a subalgebra of $\cR^{\cU}$, with possibly a different unit $p=1_{\mathcal{C}}$. But then $\mathcal{C}$ is a subalgebra of $p\mathcal{R}^{\cU}p$, and since $p$ is non-zero, the corner algebra $p\mathcal{R}^{\mathcal{U}}p$ is isomorphic to $\mathcal{R}^{\cU}$ via some $*$-isomorphism $\theta$ \cite{B11}. Thus, $\mathcal{A}(\cG_2)$ has a unital $*$-homomorphism into the image of $\theta(\mathcal{C}) \subseteq \cR^{\cU}$, with this inclusion being unital. Thus, $\cG_2$ has a winning $qa$-strategy.

If $t=qc$, then $\cD$ can be arranged to be a unital $C^*$-algebra with faithful tracial state $\tau$. Choose a non-zero subalgebra $\cC$ with unit $1_{\cC}$ such that $\cA(\cG_1) \to \cC$. As $\tau$ is faithful, $\tau(1_{\cC})>0$, so the state $\tau_{\cC}(x)=\frac{\tau(x)}{\tau(1_{\cC})}$ is a faithful tracial state on $\cC$. This shows that $\cG_1$ has a winning $qc$ strategy.

The case of $t=C^*$ is trivial. Suppose $\cD$ is a hereditary unital $*$-algebra and $\cC$ is a non-zero subalgebra of $\cD$. If $x_1,...,x_n \in \mathcal{C}$ and $x_1^*x_1+\cdots+x_n^*x_n=0$, then since $\mathcal{D}$ is hereditary, $x_1=...=x_n=0$, so $\mathcal{C}$ is hereditary as well. Hence, the case $t=hered$ follows as well. The case $t=alg$ is immediate from special case of the definition of weak $*$-subequivalence for the identity map $\id:\cA(\cG_1) \to \cA(\cG_1) \neq (0)$.

\end{proof}

\section{Three-coloring game for $K_3 \times K_3$}\label{section: 3x3 rook's graph}

In this section, we will show that the three-coloring game for the $3 \times 3$ rook's graph has an abelian game algebra. This game algebra will be our key tool in the reduction theorem in the next section. The $3 \times 3$ rook's graph encodes the moves that a rook can make on a $3 \times 3$ chess board. Two vertices are connected by an edge if a rook can move between the two vertices in a single move on a chess board; see Figure \ref{figure: 3x3 rook's graph}. This is also the Cartesian graph product $K_3 \times K_3$ of the complete graph on $3$ vertices with itself. 

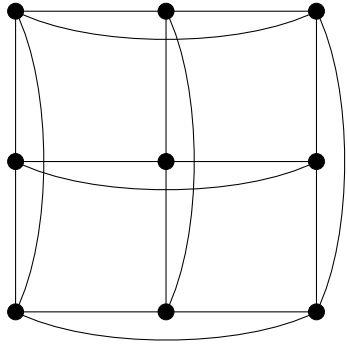
\begin{figure}[!ht]
\begin{tikzpicture}
\tikzset{vertex/.style = {shape=circle,draw,minimum size=1.5em}}
\tikzset{edge/.style = {->,> = latex'}}
\filldraw (0,4) circle (3pt);
\filldraw (0,2) circle (3pt);
\filldraw (0,0) circle (3pt);
\filldraw (-2,4) circle (3pt);
\filldraw (-2,2) circle (3pt);
\filldraw (-2,0) circle (3pt);
\filldraw (2,4) circle (3pt);
\filldraw (2,2) circle (3pt);
\filldraw (2,0) circle (3pt);
\draw (-2,4) -- (2,4);
\draw (-2,2) -- (2,2);
\draw (-2,0) -- (2,0);
\draw (-2,4) -- (-2,0);
\draw (0,4) -- (0,0);
\draw (2,4) -- (2,0);
\draw (-2,4) .. controls (-1,3.5) and (1,3.5) .. (2,4);
\draw (-2,2) .. controls (-1,1.5) and (1,1.5) .. (2,2);
\draw (-2,0) .. controls (-1,-0.5) and (1,-0.5) .. (2,0);
\draw (-2,4) .. controls (-1.5,3) and (-1.5,1) .. (-2,0);
\draw (0,4) .. controls (0.5,3) and (0.5,1) .. (0,0);
\draw (2,4) .. controls (2.5,3) and (2.5,1) .. (2,0);
\end{tikzpicture}
\caption{\small $K_3 \times K_3$, the $3 \times 3$ rook's graph.}
\label{figure: 3x3 rook's graph}
\end{figure}

The most useful representation of the graph is given with vertex set $\{ (i,j): 1 \leq i,j \leq 3\}$, with the adjacency relations $(i,j) \sim (k,\ell)$ if and only if exactly one of $i=k$ or $j=\ell$ holds. Our approach will also recover a result of Z. Ji \cite[Lemma~4]{Ji13}, which states that, in the three coloring game for a triangular prism, projections corresponding to non-adjacent vertices must commute. Our final proof of this result, though, is simpler than the one presented in \cite{Ji13}.

First, we collect a few results that are well-known, but make the proofs of the main result in this section simpler. The first is well-known, and the proof below essentially appeared in \cite{BES94}; however, one step is simplified by working in a $*$-algebra.

\begin{proposition}
\label{proposition: three projections summing to zero}
Suppose that $p_1,p_2,p_3$ are self-adjoint idempotents in a unital $*$-algebra $\mathcal{A}$. If $p_1+p_2+p_3=0$, then $p_1=p_2=p_3=0$.
\end{proposition}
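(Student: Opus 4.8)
The plan is to exploit the fact that $p_1, p_2, p_3$ are self-adjoint idempotents summing to $0$, and extract numerical relations by multiplying the identity $p_1 + p_2 + p_3 = 0$ on the left and right by individual $p_i$. First I would multiply $p_1 + p_2 + p_3 = 0$ on the left by $p_1$ to get $p_1 + p_1 p_2 + p_1 p_3 = 0$, and similarly on the right by $p_1$ to get $p_1 + p_2 p_1 + p_3 p_1 = 0$. Subtracting these gives $p_1 p_2 + p_1 p_3 = p_2 p_1 + p_3 p_1$. Adding them instead and using $p_1^2 = p_1$ yields $2p_1 + p_1 p_2 + p_2 p_1 + p_1 p_3 + p_3 p_1 = 0$.

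The key trick, where the $*$-algebra structure simplifies matters, is to consider the element $q := p_2 + p_3 = -p_1$. Since $q = -p_1$ and $p_1$ is a self-adjoint idempotent, $q$ is self-adjoint with $q^2 = p_1^2 = p_1 = -q$, so $q^2 + q = 0$, i.e. $q(q+1) = 0$. On the other hand, expanding $q^2 = (p_2+p_3)^2 = p_2 + p_3 + p_2 p_3 + p_3 p_2 = q + (p_2 p_3 + p_3 p_2)$, so $q^2 + q = 2q + p_2 p_3 + p_3 p_2$; combined with $q^2 + q = 0$ this gives $p_2 p_3 + p_3 p_2 = -2q = 2p_1$. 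Now I would sandwich: compute $p_2 (p_2 p_3 + p_3 p_2) p_2 = p_2 p_3 p_2 + p_2 p_3 p_2 = 2 p_2 p_3 p_2$, while $p_2 (2 p_1) p_2 = 2 p_2 p_1 p_2$. Here is where positivity enters: $p_2 p_3 p_2 = (p_3 p_2)^* (p_3 p_2) \geq 0$ and likewise $p_2 p_1 p_2 \geq 0$. Using $p_2 + p_3 = -p_1$ we also have $p_2 p_1 p_2 = -p_2(p_2 + p_3)p_2 = -(p_2 + p_2 p_3 p_2) = -p_2 - p_2 p_3 p_2$. Therefore $p_2 p_3 p_2 = p_2 p_1 p_2 = -p_2 - p_2 p_3 p_2$, giving $2 p_2 p_3 p_2 = -p_2$, so $p_2 = -2 p_2 p_3 p_2 \leq 0$. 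But $p_2 = p_2^* p_2 \geq 0$ as well, and in a $*$-algebra an element that is both $\geq 0$ and $\leq 0$ need not immediately vanish — so the final step is to note that $p_2 \geq 0$ and $-p_2 = 2 p_2 p_3 p_2 \geq 0$ forces $p_2 = 0$ by the standard argument: write $p_2 = \sum x_i^* x_i$ and $-p_2 = \sum y_j^* y_j$; then $\sum x_i^* x_i + \sum y_j^* y_j = 0$, and since a sum of elements of the form $z^* z$ equals zero only if each is zero (this is a purely formal consequence in the free $*$-algebra, hence in any $*$-algebra quotient, of how positivity is defined here — or one can verify it directly from $p_2 = p_2^3$-type relations), each $x_i = 0$, whence $p_2 = 0$. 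By symmetry $p_1 = p_3 = 0$ as well.

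The main obstacle I anticipate is the very last implication: in a general unital $*$-algebra, "$a \geq 0$ and $-a \geq 0$" does not automatically give $a = 0$ without some hypothesis, so I need to be careful about exactly what tools are available. The cleanest route avoiding this is probably to get a relation like $p_2 = (\text{negative of a sum of } x^*x)$ AND $p_2 = p_2^* p_2$ and then add them to land in a situation where I genuinely have $\sum_i x_i^* x_i = 0$ with no sign ambiguity — for instance, from $2p_2 p_3 p_2 = -p_2$ I get $p_2 + 2 p_2 p_3 p_2 = 0$, i.e. $p_2^* p_2 + \sqrt{2}\,(\sqrt{2} p_3 p_2)^*(\sqrt{2} p_3 p_2)/\,2 \cdot 2$... more precisely $p_2 + (\sqrt 2 p_3 p_2)^*(\sqrt 2 p_3 p_2) = 0$ after rescaling, and a sum of two elements of the form $z^* z$ vanishing forces both to vanish (this is the one genuinely structural fact about $*$-algebras I would invoke, and it is standard). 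From $p_2 = 0$, symmetry finishes the proof. I expect the write-up to be short once the sandwiching identity $p_2 p_3 p_2 = p_2 p_1 p_2$ and the consequence $p_2 + 2 p_2 p_3 p_2 = 0$ are in hand.
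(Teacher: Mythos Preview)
Your derivation up to the identity $p_2 + 2p_2p_3p_2 = 0$ is correct, but the final step is a genuine gap. You invoke the claim that if $\sum_i z_i^*z_i = 0$ in a unital $*$-algebra then each $z_i = 0$, calling it ``the one genuinely structural fact about $*$-algebras I would invoke, and it is standard.'' It is not standard, and it is false in general: this implication is precisely the \emph{hereditary} property, which the proposition does not assume. For a concrete counterexample, take $\mathcal{A} = \mathbb{C}[x]/(x^2+1)$ with $x^* = x$; then $1^*\cdot 1 + x^*x = 1 + x^2 = 0$ while $1 \neq 0$. So writing $p_2 + 2p_2p_3p_2 = p_2^*p_2 + (\sqrt{2}\,p_3p_2)^*(\sqrt{2}\,p_3p_2) = 0$ does not by itself force $p_2 = 0$. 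You already flagged this as the main obstacle, and indeed it is one you have not overcome.

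The paper's proof avoids positivity entirely. From $p_1 + p_2 = -p_3$ it squares and adds to get $2p_1 + p_1p_2 + p_2p_1 + 2p_2 = 0$; left-multiplying by $p_1$ gives $2p_1 + 3p_1p_2 + p_1p_2p_1 = 0$, and comparing with the adjoint of this equation yields $p_1p_2 = p_2p_1$. Commutativity then gives $p_1 + p_2 + p_1p_2 = 0$, and multiplying by $p_1p_2$ forces $p_1p_2 = 0$, hence $p_1 + p_2 = 0$, hence $p_1 = p_2 = 0$. The only use of the $*$-structure is the single adjoint step; no cone of positive elements is needed. If you want to salvage your route, one can in fact push the identities $p_i + 2p_ip_jp_i = 0$ further by purely algebraic means (for instance, setting $e = -2p_2p_3$, checking $e^2 = e$, and comparing two expressions for $ee^* + e^*e$ yields $20p_1 = 2p_1$), but as written your argument does not close.
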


\begin{proof}
Rearranging yields 
\begin{equation}
p_1+p_2=-p_3. \label{equation: two projections summing to negative}
\end{equation}
Squaring both sides and using the fact that $p_j^2=p_j$ for each $j$ gives 
\begin{equation}
p_1+p_1p_2+p_2p_1+p_2=p_3. \label{equation: square of two projections summing to negative}
\end{equation} 
Adding (\ref{equation: two projections summing to negative}) and (\ref{equation: square of two projections summing to negative}) gives
\begin{equation}
2p_1+p_1p_2+p_2p_1+2p_2=0. \label{equation: from three idempotents summing to zero}
\end{equation}
Pre-multiplying (\ref{equation: from three idempotents summing to zero}) by $p_1$ yields $2p_1+3p_1p_2+p_1p_2p_1=0$. Taking adjoints, we obtain $2p_1+3p_2p_1+p_1p_2p_1=0$. It follows that $p_1p_2=p_2p_1$. Using equation (\ref{equation: from three idempotents summing to zero}) we obtain $p_1+p_2+p_1p_2=0$. Post-multiplying by $p_1p_2$ and using commutativity, one arrives at the equation $p_1p_2=0$. Then $p_1+p_2=0$, and post-multiplying by $p_2$ gives $p_2=0$. Similarly, $p_1=0$, so that $p_3=0$ as well.
\end{proof}

\begin{proposition}
\label{proposition: three projections summing to 1}
If $p_1,p_2,p_3$ are self-adjoint idempotents in a unital $*$-algebra $\cA$ with $p_1+p_2+p_3=1$, then $p_ip_j=0$ for all $i \neq j$.
\end{proposition}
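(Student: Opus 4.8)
The plan is to exploit the fact that $p_1+p_2 = 1-p_3$ is itself a self-adjoint idempotent, which will pin down the anticommutator $p_1p_2+p_2p_1$ exactly as in the proof of Proposition~\ref{proposition: three projections summing to zero}. Concretely, I would start from $p_1+p_2 = 1-p_3$, square both sides, and use $(1-p_3)^2 = 1-p_3$ together with $p_1^2=p_1$ and $p_2^2=p_2$; after cancelling the common terms this collapses to $p_1p_2+p_2p_1=0$.

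Next I would convert this anticommutation relation into genuine commutation, and then into vanishing. Pre-multiplying $p_1p_2+p_2p_1=0$ by $p_1$ gives $p_1p_2+p_1p_2p_1=0$, while post-multiplying by $p_1$ gives $p_1p_2p_1+p_2p_1=0$; subtracting the two yields $p_1p_2=p_2p_1$. Feeding this back into $p_1p_2+p_2p_1=0$ gives $2p_1p_2=0$, hence $p_1p_2=0$ since the algebra is over $\bC$, and taking adjoints gives $p_2p_1=0$ as well. The roles of $p_1,p_2,p_3$ are symmetric — each pair $p_i,p_j$ satisfies $p_i+p_j = 1-p_k$ for the remaining index $k$ — so the identical computation produces $p_ip_j=0$ for every $i\neq j$.

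The argument is entirely elementary and I do not anticipate a real obstacle. The one point to be careful about is that one should \emph{not} try to pass from $p_1p_2p_1 = (p_2p_1)^*(p_2p_1) = 0$ directly to $p_2p_1=0$: a general unital $*$-algebra need not be hereditary, so vanishing of $x^*x$ does not force $x=0$. The commutation trick above is precisely what sidesteps the need for any positivity hypothesis, which is why it works at the level of an arbitrary unital $*$-algebra.
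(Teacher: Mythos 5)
Your proof is correct and follows essentially the same route as the paper's: square $p_1+p_2=1-p_3$ to extract the anticommutation $p_1p_2+p_2p_1=0$, upgrade to commutation, then conclude $p_1p_2=0$. The only (cosmetic) difference is in the middle step: the paper post-multiplies by $p_1$ and invokes self-adjointness of $p_1p_2p_1$ to get $p_2p_1=p_1p_2$, whereas you both pre- and post-multiply by $p_1$ and subtract — either works, and your closing caution about not inferring $x=0$ from $x^*x=0$ in a general $*$-algebra is exactly the right thing to flag.
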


\begin{proof}
We have 
\begin{equation}
p_1+p_2=1-p_3, \label{equation: two projections summing to perp}
\end{equation} 
and squaring (\ref{equation: two projections summing to perp}) gives 
\begin{equation}
p_1+p_1p_2+p_2p_1+p_2=1-p_3. \label{equation: from three projections summing to 1}
\end{equation}
Subtracting (\ref{equation: two projections summing to perp}) from (\ref{equation: from three projections summing to 1}) yields $p_1p_2+p_2p_1=0$, so that $p_1p_2=-p_2p_1$. Post-multiplying by $p_1$ gives $p_1p_2p_1=-p_2p_1$, but the left side is self-adjoint, so we have $-p_2p_1=(-p_2p_1)^*=-p_1p_2$, so that $p_1$ and $p_2$ commute. Then since $p_1p_2=-p_2p_1=-p_1p_2$, we have $p_1p_2=0$. The other orthogonality relations are similar.
\end{proof}

As a result, we can show the following, which is important in the three-coloring context.

\begin{proposition}
\label{proposition: 3x3 matrix of projections with rows summing to 1}
Suppose that $Q=(p_{ij})_{i,j=1}^3$ is a $3 \times 3$ matrix with entries in a unital $*$-algebra $\mathcal{A}$. If each $p_{ij}$ is a self-adjoint idempotent, and if $p_{i1}+p_{i2}+p_{i3}=1$ for each $i=1,2,3$, and if $p_{ij}p_{kj}=0$ for all $i \neq k$ and $j=1,2,3$, then $p_{1j}+p_{2j}+p_{3j}=1$ for all $j=1,2,3$. In particular, $Q$ is a quantum permutation.
\end{proposition}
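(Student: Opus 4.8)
The plan is to reduce everything to the two preceding propositions about triples of projections. First I would fix a column index $j \in \{1,2,3\}$ and set $s_j = p_{1j}+p_{2j}+p_{3j}$. Since each $p_{ij}$ is self-adjoint, so is $s_j$; and since $p_{ij}p_{kj}=0$ whenever $i \neq k$, expanding $s_j^2$ and using $p_{ij}^2 = p_{ij}$ shows all cross terms vanish and $s_j^2 = s_j$. Hence each column sum $s_j$ is a self-adjoint idempotent, and therefore so is its complement $1 - s_j$.

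Next I would compute $s_1 + s_2 + s_3$ in two ways. Summing the entries of $Q$ over rows first, the hypothesis $p_{i1}+p_{i2}+p_{i3}=1$ gives $\sum_{i,j} p_{ij} = \sum_{i=1}^3 1 = 3$; summing over columns first gives $\sum_{i,j} p_{ij} = s_1 + s_2 + s_3$. Thus $s_1 + s_2 + s_3 = 3$, equivalently $(1-s_1) + (1-s_2) + (1-s_3) = 0$. Since each $1 - s_j$ is a self-adjoint idempotent, Proposition \ref{proposition: three projections summing to zero} forces $1 - s_j = 0$; that is, $p_{1j}+p_{2j}+p_{3j} = 1$ for every $j$, which is the first assertion.

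Finally, to conclude that $Q$ is a quantum permutation, it remains only to check the row-orthogonality relations $p_{ij}p_{ik} = 0$ for $j \neq k$. But for each fixed $i$ the three projections $p_{i1}, p_{i2}, p_{i3}$ sum to $1$, so Proposition \ref{proposition: three projections summing to 1} gives exactly these relations. Together with the hypotheses (entries are self-adjoint idempotents, columns pairwise orthogonal) and the column-sum identity just established, all the defining conditions of a quantum permutation hold.

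There is no real obstacle here; the only point requiring a moment's care is the observation that passing to complements turns the ``column sums equal $3$'' identity into an instance of the hypothesis of Proposition \ref{proposition: three projections summing to zero}. Without this trick one would be tempted to argue with the individual $p_{ij}$, which is messier and does not obviously go through.
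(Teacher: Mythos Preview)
Your argument is correct and follows essentially the same route as the paper: form the column sums, note they are self-adjoint idempotents by the given column orthogonality, use the row-sum hypothesis to get $s_1+s_2+s_3=3$, pass to complements, and invoke Proposition~\ref{proposition: three projections summing to zero}. Your final paragraph, explicitly deriving the row orthogonality $p_{ij}p_{ik}=0$ from Proposition~\ref{proposition: three projections summing to 1}, is a small addition the paper leaves implicit, but it is needed to fully match the definition of a quantum permutation, so including it is an improvement.
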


\begin{proof}
For each $j=1,2,3$, the column sum $q_j=p_{1j}+p_{2j}+p_{3j}$ is a self-adjoint idempotent in $\mathcal{A}$, since the entries in the sum are pairwise orthogonal. Since the row sums of $Q$ are all $1$, it follows that
\[ q_1+q_2+q_3=\sum_{i,j=1}^3 p_{ij}=\sum_{i=1}^3 (p_{i1}+p_{i2}+p_{i3})=3.\]
But each $1-q_j$ is a self-adjoint idempotent and $(1-q_1)+(1-q_2)+(1-q_3)=0$. By Proposition \ref{proposition: three projections summing to zero}, we must have $1-q_j=0$ for each $j$, so that $q_j=1$ for each $j$.
\end{proof}

The next proposition gives a common method of proving facts about $3 \times 3$ quantum permutations, and the $3$-coloring game for triangles, triangular prisms and the $3 \times 3$ rook's graph. The proof is inspired by, and a generalization of, the proof that the entries of a $3 \times 3$ quantum permutation must commute from \cite{LMR20}.

\begin{proposition}
\label{proposition: the key three projection proposition}
Let $\{e_1,e_2,e_3\}$ and $\{f_1,f_2,f_3\}$ be PVMs in a unital $*$-algebra $\cA$. Suppose that $[e_i,f_i]=0$ for all $i=1,2,3$. Then $[e_i,f_j]=0$ for all $1 \leq i,j \leq 3$.
\end{proposition}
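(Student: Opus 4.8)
The plan is to prove the sharper pointwise statement that, for $i \neq j$ with $\{i,j,k\}=\{1,2,3\}$, the two ``corners'' $e_i f_j e_j$ and $e_i f_j e_k$ both vanish. Once these are known, compressing on the right by $1 = e_i + e_j + e_k$ gives $e_i f_j = e_i f_j e_i$; this element is self-adjoint, and taking adjoints forces $f_j e_i = (e_i f_j)^{*} = e_i f_j$, i.e.\ $[e_i,f_j]=0$. The diagonal cases $i=j$ are hypotheses, so only the six off-diagonal cases need this argument.

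First I would record the orthogonality relations: since $\{e_1,e_2,e_3\}$ and $\{f_1,f_2,f_3\}$ are PVMs, Proposition \ref{proposition: three projections summing to 1} applies and yields $e_i e_k = 0$ whenever $i \neq k$ (and likewise for the $f_\ell$). Next I would verify the two corner identities. For $e_i f_j e_j$: using $[e_j,f_j]=0$ we get $e_i f_j e_j = e_i(f_j e_j) = e_i(e_j f_j) = (e_i e_j)f_j = 0$. For $e_i f_j e_k$: substituting $f_j = 1 - f_i - f_k$ gives $e_i f_j e_k = e_i e_k - e_i f_i e_k - e_i f_k e_k$; the first term is $0$ by orthogonality, the middle term equals $(f_i e_i)e_k = f_i(e_i e_k) = 0$ by $[e_i,f_i]=0$, and the last equals $e_i(e_k f_k) = (e_i e_k)f_k = 0$ by $[e_k,f_k]=0$. (This is the step that consumes all three of the hypothesized diagonal commutations, together with orthogonality of the $e_i$ and the relation $f_i+f_j+f_k=1$.) Finally $e_i f_j = e_i f_j(e_i + e_j + e_k) = e_i f_j e_i$, and the adjoint argument above closes the case.

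The only point that takes any thought is the choice to prove the corner identities rather than to manipulate the commutators $[e_i,f_j]$ directly against the partition-of-unity relations $\sum_i e_i = \sum_j f_j = 1$: that more naive route produces a web of identities that are mutually re-derivable (hence circular) and never pins down the commutator, whereas the corner identities bring idempotency into play essentially and collapse immediately. Beyond spotting this, I do not anticipate any genuine obstacle; the computation is short and uses nothing past Proposition \ref{proposition: three projections summing to 1}.
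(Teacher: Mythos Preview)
Your proof is correct and is essentially identical to the paper's own argument: the paper also expands $e_if_j$ against $e_i+e_j+e_k=1$, kills the $e_j$-corner using $[e_j,f_j]=0$, kills the $e_k$-corner by substituting $f_j=1-f_i-f_k$ and using $[e_i,f_i]=[e_k,f_k]=0$, and then finishes with the self-adjointness of $e_if_je_i$. The only cosmetic difference is that the paper fixes $(i,j)=(1,2)$ and invokes symmetry, whereas you carry general indices throughout.
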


\begin{proof}
By symmetry, it suffices to show that $e_1f_2=f_2e_1$. We note that
\begin{align*}
e_1f_2&=e_1f_2(e_1+e_2+e_3) \\
&=e_1f_2e_1+e_1f_2e_2+e_1f_2e_3.
\end{align*}
As $e_2$ and $f_2$ commute, the second term is equal to $e_1e_2f_2=0$. As $f_2=1-f_1-f_3$, we can rewrite
\begin{align*}
e_1f_2&=e_1f_2e_1+e_1(1-f_1-f_3)e_3 \\
&=e_1f_2e_1+e_1e_3-e_1f_1e_3-e_1f_3e_3 \\
&=e_1f_2e_1-e_1f_1e_3-e_1f_3e_3,
\end{align*}
where the last line follows since $e_1e_3=0$. Now, since $[e_i,f_i]=0$ and $e_ie_j=0$ for $i \neq j$, we have $e_1f_1e_3=0=e_1f_3e_3$. Thus, $e_1f_2=e_1f_2e_1$ is self-adjoint, so we have $e_1f_2=(e_1f_2)^*=f_2e_1$, and we are done.
\end{proof}

A special case of the above is the following proposition.

\begin{proposition}
\label{proposition: key three projection proposition, orthogonal version}
Let $\{p_1,p_2,p_3\}$ and $\{q_1,q_2,q_3\}$ be PVMs in a unital $*$-algebra $\cA$. If $p_iq_i=0$ for all $i=1,2,3$, then $p_iq_j=q_jp_i$ for all $i,j$. In particular, if two vertices are adjacent in a graph, then in the three coloring game algebra, any projections corresponding to those vertices commute with each other.
\end{proposition}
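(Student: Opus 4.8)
The plan is to obtain this as an immediate corollary of Proposition~\ref{proposition: the key three projection proposition}. First I would note that orthogonal self-adjoint idempotents automatically commute: from $p_iq_i=0$, taking adjoints gives $q_ip_i=(p_iq_i)^*=0$, so $p_iq_i=q_ip_i=0$ and in particular $[p_i,q_i]=0$ for $i=1,2,3$. Hence the PVMs $\{p_1,p_2,p_3\}$ and $\{q_1,q_2,q_3\}$ satisfy the hypotheses of Proposition~\ref{proposition: the key three projection proposition}, which then yields $[p_i,q_j]=0$ for all $1 \leq i,j \leq 3$, as desired. The only mild point worth stating explicitly is that $p_iq_i=0$ forces $q_ip_i=0$ via the $*$-operation, which is what lets us invoke the earlier proposition with its weaker-looking hypothesis $[p_i,f_i]=0$.

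For the ``in particular'' statement about graph colorings, I would unwind the defining relations of $\cA(\text{Hom}(G,K_3))$. Its generators are self-adjoint idempotents $e_{a,x}$ with $a \in V(K_3)=\{1,2,3\}$ and $x \in V(G)$, satisfying $\sum_{a=1}^3 e_{a,x}=1$ for each vertex $x$ (so that $\{e_{1,x},e_{2,x},e_{3,x}\}$ is a PVM), and $e_{a,x}e_{b,y}=0$ whenever $\lambda(a,b,x,y)=0$. If $x$ and $y$ are adjacent in $G$, then for each color $a$ the pair $(a,a)$ is not an edge of $K_3$, so $\lambda(a,a,x,y)=0$, and hence $e_{a,x}e_{a,y}=0$. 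Applying the first part of the proposition with $p_a=e_{a,x}$ and $q_a=e_{a,y}$ gives $e_{a,x}e_{b,y}=e_{b,y}e_{a,x}$ for all colors $a,b$; since these elements generate, the projections attached to $x$ commute with those attached to $y$ in $\cA(\text{Hom}(G,K_3))$, and therefore in any quotient or representation (in particular in $\cA^h(\text{Hom}(G,K_3))$).

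I do not expect any real obstacle: all the substantive work is contained in Proposition~\ref{proposition: the key three projection proposition}, and what remains is the trivial observation that orthogonality implies commutativity together with the bookkeeping of the three-coloring game relations. The one thing to be careful about is checking that the hypothesis of the cited proposition is met in the form $[p_i,f_i]=0$ rather than merely $p_if_i=0$, which is handled by the adjoint argument above.
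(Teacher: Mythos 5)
Your proof is correct and follows exactly the route the paper intends: the paper presents this proposition as "a special case of the above," meaning Proposition~\ref{proposition: the key three projection proposition}, and your observation that $p_iq_i=0$ forces $q_ip_i=0$ by taking adjoints is precisely the (implicit) reduction. The unwinding of the three-coloring relations for the "in particular" clause is also standard and matches the paper's setup.
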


We also recover by Proposition \ref{proposition: the key three projection proposition} the well-known fact that entries of $3 \times 3$ quantum permutations must commute. The approach is similar to what is found in \cite{LMR20}.

\begin{proposition}
\label{proposition: 3x3 quantum permutation commutes}
Suppose that $\mathcal{A}$ is a unital $*$-algebra and that $p_{ij}$ are self-adjoint idempotents in $\mathcal{A}$ for $1 \leq i,j \leq 3$. If $\sum_{i=1}^3 p_{ij}=\sum_{j=1}^3 p_{ij}=1$ for all $i,j$, then $[p_{ij},p_{k\ell}]=0$ for all $i,j,k,\ell$.
\end{proposition}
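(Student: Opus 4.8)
The plan is to reduce the whole statement to Proposition \ref{proposition: three projections summing to 1} and Proposition \ref{proposition: the key three projection proposition}. First I would observe that every row $\{p_{i1},p_{i2},p_{i3}\}$ and every column $\{p_{1j},p_{2j},p_{3j}\}$ of $Q=(p_{ij})$ is a PVM. Indeed, the three entries in each row (resp.\ column) are self-adjoint idempotents summing to $1$ by hypothesis, so Proposition \ref{proposition: three projections summing to 1} forces them to be pairwise orthogonal. In particular, for $i \neq k$ the column relations give $p_{ij}p_{kj}=0$ for every $j$, and since $p_{ij}$ and $p_{kj}$ are self-adjoint we get $p_{kj}p_{ij}=(p_{ij}p_{kj})^*=0$, hence $[p_{ij},p_{kj}]=0$; the same holds for two entries in a common row.

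Next I would split into cases for the pair $p_{ij},p_{k\ell}$. If $i=k$, then $p_{ij}$ and $p_{i\ell}$ lie in the PVM given by row $i$, so they are equal (if $j=\ell$) or orthogonal (if $j\neq\ell$), and either way $[p_{ij},p_{i\ell}]=0$. If $i \neq k$, apply Proposition \ref{proposition: the key three projection proposition} with the two PVMs $\{e_1,e_2,e_3\}=\{p_{i1},p_{i2},p_{i3}\}$ (row $i$) and $\{f_1,f_2,f_3\}=\{p_{k1},p_{k2},p_{k3}\}$ (row $k$). The required hypothesis $[e_m,f_m]=[p_{im},p_{km}]=0$ holds for each $m=1,2,3$, since $p_{im}$ and $p_{km}$ are distinct entries in the same column $m$, as noted above. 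Proposition \ref{proposition: the key three projection proposition} then yields $[p_{im'},p_{km''}]=0$ for all $m',m''\in\{1,2,3\}$, and taking $m'=j$, $m''=\ell$ settles this case.

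Combining the two cases gives $[p_{ij},p_{k\ell}]=0$ for all $i,j,k,\ell$. I do not anticipate a real obstacle: the substance is entirely carried by Propositions \ref{proposition: three projections summing to 1} and \ref{proposition: the key three projection proposition}, and the only point requiring care is organizing the case analysis so that every pair with $i\neq k$ is handled by a single application of Proposition \ref{proposition: the key three projection proposition} to two full rows (one could equally use two columns). The mild bookkeeping step — upgrading $p_{ij}p_{kj}=0$ to $[p_{ij},p_{kj}]=0$ via self-adjointness — is immediate.
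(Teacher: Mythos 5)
Your proof is correct and takes essentially the same approach as the paper: both first use Proposition \ref{proposition: three projections summing to 1} to make each row and column a PVM, then invoke the key-three-projection machinery on a pair of parallel lines of the matrix. The only cosmetic difference is that you apply Proposition \ref{proposition: the key three projection proposition} directly to a pair of rows, whereas the paper reduces by symmetry to $[p_{11},p_{22}]=0$ and applies the orthogonal corollary (Proposition \ref{proposition: key three projection proposition, orthogonal version}) to columns $1$ and $2$.
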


\begin{proof}
By Proposition \ref{proposition: three projections summing to 1}, we have $p_{ij}p_{ik}=0$ for $j \neq k$ and $p_{ij}p_{kj}=0$ for $i \neq k$, so any two entries from a common row or a common column of $P=(p_{ij})_{i,j=1}^3$ commute. For the rest of the commutation relations, by symmetry, we need only check that $[p_{11},p_{22}]=0$. But this follows by Proposition \ref{proposition: key three projection proposition, orthogonal version} using the PVMs $\{ p_{11},p_{21},p_{31}\}$ and $\{p_{12},p_{22},p_{32}\}$ and the orthogonality relations.
\end{proof}

\begin{proposition}
\label{proposition: 3 coloring of a triangle}
If $G$ is a triangle with vertices $\{1,2,3\}$, then for each $i$, $e_{i1}+e_{i2}+e_{i3}=1$ in $\mathcal{A}(\text{Hom}(G,K_3))$. In particular, $\mathcal{A}(\text{Hom}(G,K_3))$ is $*$-isomorphic to the universal unital $*$-algebra generated by entries of a $3 \times 3$ quantum permutation, and is abelian.
\end{proposition}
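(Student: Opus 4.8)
The plan is to read the defining relations of $\mathcal{A}(\text{Hom}(G,K_3))$ as exactly one half of the defining relations of a $3\times 3$ quantum permutation matrix, and to recover the other half from the propositions already proved in this section. First I would record the relations explicitly. With $G=K_3$ and target $K_3$, the non-adjacent pairs of the target are precisely the loops $(a,a)$, so $\mathcal{A}(\text{Hom}(G,K_3))$ is generated by self-adjoint idempotents $e_{i,j}$ (here $i$ the color, $j$ the vertex) subject to $\sum_i e_{i,j}=1$ for every vertex $j$ (synchronicity) and $e_{i,j}e_{i,\ell}=0$ whenever $j\neq\ell$ (adjacency, which applies to every pair of distinct vertices since $G$ is complete). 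Arrange these in the $3\times 3$ matrix $P=(e_{i,j})$ with $i$ indexing rows and $j$ indexing columns.

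The assertion $e_{i,1}+e_{i,2}+e_{i,3}=1$ for each $i$ is exactly the statement that the rows of $P$ sum to $1$. To obtain it, apply Proposition \ref{proposition: 3x3 matrix of projections with rows summing to 1} to the transpose $P^{T}$: its rows sum to $1$ because these are the columns of $P$, i.e. the synchronicity relations, and the entries within each of its columns are pairwise orthogonal because these are the rows of $P$, i.e. the adjacency relations. The conclusion of that proposition then gives that the columns of $P^{T}$ --- equivalently, the rows of $P$ --- also sum to $1$. Together with the orthogonality relations (which hold directly or follow from Proposition \ref{proposition: three projections summing to 1} applied to columns of $P$), this shows that $P$ is a $3\times 3$ quantum permutation with entries in $\mathcal{A}(\text{Hom}(G,K_3))$.

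For the stated $*$-isomorphism, let $\mathcal{Q}$ denote the universal unital $*$-algebra generated by the entries $q_{ij}$ of a $3\times 3$ quantum permutation, and exhibit mutually inverse unital $*$-homomorphisms on generators. By universality of $\mathcal{Q}$, the map $q_{ij}\mapsto e_{i,j}$ is well defined, since by the previous paragraph the $e_{i,j}$ satisfy all the quantum permutation relations. Conversely, by universality of $\mathcal{A}(\text{Hom}(G,K_3))$, the map $e_{i,j}\mapsto q_{ij}$ is well defined: the $q_{ij}$ are self-adjoint idempotents with $\sum_i q_{ij}=1$, and the required vanishing relations $q_{ij}q_{kj}=0$ for $i\neq k$ and $q_{ij}q_{i\ell}=0$ for $j\neq\ell$ both follow from Proposition \ref{proposition: three projections summing to 1} (applied to a column, respectively a row, of the quantum permutation). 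These two maps invert each other on generators, hence on the algebras. Finally, $\mathcal{Q}$ is abelian because the $q_{ij}$ pairwise commute by Proposition \ref{proposition: 3x3 quantum permutation commutes}, and therefore $\mathcal{A}(\text{Hom}(G,K_3))$ is abelian as well. I do not expect a genuine obstacle here; the only point requiring care is the bookkeeping of which index is the color and which is the vertex, so that the synchronicity and adjacency relations are fed into Proposition \ref{proposition: 3x3 matrix of projections with rows summing to 1} through the transpose in the correct roles.
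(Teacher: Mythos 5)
Your proof is correct and follows the paper's argument precisely: arrange the generators of $\mathcal{A}(\text{Hom}(G,K_3))$ into a $3\times 3$ matrix, apply Proposition \ref{proposition: 3x3 matrix of projections with rows summing to 1} to its transpose to get the row sums, and then read off the quantum permutation structure. The paper dismisses the $*$-isomorphism and abelianness with ``the final claim is easy to verify''; your version simply spells out the universality argument and the appeal to Proposition \ref{proposition: 3x3 quantum permutation commutes}, which is exactly what was intended.
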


\begin{proof}
The matrix $Q=(e_{ij})_{i,j=1}^3$ has every column sum equal to $1$, by the relations of the game algebra. Moreover, by the adjacency relations, $e_{ij}e_{ik}=0$ for $j \neq k$. By Proposition \ref{proposition: 3x3 matrix of projections with rows summing to 1} applied to the transpose of $Q$, it follows that $Q$ is a quantum permutation, so $e_{i1}+e_{i2}+e_{i3}=1$. The final claim is easy to verify.
\end{proof}

We will use the following lemma several times throughout.

\begin{lemma}
\label{lemma: qperms with ij entries orthogonal}
Suppose that $P=(p_{ij})_{i,j=1}^3$ and $Q=(q_{ij})_{i,j=1}^3$ be quantum permutations in a unital $*$-algebra $\cA$. If $p_{ij}q_{ij}=0$ for each $1 \leq i,j \leq 3$, then $p_{ij}q_{k\ell}=q_{k\ell}p_{ij}$ for all $i,j,k,\ell$.
\end{lemma}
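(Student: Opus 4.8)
The plan is to reduce the claim to a collection of applications of Proposition \ref{proposition: the key three projection proposition} (equivalently its orthogonal-version corollary). The key observation is that the hypothesis $p_{ij}q_{ij}=0$ for each pair $(i,j)$, together with the fact that each row and each column of $P$ and of $Q$ is a PVM, gives us enough ``diagonal'' orthogonality data to pair up rows and columns across the two quantum permutations.

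\textbf{Step 1: Commutation within one row and one column.} First note that, by Proposition \ref{proposition: 3x3 quantum permutation commutes}, all entries of $P$ commute among themselves, and all entries of $Q$ commute among themselves. So the only thing to prove is that entries of $P$ commute with entries of $Q$.

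\textbf{Step 2: The same-index cases.} Fix $j$. Then $\{p_{1j},p_{2j},p_{3j}\}$ and $\{q_{1j},q_{2j},q_{3j}\}$ are PVMs (columns of $P$ and $Q$), and by hypothesis $p_{ij}q_{ij}=0$ for $i=1,2,3$. Proposition \ref{proposition: key three projection proposition, orthogonal version} then gives $p_{ij}q_{kj}=q_{kj}p_{ij}$ for all $i,k$. Similarly, fixing a row index $i$ and using the PVMs $\{p_{i1},p_{i2},p_{i3}\}$, $\{q_{i1},q_{i2},q_{i3}\}$, we get $p_{ij}q_{i\ell}=q_{i\ell}p_{ij}$ for all $j,\ell$. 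Thus $p_{ij}$ commutes with every $q_{k\ell}$ that shares its row or its column.

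\textbf{Step 3: The ``generic'' case via an expansion argument.} It remains to show $p_{ij}$ commutes with $q_{k\ell}$ when $i \neq k$ and $j \neq \ell$; by symmetry of the situation it suffices to handle $[p_{11},q_{22}]=0$. Here I would mimic the proof of Proposition \ref{proposition: the key three projection proposition}: write $p_{11}q_{22}=p_{11}q_{22}(p_{11}+p_{21}+p_{31})$ and also expand $q_{22}=1-q_{21}-q_{23}$ inside the $p_{31}$ term (using the row PVM of $Q$). The term $p_{11}q_{22}p_{21}$ vanishes because $p_{11}p_{21}=0$. The term $p_{11}q_{22}p_{31}$ becomes $p_{11}(1-q_{21}-q_{23})p_{31} = p_{11}p_{31} - p_{11}q_{21}p_{31} - p_{11}q_{23}p_{31}$; the first piece is $0$ since $p_{11}p_{31}=0$, and the remaining pieces vanish because $p_{11}$ commutes with $q_{21}$ and with $q_{23}$ by Step 2 (they lie in the same column resp.\ can be reached through the row/column commutations) — more precisely, $p_{11}q_{21}p_{31} = q_{21}p_{11}p_{31} = 0$, and for $q_{23}$ one needs $p_{11}q_{23} = q_{23}p_{11}$, which follows since $q_{23}$ and $q_{21}$ share a row of $Q$ and we can run the argument symmetrically, or directly apply Proposition \ref{proposition: the key three projection proposition} to the PVMs $\{p_{11},p_{21},p_{31}\}$ and $\{q_{21},q_{22},q_{23}\}$ after observing their diagonal products vanish. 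That reduces $p_{11}q_{22}$ to $p_{11}q_{22}p_{11}$, which is self-adjoint, hence equals its adjoint $q_{22}p_{11}$, completing the proof.

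\textbf{Main obstacle.} The delicate point is making sure, in Step 3, that one has the right ``diagonal orthogonality'' hypotheses to invoke Proposition \ref{proposition: the key three projection proposition} on the mixed PVMs $\{p_{11},p_{21},p_{31}\}$ and $\{q_{21},q_{22},q_{23}\}$: one needs $p_{i1}q_{2,\sigma(i)}=0$ for a suitable matching $\sigma$, and it is really the combination of the hypothesis $p_{ij}q_{ij}=0$ with the column/row orthogonality of $P$ and $Q$ that supplies this. I would state carefully which two PVMs are being fed into the proposition at each invocation, and double-check that every cross term in the expansion is killed either by an orthogonality relation within $P$ or by a commutation relation already established in Step 2. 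Once the bookkeeping is set up, each individual vanishing is a one-line computation, so the real content is organizing the case analysis rather than any hard estimate.
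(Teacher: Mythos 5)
Your Steps 1--2 correctly establish $[p_{ij},q_{kj}]=0$ for all $i,k,j$ (same column of $P$ and $Q$) and $[p_{ij},q_{i\ell}]=0$ (same row index), matching the first move of the paper's proof. The gap is in Step 3. You try to close the generic case $[p_{11},q_{22}]=0$ by an expansion argument, and you acknowledge that the delicate point is justifying $[p_{11},q_{23}]=0$. Your proposed fix is to apply Proposition \ref{proposition: the key three projection proposition} to the PVMs $\{p_{11},p_{21},p_{31}\}$ (a column of $P$) and $\{q_{21},q_{22},q_{23}\}$ (a row of $Q$) ``after observing their diagonal products vanish.'' This does not go through: for \emph{every} matching $\sigma$ of these two triples, at least one pair $(p_{i1},q_{2\sigma(i)})$ shares neither a row index nor a column index, so its commutation/orthogonality is unknown at this stage. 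For instance with the identity matching the third pair is $(p_{31},q_{23})$, positions $(3,1)$ and $(2,3)$, and no matching does better; the pairs that would be usable are exactly those of the form ``we already want to prove these commute.'' So the term $p_{11}q_{23}p_{31}$ in your expansion is genuinely unresolved, and the self-adjointness conclusion $p_{11}q_{22}=p_{11}q_{22}p_{11}$ fails to follow.

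The clean fix, which is what the paper does, is to iterate Proposition \ref{proposition: the key three projection proposition} \emph{twice along rows and columns of the same type} rather than mixing a column of $P$ with a row of $Q$. First, for each fixed $j$, the columns $\{p_{1j},p_{2j},p_{3j}\}$ and $\{q_{1j},q_{2j},q_{3j}\}$ are PVMs whose diagonal products $p_{ij}q_{ij}$ vanish by hypothesis, so $[p_{ij},q_{kj}]=0$ for all $i,k,j$ (you already have this from Step 2). Now fix $i$ and $k$, \emph{allowing $k\neq i$}, and apply Proposition \ref{proposition: the key three projection proposition} to the row PVMs $\{p_{i1},p_{i2},p_{i3}\}$ of $P$ and $\{q_{k1},q_{k2},q_{k3}\}$ of $Q$: the needed diagonal commutations $[p_{ij},q_{kj}]=0$ for $j=1,2,3$ are exactly what the first pass produced. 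The conclusion is $[p_{ij},q_{k\ell}]=0$ for all $j,\ell$, and since $i,k$ were arbitrary this is the full statement. The point you missed is that in the second pass one should pair row $i$ of $P$ against row $k$ of $Q$ with $k$ possibly different from $i$, using the \emph{commutation} output of the first pass as the diagonal hypothesis; there is no need to revisit the orthogonality hypothesis or to mix a column of one matrix with a row of the other.
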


\begin{proof}
First, use Proposition \ref{proposition: the key three projection proposition} with the PVMs $\{p_{1j},p_{2j},p_{3j}\}$ and $\{q_{1j},q_{2j},q_{3j}\}$ to obtain $[p_{ij},q_{kj}]=0$ for all $k=1,2,3$. Then use Proposition \ref{proposition: the key three projection proposition} with the PVMs $\{p_{i1},p_{i2},p_{i3}\}$ and $\{q_{k1},q_{k2},q_{k3}\}$ to get $[p_{ij},q_{k\ell}]=0$.
\end{proof}

We now obtain a simpler proof of commutativity of projections corresponding to non-adjacent vertices in the $3$-coloring game for the triangular prism. For simplicity, we label the projections in the $3$-coloring game for the triangular prism using the vertex labels in Figure \ref{figure: triangular prism}. For example, the three projections corresponding to vertex $p$ are written as $p_1$, $p_2$ and $p_3$.

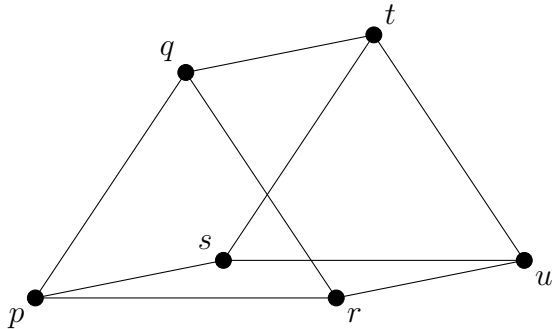
\begin{figure}[!ht]
\begin{tikzpicture}
\tikzset{vertex/.style = {shape=circle,draw,minimum size=1.5em}}
\tikzset{edge/.style = {->,> = latex'}}
\filldraw (-2,0) circle (3pt);
\filldraw (2,0) circle (3pt);
\filldraw (0,3) circle (3pt);
\filldraw (0.5,0.5) circle (3pt);
\filldraw (4.5,0.5) circle (3pt);
\filldraw (2.5,3.5) circle (3pt);
\node[below left] (P) at (-2,0) {$p$};
\node[above left] (Q) at (0,3) {$q$};
\node[below right] (R) at (2,0) {$r$};
\node[above left] (S) at (0.5,0.5) {$s$};
\node[below right] (T) at (4.5,0.5) {$u$};
\node[above right] (U) at (2.5,3.5) {$t$};
\draw (-2,0) -- (0,3) -- (2.5,3.5) -- (0.5,0.5) -- (-2,0);
\draw (-2,0) -- (2,0) -- (0,3);
\draw (2,0) -- (4.5,0.5) -- (2.5,3.5);
\draw (4.5,0.5) -- (0.5,0.5);
\end{tikzpicture}
\caption{Triangular prism}
\label{figure: triangular prism}
\end{figure}

\begin{proposition}
\label{proposition: three coloring of triangular prism}
Let $G$ be a triangular prism given as in Figure \ref{figure: triangular prism}. Then in $\cA(\text{Hom}(G,K_3))$, projections corresponding to any two non-adjacent vertices commute. In particular, $\cA(\text{Hom}(G,K_3))$ is abelian.
\end{proposition}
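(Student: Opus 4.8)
The plan is to exploit the structure of the triangular prism as two triangles $\{p,q,r\}$ and $\{s,t,u\}$ joined by the edges $(p,s)$, $(q,t)$, $(r,u)$. By Proposition \ref{proposition: 3 coloring of a triangle} applied to the triangle $\{p,q,r\}$, the matrix $P$ whose rows are the PVMs for $p$, $q$, $r$ (i.e. $P = (p_{ij})$ with row $1$ the projections for $p$, etc.) is a $3 \times 3$ quantum permutation; likewise the matrix $Q$ built from the PVMs for $s$, $t$, $u$ is a quantum permutation. In particular, by Proposition \ref{proposition: 3x3 quantum permutation commutes}, all entries within $P$ commute with each other, and all entries within $Q$ commute with each other. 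So the only commutation relations left to establish are those between an entry of $P$ and an entry of $Q$ — that is, between a projection for a vertex of the first triangle and a projection for a vertex of the second triangle.

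First I would record the adjacency-induced orthogonality relations coming from the three "vertical" edges: since $p \sim s$, the PVMs $\{p_1,p_2,p_3\}$ and $\{s_1,s_2,s_3\}$ satisfy $p_i s_i = 0$ for each $i$ (a color used at $p$ cannot be used at $s$); similarly $q_i t_i = 0$ and $r_i u_i = 0$ for all $i$. In the matrix notation this says precisely that $P$ and $Q$ have entrywise-orthogonal diagonal-matched entries: $p_{ij} q_{ij} = 0$ for all $i, j$, where I index $Q$ so that row $1$ is $s$, row $2$ is $t$, row $3$ is $u$ — matching $P$'s rows $p, q, r$. Then Lemma \ref{lemma: qperms with ij entries orthogonal} applies verbatim: two quantum permutations $P$ and $Q$ with $p_{ij} q_{ij} = 0$ for all $i,j$ have all cross-entries commuting, $[p_{ij}, q_{k\ell}] = 0$. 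This immediately gives that every projection for a vertex in $\{p,q,r\}$ commutes with every projection for a vertex in $\{s,t,u\}$.

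Combining the two observations: entries within $P$ commute (Proposition \ref{proposition: 3x3 quantum permutation commutes}), entries within $Q$ commute, and cross-entries commute (Lemma \ref{lemma: qperms with ij entries orthogonal}). Since the generators of $\cA(\text{Hom}(G,K_3))$ are exactly the $18$ projections appearing as entries of $P$ and $Q$, and any two of them commute, the algebra is abelian. In particular projections corresponding to any two non-adjacent vertices commute — but in fact so do those for adjacent vertices, which is the stronger "abelian" conclusion.

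The only mild subtlety — and the step I would be most careful about — is making sure the hypotheses of Lemma \ref{lemma: qperms with ij entries orthogonal} are met with a consistent indexing. One must verify that $P$ and $Q$ are genuinely quantum permutations (the row sums are $1$ by the game relations, and the column sums are $1$ by Proposition \ref{proposition: 3 coloring of a triangle} applied to each triangle), and that the vertical edges of the prism pair up the rows of $P$ with the rows of $Q$ so that the orthogonality is exactly $p_{ij} q_{ij} = 0$ in matched coordinates. Once the bookkeeping of which vertex sits in which row is pinned down, the result falls out of the two cited propositions and the lemma with essentially no further computation.
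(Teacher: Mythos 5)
Your proof is correct and takes essentially the same route as the paper's: the paper applies Proposition~\ref{proposition: the key three projection proposition} twice by hand (first with the column PVMs $\{p_i,q_i,r_i\}$, $\{s_i,t_i,u_i\}$ using the vertical-edge orthogonalities $p_is_i=q_it_i=r_iu_i=0$, then with the row PVMs using the resulting $[p_i,t_i]=0$), while you invoke Lemma~\ref{lemma: qperms with ij entries orthogonal}, whose own proof is precisely that two-step application, so the underlying argument is identical. The only cosmetic difference is that you derive within-triangle commutativity from Proposition~\ref{proposition: 3x3 quantum permutation commutes} and let the lemma cover all cross-triangle pairs uniformly, whereas the paper dispatches adjacent pairs with Proposition~\ref{proposition: key three projection proposition, orthogonal version}.
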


\begin{proof}
By symmetry of the graph, to show that projections corresponding to non-adjacent vertices commute, we need only show that $[p_i,t_j]=0$ for all $1 \leq i,j \leq 3$. For a fixed $i \in \{1,2,3\}$, we note that, by Proposition \ref{proposition: 3 coloring of a triangle}, we have $p_i+q_i+r_i=1=s_i+t_i+u_i$. Moreover, $p_is_i=q_it_i=r_iu_i=0$, so by Proposition \ref{proposition: the key three projection proposition}, $[p_i,t_i]=0$ for each $i$. To show the rest of the proof, we use Proposition \ref{proposition: the key three projection proposition} again with the PVMs $\{ p_1,p_2,p_3\}$ and $\{t_1,t_2,t_3\}$, since $[p_i,t_i]=0$ for each $i=1,2,3$. The fact that $\cA(\text{Hom}(G,K_3))$ is abelian follows since projections corresponding to adjacent vertices already commute in the $3$-coloring game, by Proposition \ref{proposition: key three projection proposition, orthogonal version}.
\end{proof}

We now arrive at the analogue of Proposition \ref{proposition: 3x3 quantum permutation commutes} for what we call ``$3 \times 3 \times 3$" quantum permutations, which is interesting in its own right.

\begin{theorem}
\label{theorem: 3x3x3 quantum permutation}
Suppose that $p_{ijk}$ are self-adjoint idempotents in a unital $*$-algebra $\mathcal{A}$, for $i,j,k \in \{1,2,3\}$, satisfying
\[ \sum_{i=1}^3 p_{ijk}=\sum_{j=1}^3 p_{ijk}=\sum_{k=1}^3 p_{ijk}=1.\]
Then $[p_{ijk},p_{abc}]=0$ for all $i,j,k,a,b,c \in \{1,2,3\}$.
\end{theorem}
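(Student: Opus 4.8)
The plan is to reduce the three-dimensional commutation claim to two tools already in hand: Proposition \ref{proposition: 3x3 quantum permutation commutes}, which says a $3\times 3$ matrix of self-adjoint idempotents with all row and column sums equal to $1$ is a quantum permutation with pairwise commuting entries, and the ``key three projection proposition'' (Proposition \ref{proposition: the key three projection proposition}), which promotes commutation of matching elements of two PVMs to commutation of all elements.

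First I would \emph{slice the cube}. For each fixed $\ell\in\{1,2,3\}$, the three $3\times 3$ arrays $(p_{\ell jk})_{j,k=1}^3$, $(p_{i\ell k})_{i,k=1}^3$ and $(p_{ij\ell})_{i,j=1}^3$ each consist of self-adjoint idempotents, and each has all of its row sums and column sums equal to $1$ --- this is precisely the hypothesis with one index frozen. By Proposition \ref{proposition: 3x3 quantum permutation commutes}, each such array is a quantum permutation, so its nine entries commute pairwise. Letting $\ell$ and the choice of frozen coordinate vary, this already gives $[p_{ijk},p_{abc}]=0$ whenever the triples $(i,j,k)$ and $(a,b,c)$ agree in at least one coordinate.

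The one remaining configuration is $i\neq a$, $j\neq b$, $k\neq c$. Here I would invoke the pencil of lines in the first coordinate direction: the triples $\{p_{1jk},p_{2jk},p_{3jk}\}$ and $\{p_{1bc},p_{2bc},p_{3bc}\}$ are PVMs in $\cA$, being self-adjoint idempotents summing to $1$. For each $m\in\{1,2,3\}$, the elements $p_{mjk}$ and $p_{mbc}$ lie in the common slice indexed by first coordinate $m$, so they commute by the slicing step. Proposition \ref{proposition: the key three projection proposition} then yields $[p_{mjk},p_{m'bc}]=0$ for all $m,m'\in\{1,2,3\}$; specializing to $m=i$ and $m'=a$ gives $[p_{ijk},p_{abc}]=0$, which finishes the proof.

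I expect the only genuine idea --- the ``hard part'' --- to be this last step: once the easy slicing collapses everything to the case in which all three coordinates differ, one must feed two parallel axis-direction lines of the $3\times 3\times 3$ grid into the key three projection proposition, noting that its matching-index hypothesis is handed to us for free by the slicing step. The rest is bookkeeping, and no new computation in the style of Propositions \ref{proposition: three projections summing to zero}--\ref{proposition: 3x3 quantum permutation commutes} should be needed.
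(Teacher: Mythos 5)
Your proof is correct. Both you and the paper dispose of the case in which at least one coordinate matches by freezing that coordinate and invoking Proposition \ref{proposition: 3x3 quantum permutation commutes} on the resulting $3\times 3$ slice, so the only real question is the case $i\neq a$, $j\neq b$, $k\neq c$. Here your route is slightly different and a bit more economical than the paper's. The paper regards the two slices $P_i=(p_{ijk})_{j,k}$ and $P_a=(p_{abc})_{b,c}$ as $3\times 3$ quantum permutations, observes the entrywise orthogonality $p_{ijk}p_{ajk}=0$ coming from the third axis, and then applies Lemma \ref{lemma: qperms with ij entries orthogonal}, whose own proof invokes Proposition \ref{proposition: the key three projection proposition} twice (once along columns, once along rows). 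You instead feed the two axis-direction PVMs $\{p_{mjk}\}_{m=1}^3$ and $\{p_{mbc}\}_{m=1}^3$ directly into Proposition \ref{proposition: the key three projection proposition}, using as the matching-index hypothesis the commutation $[p_{mjk},p_{mbc}]=0$ already established in the slicing step (both entries live in the quantum permutation $P_m$). This is a single application of the key proposition, bypasses Lemma \ref{lemma: qperms with ij entries orthogonal} entirely, and exploits a weaker input (commutation rather than orthogonality). The paper's route, on the other hand, proves the reusable Lemma \ref{lemma: qperms with ij entries orthogonal}, which it needs later anyway. Either proof is valid; yours is a clean shortcut for this particular statement.
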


\begin{proof}
First, suppose that at least one of $i=a$, $j=b$ or $k=c$ holds. By symmetry of the relations given, we may assume that $i=a$. As $P_i=(p_{ijk})_{j,k=1}^3$ is a $3 \times 3$ quantum permutation, we have $[p_{ijk},p_{abc}]=0$ by Proposition \ref{proposition: 3x3 quantum permutation commutes}. Thus, we need only show that $[p_{ijk},p_{abc}]=0$ whenever $i \neq a$, $j \neq b$ and $k \neq c$. For this, we consider the quantum permutations $P_i=(p_{ijk})_{j,k=1}^3$ and $P_a=(p_{abc})_{b,c=1}^3$. By the assumed summation relations and Proposition \ref{proposition: three projections summing to 1}, we have $p_{ijk}p_{ajk}=0$ for each $j,k$ since $i \neq a$. An application of Lemma \ref{lemma: qperms with ij entries orthogonal} shows that $[p_{ijk},p_{abc}]=0$, as desired.
\end{proof}

\begin{corollary}
\label{corollary: 3 coloring of 3x3 rook is abelian}
$\mathcal{A}(\text{Hom}(K_3 \times K_3,K_3))$ is abelian.
\end{corollary}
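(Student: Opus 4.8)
The plan is to reduce the claim directly to Theorem \ref{theorem: 3x3x3 quantum permutation}, which was set up with exactly this corollary in mind. First I would fix the standard coordinatization of $K_3 \times K_3$ with vertex set $\{(i,j) : 1 \leq i,j \leq 3\}$ and adjacency $(i,j) \sim (k,\ell)$ iff exactly one of $i=k$, $j=\ell$ holds, and write $q_{ijc}$ for the generator $e_{(i,j),c}$ of $\mathcal{A}(\text{Hom}(K_3 \times K_3,K_3))$ corresponding to coloring the vertex $(i,j)$ with color $c$. These elements generate the game algebra, so it suffices to show they pairwise commute. The game relations immediately give $\sum_{c=1}^3 q_{ijc}=1$ for every $(i,j)$.

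Next I would observe that each ``row'' $\{(i,1),(i,2),(i,3)\}$ and each ``column'' $\{(1,j),(2,j),(3,j)\}$ of the rook's graph is a triangle, since its three vertices are pairwise adjacent. Applying Proposition \ref{proposition: 3 coloring of a triangle} to the $i$-th row triangle shows that $(q_{ijc})_{j,c=1}^3$ is a quantum permutation, which yields the column-sum relation $\sum_{j=1}^3 q_{ijc}=1$ for all $i,c$; applying it to the $j$-th column triangle shows that $(q_{ijc})_{i,c=1}^3$ is a quantum permutation, giving $\sum_{i=1}^3 q_{ijc}=1$ for all $j,c$. These three families of identities are precisely the hypotheses of Theorem \ref{theorem: 3x3x3 quantum permutation}, with the color index $c$ playing the role of the third coordinate, so that theorem gives $[q_{ijc},q_{abd}]=0$ for all indices, and hence $\mathcal{A}(\text{Hom}(K_3 \times K_3,K_3))$ is abelian.

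Since every step is a direct invocation of results already established in this section, I do not anticipate any real obstacle. The only point requiring care is the bookkeeping that turns ``each row and column of $K_3 \times K_3$ is a triangle'' into the exact summation identities needed, together with the observation that Proposition \ref{proposition: 3 coloring of a triangle}, although stated for a stand-alone triangle game $\text{Hom}(G,K_3)$, transfers verbatim to the nine generators attached to a row or column triangle inside the larger game algebra: imposing the extra relations coming from the rest of $K_3\times K_3$ only adds identities, and the defining relations of $\text{Hom}(\text{triangle},K_3)$ hold among those nine generators because the relevant vertices are pairwise adjacent, so the quantum-permutation conclusion applies as stated.
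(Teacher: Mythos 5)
Your proof is correct and follows essentially the same route as the paper: identify the rows and columns of $K_3 \times K_3$ as triangles, invoke Proposition \ref{proposition: 3 coloring of a triangle} to obtain the row-sum and column-sum identities, and then apply Theorem \ref{theorem: 3x3x3 quantum permutation} to conclude all generators commute. Your closing remark about why Proposition \ref{proposition: 3 coloring of a triangle} transfers to a triangle sitting inside the larger game algebra (via the universal property and the fact that adding relations preserves identities) is the right justification and is taken as implicit in the paper's proof.
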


\begin{proof}
Write the vertices of $K_3 \times K_3$ as $(i,j)$ for $1 \leq i,j \leq 3$; note that the adjacency relations are $(i,j) \sim (k,\ell)$ if and only if exactly one of $i=k$ or $j=\ell$ holds. Let $e_{c,(i,j)}$ be the self-adjoint idempotent generator of the game algebra corresponding to the color $c$ for the vertex $(i,j)$, for $1 \leq c,i,j \leq 3$. By definition of the game algebra, we have $e_{1,(i,j)}+e_{2,(i,j)}+e_{3,(i,j)}=1$ for all $i,j$. As the vertex set $\{ (1,j),(2,j),(3,j)\}$ forms a triangle, by Proposition \ref{proposition: 3 coloring of a triangle}, we have $e_{c,(1,j)}+e_{c,(2,j)}+e_{c,(3,j)}=1$ for each $c,j$. Similarly, since $\{ (i,1),(i,2),(i,3)\}$ forms a triangle, we have $e_{c,(i,1)}+e_{c,(i,2)}+e_{c,(i,3)}=1$ for all $c,j$. It follows by Theorem \ref{theorem: 3x3x3 quantum permutation} that $e_{c,(i,j)}$ commutes with $e_{d,(k,\ell)}$ for all $1 \leq c,d,i,j,k,\ell \leq 3$. As the game algebra is generated by the projections $e_{c,(i,j)}$, it must be abelian.
\end{proof}

\section{Weak $*$-equivalence to three-coloring games}\label{section: equivalence}

In this section, we will prove that every synchronous game $\cG=(I,O,\lambda)$ is weakly $*$-equivalent to a three-coloring game for a graph associated with an asymmetric version $\lambda_{\text{asym}}$ of the rule function $\lambda$ of $\cG$. In particular, we will prove that there is an associated graph $G_{\lambda_{\text{asym}}}$ for which $\text{Hom}(G_{\lambda_{\text{asym}}},K_3)$ is weakly $*$-subequivalent to $\cG$ (Theorem \ref{theorem: coloring game weakly subequivalent to original game}) and $\cG$ is $*$-subequivalent to $\text{Hom}(G_{\lambda_{\text{asym}}},K_3)$ (Theorem \ref{theorem: game subequivalent to coloring}).

\begin{definition}
Let $\cG=(I,O,\lambda)$ be a synchronous non-local game. An \textbf{asymmetric rule function} for $\cG$ is a function $\lambda_{\text{asym}}: O \times O \times I \times I \to \{0,1\}$ with the requirements that 
\begin{itemize}
\item
$\lambda_{\text{asym}}(a,b,x,x)=\delta_{ab}$;
\item $\lambda_{\text{asym}}(a,b,x,y)+\lambda_{\text{asym}}(b,a,y,x) \geq 1$ for all $x \neq y$; and
\item $\lambda_{\text{asym}}(a,b,x,y)\lambda_{\text{asym}}(b,a,y,x)=\lambda(a,b,x,y)\lambda(b,a,y,x)$ for all $a,b,x,y$.
\end{itemize}
\end{definition}

If $\cG=(I,O,\lambda)$ is a synchronous game with $|I| \geq 2$, then one example of an asymmetric rule function for $\cG$ is the function
\[ \lambda_{\text{asym}}(a,b,x,y)=\begin{cases} \delta_{ab} & x=y \\
\lambda(a,b,x,y)\lambda(b,a,y,x) & x<y \\
1 & x>y. \end{cases}\]

\begin{proposition}
\label{proposition: asymmetric rule function}
Let $\cG=(I,O,\lambda)$ be a synchronous non-local game with $|I| \geq 2$ and $\lambda(a,a,x,x)=1$ for all $a \in O$ and $x \in I$. If $\lambda_{\text{asym}}$ is an asymmetric rule function for $\cG$, then the game $*$-algebras $\cA((I,O,\lambda))$ and $\cA((I,O,\lambda_{\text{asym}}))$ are $*$-isomorphic.
\end{proposition}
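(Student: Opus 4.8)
The plan is to observe that the two game $*$-algebras are defined on the \emph{same} generators and differ only in which orthogonality relations $e_{a,x}e_{b,y}=0$ are imposed, and that after passing to the quotient these two families of relations coincide. Write $\cG_{\text{asym}}=(I,O,\lambda_{\text{asym}})$. By the universal property of the game $*$-algebra, to build a unital $*$-homomorphism $\phi\colon\cA(\cG)\to\cA(\cG_{\text{asym}})$ with $\phi(e_{a,x})=e_{a,x}$ it suffices to verify that the generators $e_{a,x}$ of $\cA(\cG_{\text{asym}})$ satisfy all the defining relations of $\cA(\cG)$. The idempotent relations $e_{a,x}^2=e_{a,x}=e_{a,x}^*$ and the completeness relations $\sum_{a\in O}e_{a,x}=1$ hold in $\cA(\cG_{\text{asym}})$ by construction, so the only relations needing verification are the orthogonality relations $e_{a,x}e_{b,y}=0$ for $(a,b,x,y)\in\lambda^{-1}(\{0\})$.

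The crux is the elementary but essential observation that orthogonality of self-adjoint idempotents is symmetric: for projections $p,q$ in any unital $*$-algebra, $pq=0$ if and only if $qp=0$, since $(pq)^*=q^*p^*=qp$. Hence in $\cA(\cG_{\text{asym}})$ one has $e_{a,x}e_{b,y}=0$ as soon as \emph{either} $\lambda_{\text{asym}}(a,b,x,y)=0$ \emph{or} $\lambda_{\text{asym}}(b,a,y,x)=0$. Now fix $(a,b,x,y)$ with $\lambda(a,b,x,y)=0$. If $x=y$, then $a\neq b$ (else $\lambda(a,a,x,x)=1$ would fail), so $\lambda_{\text{asym}}(a,b,x,x)=\delta_{ab}=0$ and $e_{a,x}e_{b,x}=0$ directly. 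If $x\neq y$, then $\lambda(a,b,x,y)\lambda(b,a,y,x)=0$, so the third requirement in the definition of an asymmetric rule function gives $\lambda_{\text{asym}}(a,b,x,y)\lambda_{\text{asym}}(b,a,y,x)=0$; thus one of the two factors vanishes, and by the symmetry of orthogonality $e_{a,x}e_{b,y}=0$ in $\cA(\cG_{\text{asym}})$. This yields $\phi$.

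The reverse homomorphism $\psi\colon\cA(\cG_{\text{asym}})\to\cA(\cG)$ with $\psi(e_{a,x})=e_{a,x}$ is produced the same way: given $\lambda_{\text{asym}}(a,b,x,y)=0$, the case $x=y$ forces $a\neq b$, hence $\lambda(a,b,x,x)=0$ by synchronicity of $\lambda$; and the case $x\neq y$ gives $\lambda_{\text{asym}}(a,b,x,y)\lambda_{\text{asym}}(b,a,y,x)=0$, so the third requirement yields $\lambda(a,b,x,y)\lambda(b,a,y,x)=0$, one factor vanishes, and $e_{a,x}e_{b,y}=0$ in $\cA(\cG)$. Since $\psi\circ\phi$ and $\phi\circ\psi$ fix every generator, they are the respective identity maps, so $\phi$ is a $*$-isomorphism. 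The main (and essentially only) subtlety is the one noted above: the defining relations of the game algebra are phrased for ordered $4$-tuples, whereas the third condition on $\lambda_{\text{asym}}$ is symmetric in the unordered pair $\{(a,x),(b,y)\}$, and it is precisely the symmetry of projection-orthogonality that bridges this gap. It is worth noting that neither the hypothesis $|I|\geq 2$ nor the second requirement on $\lambda_{\text{asym}}$ is actually needed for this argument; they serve only to make $\lambda_{\text{asym}}$ a genuine asymmetrization of $\lambda$.
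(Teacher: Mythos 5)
Your proof is correct and follows essentially the same route as the paper: both directions use the universal property, and the key observation in both is that orthogonality of projections is symmetric ($pq=0\iff qp=0$ via adjoints), which bridges the gap between ordered $4$-tuples in the defining relations and the unordered condition $\lambda(a,b,x,y)\lambda(b,a,y,x)=0$. In fact your write-up of the direction $\cA(\cG_{\text{asym}})\to\cA(\cG)$ is slightly cleaner than the paper's, which asserts that $\lambda(a,b,x,y)=0$ forces \emph{both} $\lambda_{\text{asym}}(a,b,x,y)$ and $\lambda_{\text{asym}}(b,a,y,x)$ to vanish --- that cannot happen for $x\neq y$ in view of the second requirement on $\lambda_{\text{asym}}$, and your ``at least one vanishes, then take adjoints'' is the correct reading. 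Your closing observation that the hypothesis $|I|\geq 2$ and the second requirement on $\lambda_{\text{asym}}$ play no role in this particular argument is accurate; they matter only for guaranteeing that asymmetric rule functions of the intended form exist and for the vertex-count bookkeeping in the graph construction that follows.
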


\begin{proof}
Write the generators of $\cA((I,O,\lambda))$ as $e_{a,x}$ and the generators of $\cA((I,O,\lambda_{\text{asym}})$ as $f_{a,x}$ for $a \in O$ and $x \in I$. The first condition guarantees that $\lambda_{\text{asym}}(a,b,x,x)=\delta_{ab}$. If $x \neq y$ and $\lambda_{\text{asym}}(a,b,x,y)=0$, then $f_{a,x}f_{b,y}=0$, while the equation
\begin{equation}
0=\lambda_{\text{asym}}(a,b,x,y)\lambda_{\text{asym}}(b,a,y,x)=\lambda(a,b,x,y)\lambda(b,a,y,x)
\end{equation}
implies that either $\lambda(a,b,x,y)=0$ or $\lambda(b,a,y,x)=0$. Thus, either $e_{a,x}e_{b,y}=0$ or $e_{b,y}e_{a,x}=0$. But these conditions are equivalent by taking adjoints. Since $\sum_{a \in O} e_{a,x}=1$ and $\sum_{a \in O} f_{a,x}=1$, all the relations defining $\cA((I,O,\lambda_{\text{asym}}))$ are satisfied by the generators of $\cA((I,O,\lambda))$, so the map $\pi:\cA((I,O,\lambda_{\text{asym}})) \to \cA((I,O,\lambda))$ given on generators by $\pi(f_{a,x})=e_{a,x}$ for all $a,x$ extends to a unital $*$-homomorphism.

Similarly, if $\lambda(a,b,x,y)=0$ and $x \neq y$, then the last condition on $\lambda_{\text{asym}}$ implies that $\lambda_{\text{asym}}(a,b,x,y)=\lambda_{\text{asym}}(b,a,y,x)=0$, giving $f_{a,x}f_{b,y}=0$. This shows that the map $\rho:\cA((I,O,\lambda)) \to \cA((I,O,\lambda_{\text{asym}}))$ given by $\rho(e_{a,x})=f_{a,x}$ for all $a,x$ extends to a unital $*$-homomorphism. Clearly $\pi$ and $\rho$ are inverses of each other, so the game algebras are $*$-isomorphic.
\end{proof}

\begin{remark}
\label{remark: lambda(a,a,x,x)=1}
If $|I| \geq 2$, then we can assume that $\lambda(a,a,x,x)=1$ for all $a \in O$ and $x \in I$. Indeed, if $\lambda(a,a,x,x)=0$, then in the game algebra, we will have $e_{a,x}=e_{a,x}^2=0$. If there is a $y \in I$ with $x \neq y$, then in the game algebra, one has
\[ 0=e_{a,x}=e_{a,x}\sum_{b \in O} e_{b,y}=\sum_{b \in O} e_{a,x}e_{b,y}.\]
As $e_{b,y}e_{b',y}=0$ for $b \neq b'$, post-multiplying by a fixed $e_{b,y}$ yields $e_{a,x}e_{b,y}=0$ for all $b \in O$. Thus, one can re-define the rule function for $\cG$ by replacing the condition that $\lambda(a,a,x,x)=0$ with the condition that $\lambda(a,b,x,y)=0$ for some $y \in I \setminus \{x\}$ and all $b \in O$. The resulting game has game algebra that is $*$-isomorphic to $\cA(\cG)$.

If the game $\cG$ has $|I|=1$, then one can add in a second question where only the first answer is allowed, and this will yield the same game $*$-algebra as the original game. Similarly, if $\cG$ is a synchronous game and not all of the questions have the same number of possible answers, then we can add in additional answers and force them to be disallowed by the rule function. It follows by the above paragraph that we can always assume without loss of generality that $\cG$ is a synchronous game with $|I| \geq 2$, $|O| \geq 3$, and $\lambda(a,a,x,x)=1$ for all $a \in O$ and $x \in I$.
\end{remark}

In this way, applying Proposition \ref{proposition: asymmetric rule function} to this modified rule function for $\cG$, we may assume henceforth that $\cG=(I,O,\lambda)$ is a synchronous non-local game with $|I| \geq 2$ and with $\lambda$ being asymmetric (which implies that $\lambda(a,a,x,x)=1$ for all $a \in O$ and $x \in I$). The use of an asymmetric rule function in this section is mainly cosmetic--our aim is to eliminate unnecessary rules that are automatically enforced in the game algebra. As a result, our resulting graph corresponding to $\cG$ will have less vertices than if $\cG$ had a rule function that was not asymmetric.

Given a synchronous non-local game $\cG=(I,O,\lambda)$ with $|I| \geq 2$ and $\lambda$ asymmetric, the graph $G_{\lambda}$ will be constructed from the non-local game $\cG$ as follows. For simplicity, we write $I=\{1,...,n\}$ and $O=\{1,...,k\}$. We start with a triangle $\Delta$ with vertices $A$, $B$ and $C$. Next, for each $1 \leq \alpha \leq k-2$ and $1 \leq x \leq n$, we add a copy $R_{\alpha,x}$ of $K_3 \times K_3$, with vertices written as $\{ v(i,j,\alpha,x): 1 \leq i,j \leq 3\}$, with the usual adjacency relations that $v(i,j,\alpha,x) \sim v(i',j',\alpha,x)$ if and only if exactly one of $i=i'$ or $j=j'$ holds. We make the identifications
\begin{equation} v(1,2,\alpha,x)=B \text{ for all } 1 \leq x \leq n, \, 1 \leq \alpha \leq k-2 \label{1,2 identification} 
\end{equation} 
(That is, each $R_{\alpha,x}$ has $B$ as its $(1,2)$-vertex.) We also impose the identifications and relations 
\begin{align}
v(3,2,\alpha,x)&=v(1,1,\alpha+1,x), \, \text{ for all } 1 \leq \alpha \leq k-3, \, 1 \leq x \leq n, \label{3,2,1,1 identification} \\
A &\sim v(3,3,\alpha,x) \text{ for all } 1 \leq \alpha \leq k-2, \, 1 \leq x \leq n, \label{base zeros 3,3 entries} \\
C &\sim v(2,1,\alpha,x) \text{ for all } 1 \leq \alpha \leq k-2, \, 1 \leq x \leq n. \label{2,1 third color zero}
\end{align}
To ensure that this part of the graph encodes $n$ PVMs with $k$ outputs each, we add in a triangular prism $T_{\alpha,x}$, for each $1 \leq \alpha \leq k-2$ and $1 \leq x \leq n$. The triangular prism $T_{\alpha,x}$ will have one triangle given by the subgraph of $R_{\alpha,x}$ with vertices $\{ v(1,1,\alpha,x),B,v(1,3,\alpha,x)\}$, and the other triangle given by the vertices $\{ t(1,\alpha,x),A,t(2,\alpha,x)\}$, with $v(1,1,\alpha,x) \sim t(1,\alpha,x)$, $B \sim A$ and $v(1,3,\alpha,x) \sim t(2,\alpha,x)$.

There are certain vertices in the subgraphs $R_{\alpha,x}$ that are of utmost importance to the proofs of the main theorems in this section, so we reserve special notation for those vertices. For each $1 \leq x \leq n$ and $1 \leq a \leq k$, we define
\begin{equation}
\widehat{v}(a,x)=\begin{cases} v(1,1,1,x) & \text{if } a=1 \\ v(2,1,a-1,x) & \text{if } 2 \leq a \leq k-1 \\ v(2,2,k-2,x) & \text{if } a=k. \end{cases} \label{specialvertices}
\end{equation}
To encode the orthogonality relations from the rule function, we first define the sets
\[ \mathcal{E}_{\lambda}=\{ (a,b,x,y) \in \lambda^{-1}(\{0\}): (a,b) \in \{(1,1),(1,k),(k,1),(k,k), \, x \neq y\}\]
and
\[ \mathcal{F}_{\lambda}=\{(a,b,x,y) \in \lambda^{-1}(\{0\}):2 \leq a,b \leq k-1, \, x \neq y\}.\]
By Remark \ref{remark: lambda(a,a,x,x)=1}, the only tuples $(a,b,x,x)$ in $\lambda^{-1}(\{0\})$ are those with $a \neq b$, and such orthogonality will already be enforced in the subgraphs $R_{\alpha,x}$. To enforce the remaining orthogonality relations, there are two possible cases.

\textbf{Case 1.} If $(a,b,x,y) \in \lambda^{-1}(\{0\}) \setminus (\cE_{\lambda} \cup \cF_{\lambda})$, then we add in the adjacency relation
\begin{equation}
\widehat{v}(a,x) \sim \widehat{v}(b,y). \label{orthogonality adjacency}
\end{equation}

\textbf{Case 2.} If $(a,b,x,y) \in \cE_{\lambda} \cup \cF_{\lambda}$, then we construct a copy $Q_{a,b,x,y}$ of $K_3 \times K_3$, denoting the vertices by $q(i,j,a,b,x,y)$, $1 \leq i,j \leq 3$ with the usual adjacency relations, along with the identifications
\begin{align}
q(1,1,a,b,x,y)&=\widehat{v}(a,x) \label{orthogonality rooks 1}\\
q(2,2,a,b,x,y)&=\widehat{v}(b,y)  \label{orthogonality rooks 2}\\
A &\sim q(3,3,a,b,x,y) \text{ for each } (a,b,x,y) \in \cE_{\lambda} \cup \cF_{\lambda}. \label{base zeros 3,3 entries orthos}\\
q(1,2,a,b,x,y)&=\begin{cases} B & (a,b,x,y) \in \cE_{\lambda} \\ C & (a,b,x,y) \in \cF_{\lambda}. \end{cases} \label{1,2 entry of orthogonality rooks}
\end{align}

The resulting graph obtained from the subgraphs of the form $\Delta$, $R_{\alpha,x}$, $T_{\alpha,x}$ and $Q_{a,b,x,y}$, along with the identifications and relations described in (\ref{1,2 identification})--(\ref{1,2 entry of orthogonality rooks}), will be denoted by $G_{\lambda}$.

Before we prove that $\text{Hom}(G_{\lambda},K_3)$ and $\cG$ are weakly $*$-equivalent, we first need a lemma regarding the center of $\cA(\text{Hom}(G_{\lambda},K_3))$.

\begin{lemma}
\label{lemma: center of 3-coloring algebra}
For each $c=1,2,3$, the projections $e_{c,A}$, $e_{c,B}$ and $e_{c,C}$ belong to the center of $\cA(\text{Hom}(G_{\lambda},K_3))$.
\end{lemma}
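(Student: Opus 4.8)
The plan is to show that the color projections $e_{c,A}, e_{c,B}, e_{c,C}$ for the control triangle $\Delta$ commute with the color projections $e_{d,w}$ for every other vertex $w$ of $G_\lambda$, since the game algebra is generated by all such projections. The key structural observation is that $G_\lambda$ is built entirely out of copies of $K_3\times K_3$ (the subgraphs $R_{\alpha,x}$ and $Q_{a,b,x,y}$), triangular prisms $T_{\alpha,x}$, and the triangle $\Delta$, all glued together at vertices; and in each of these building blocks we already know (Corollary \ref{corollary: 3 coloring of 3x3 rook is abelian}, Proposition \ref{proposition: three coloring of triangular prism}, Proposition \ref{proposition: 3 coloring of a triangle}) that the $3$-coloring game algebra is abelian. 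So the projections of any vertex $w$ commute with the projections of every vertex lying in a common building block with $w$.

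First I would record the ``propagation'' principle: if $u$ and $v$ lie in a common copy of $K_3\times K_3$, a common triangular prism, or a common triangle inside $G_\lambda$, then in $\cA(\mathrm{Hom}(G_\lambda,K_3))$ the projections $\{e_{c,u}\}$ commute with $\{e_{d,v}\}$ for all colors $c,d$. This is immediate from the cited results, because the universal property of the game algebra of the subgraph gives a unital $*$-homomorphism from that (abelian) algebra into $\cA(\mathrm{Hom}(G_\lambda,K_3))$ sending generators to generators. Next I would observe that each of $A$, $B$, $C$ lies in ``many'' such blocks: $B$ is the $(1,2)$-vertex of every $R_{\alpha,x}$ and of every $Q_{a,b,x,y}$ with $(a,b,x,y)\in\cE_\lambda$, and is a vertex of each prism $T_{\alpha,x}$ and of $\Delta$; $C$ is the $(1,2)$-vertex of every $Q_{a,b,x,y}$ with $(a,b,x,y)\in\cF_\lambda$, is adjacent to various $v(2,1,\alpha,x)$ (hence shares a triangle or at least commutes via the adjacency/Proposition \ref{proposition: key three projection proposition, orthogonal version}), and is a vertex of $\Delta$; and $A$ lies in every prism $T_{\alpha,x}$ and in $\Delta$, and is adjacent to $v(3,3,\alpha,x)$ and $q(3,3,a,b,x,y)$.

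The main point is then a covering argument: every vertex $w$ of $G_\lambda$ lies in at least one building block $W$ that also contains $A$, $B$ and $C$ — or, more carefully, one shows that every vertex of $G_\lambda$ lies in some $R_{\alpha,x}$, $Q_{a,b,x,y}$, $T_{\alpha,x}$ or $\Delta$, and each of these blocks contains at least one of $A$, $B$, $C$ as an actual vertex, while $A$, $B$, $C$ themselves all sit together in $\Delta$ (and $B$, together with $A$, sits in each prism). So I would argue in two hops: for a given vertex $w$, find a block $W\ni w$; $W$ contains (say) $B$, so $e_{c,B}$ commutes with $e_{d,w}$; for $A$ and $C$, either they lie in $W$ too, or I pass through $\Delta$ — but passing through $\Delta$ only shows $e_{c,A}$ commutes with $e_{d,B}$, not with $e_{d,w}$. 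To close this gap I would instead verify directly that \emph{every} block $W$ in the construction contains all three of $A,B,C$ or is handled by a short ad hoc check. Indeed: $\Delta$ contains $A,B,C$; each prism $T_{\alpha,x}$ contains $A$ and $B$, and its remaining vertices $v(1,1,\alpha,x), v(1,3,\alpha,x)$ also lie in $R_{\alpha,x}$ which contains $B$ and (via the adjacency relations \eqref{base zeros 3,3 entries}, \eqref{2,1 third color zero}) interacts with $A$ and $C$; each $R_{\alpha,x}$ contains $B=v(1,2,\alpha,x)$ and, by the edges to $A$ and $C$ plus Proposition \ref{proposition: key three projection proposition, orthogonal version}, its vertices $v(3,3,\alpha,x)$ and $v(2,1,\alpha,x)$ commute with $e_{\cdot,A}$ and $e_{\cdot,C}$ respectively; each $Q_{a,b,x,y}$ contains either $B$ or $C$ as its $(1,2)$-vertex and is adjacent to $A$ at its $(3,3)$-vertex. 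The hard part, and where I expect to spend the most care, is making this vertex-by-vertex bookkeeping airtight — checking that there is no vertex of $G_\lambda$ whose projections fail to commute with all of $e_{\cdot,A}, e_{\cdot,B}, e_{\cdot,C}$, and in particular handling the ``corner'' vertices $v(3,3,\cdot), v(2,1,\cdot), q(3,3,\cdot)$ that only touch $A$ or $C$ through a single edge rather than through a full abelian block, for which one invokes that adjacent vertices' projections commute (Proposition \ref{proposition: key three projection proposition, orthogonal version}) together with the fact that $A, B, C$ form a triangle so $\{e_{c,A}\}, \{e_{c,B}\}, \{e_{c,C}\}$ are PVMs that one can slide past via Proposition \ref{proposition: the key three projection proposition}.
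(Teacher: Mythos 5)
Your high-level plan is the same as the paper's: decompose $G_{\lambda}$ into its constituent blocks (the triangle $\Delta$, rook's graphs $R_{\alpha,x}$ and $Q_{a,b,x,y}$, and prisms $T_{\alpha,x}$), use the abelianness of the three-coloring algebra of each block, pick up the remaining vertices via the adjacency-commutativity of Proposition \ref{proposition: key three projection proposition, orthogonal version}, and use that $e_{c,A}+e_{c,B}+e_{c,C}=1$ so that commutativity of two of the three control projections with a given $e_{d,\nu}$ forces commutativity of the third. These are all the right ingredients.

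However, there is a genuine gap at exactly the point you flag as ``the hard part.'' Your fallback claim --- that ``every block $W$ contains all three of $A,B,C$'' --- is false: a block $R_{\alpha,x}$ contains only $B$, and a block $Q_{a,b,x,y}$ contains only one of $B$ or $C$. The adjacency relations $A\sim v(3,3,\alpha,x)$, $C\sim v(2,1,\alpha,x)$, $A\sim q(3,3,\cdot)$ give you commutativity of $e_{c,A}$ or $e_{c,C}$ with a single vertex of the block, not with all nine, and the prism $T_{\alpha,x}$ only brings in the top row $v(1,1,\alpha,x),v(1,2,\alpha,x)=B,v(1,3,\alpha,x)$. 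From this partial information you cannot conclude that $e_{c,A}$ commutes with, say, $e_{d,v(2,2,\alpha,x)}$ or $e_{d,v(3,1,\alpha,x)}$. You gesture at Proposition \ref{proposition: the key three projection proposition} to ``slide past,'' but that proposition requires diagonal commutativity $[e_i,f_i]=0$ across matching PVMs, which you do not yet have; it is not the right tool here. The missing step is the \emph{row/column propagation} inside each rook's-graph block: by Proposition \ref{proposition: 3 coloring of a triangle} one has $\sum_{i}e_{d,v(i,j,\alpha,x)}=\sum_j e_{d,v(i,j,\alpha,x)}=1$ for each fixed color $d$, so once $e_{c,A}$ is known to commute with two of the three color-$d$ projections in a given row (or column), it commutes with the third since that third projection equals $1$ minus the other two. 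Starting from the five vertices $(1,1),(1,2),(1,3),(2,1),(3,3)$ (covered via the prism, the triangle $\Delta$, and the two adjacencies plus $e_{c,A}=1-e_{c,B}-e_{c,C}$), repeated use of this propagation reaches all nine vertices; a parallel check from $(1,1),(1,2),(2,2),(3,3)$ handles $Q_{a,b,x,y}$. Without this step, the ``vertex-by-vertex bookkeeping'' you defer is not actually closable by the tools you have named, so the proof as written is incomplete.
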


\begin{proof}
Since $\{A,B,C\}$ is a triangle, by Proposition \ref{proposition: 3 coloring of a triangle}, $e_{c,A}+e_{c,B}+e_{c,C}=1$ for each $1 \leq c \leq 3$. Thus, it suffices to show that, for each vertex $\nu$ in $G_{\lambda}$ and for each $1 \leq c,d \leq 3$, at least two of the projections $e_{c,A}$, $e_{c,B}$ and $e_{c,C}$ commute with $e_{d,\nu}$.

Based on the identification (\ref{1,2 identification}) and Corollary \ref{corollary: 3 coloring of 3x3 rook is abelian}, $e_{c,B}$ commutes with $e_{d,v(i,j,\alpha,x)}$ for all $1 \leq d,i,j \leq 3$, $1 \leq \alpha \leq k-2$ and $1 \leq x \leq n$. Since $A \sim v(3,3,\alpha,x)$ by adjacency relation (\ref{base zeros 3,3 entries orthos}), by Proposition \ref{proposition: key three projection proposition, orthogonal version} we have $[e_{c,A},e_{d,v(3,3,\alpha,x)}]=0$ for all $c,d,\alpha,x$. Similarly, by adjacency relation (\ref{2,1 third color zero}), $[e_{c,C},e_{d,v(2,1,\alpha,x)}]=0$ for all $c,d,\alpha,x$. As $e_{c,A}=1-e_{c,B}-e_{c,C}$, it follows that $[e_{c,A},e_{d,v(2,1,\alpha,x)}]=0$ as well. Using the triangular prism $T_{\alpha,x}$ and Proposition \ref{proposition: three coloring of triangular prism} guarantees that $e_{c,A}$ commutes with $e_{d,v(1,1,\alpha,x)}$ and $e_{d,v(1,3,\alpha,x)}$ for all $d,\alpha,x$. Thus, $e_{c,A}$ commutes with projections corresponding to the vertices $v(i,j,\alpha,x)$ for $(i,j) \in \{(1,1),(1,2),(1,3),(2,1),(3,3)\}$. Since $\sum_i e_{c,v(i,j,\alpha,x)}=\sum_j e_{c,v(i,j,\alpha,x)}=1$ for each $i,j$, it follows that $[e_{c,A},e_{d,v(i,j,\alpha,x)}]=0$ for all $i,j$. Therefore, each of $e_{c,A}$, $e_{c,B}$ and $e_{c,C}$ commute with $e_{d,v(i,j,\alpha,x)}$ for all $d,i,j,\alpha,x$. We note that each of $e_{c,A}$, $e_{c,B}$ and $e_{c,C}$ automatically commute with $\{e_{d,t(1,\alpha,x)},e_{d,t(2,\alpha,x)}\}_{d,\alpha,x}$ by Proposition \ref{proposition: three coloring of triangular prism}, since $A$ and $B$ belong to the triangular prism $T_{\alpha,x}$.

The argument for the subgraphs $Q_{a,b,x,y}$ is a bit simpler. If $(a,b,x,y) \in \cE_{\lambda} \cup \cF_{\lambda}$, then by identification (\ref{1,2 entry of orthogonality rooks}), for all $(a,b,x,y) \in \cE_{\lambda} \cup \cF_{\lambda}$ and $1 \leq d,i,j \leq 3$, at least one of $e_{c,B}$ or $e_{c,C}$ commutes with $e_{d,q(i,j,a,b,x,y)}$, so we will be done if we show that $e_{c,A}$ also commutes with each $e_{d,q(i,j,a,b,x,y)}$. By relation (\ref{base zeros 3,3 entries orthos}), $[e_{c,A},e_{d,q(3,3,a,b,x,y)}]=0$. As $q(1,1,a,b,x,y)$ and $q(2,2,a,b,x,y)$ already arose in previous subgraphs, we have $[e_{c,A},e_{d,q(i,j,a,b,x,y)}]=0$ for all $(i,j) \in \{(1,1),(1,2),(2,2),(3,3)\}$. A similar argument to the one for the subgraph $R_{\alpha,x}$ shows that $[e_{c,A},e_{d,q(i,j,a,b,x,y)}]=0$ for all possible indices. The result follows.
\end{proof}

\begin{theorem}
\label{theorem: coloring game weakly subequivalent to original game}
If $\pi:\cA(\text{Hom}(G_{\lambda},K_3)) \to \mathcal{D}$ is a non-zero unital $*$-homomorphism, then there exists a non-zero subalgebra $\cC$ of $\cD$ with unit $1_{\cC}$ and a unital $*$-homomorphism $\pi:\cA(\cG) \to \cC$. In particular, $\text{Hom}(G_{\lambda},K_3)$ is weakly $*$-subequivalent to $\cG$.
\end{theorem}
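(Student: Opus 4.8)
Given a non-zero unital $*$-homomorphism $\pi: \cA(\text{Hom}(G_\lambda,K_3)) \to \cD$, I want to produce, inside $\cD$, a copy of a representation of $\cA(\cG)$. The idea, following the corner-cutting strategy attributed to Ji, is to pass to the corner determined by a suitable product of central projections coming from the control triangle $\{A,B,C\}$. By Lemma~\ref{lemma: center of 3-coloring algebra}, the images $\pi(e_{c,A}), \pi(e_{c,B}), \pi(e_{c,C})$ are central in $\pi(\cA(\text{Hom}(G_\lambda,K_3)))$; moreover $\pi(e_{1,A})+\pi(e_{2,A})+\pi(e_{3,A})=1$ with pairwise orthogonal summands (Propositions~\ref{proposition: 3 coloring of a triangle} and~\ref{proposition: three projections summing to 1}), and likewise for $B$ and $C$. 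I would fix the colors of the triangle: since $\{A,B,C\}$ is a triangle, in any non-zero quotient the projections for $A,B,C$ must be assigned three distinct colors, so there is a unique way (up to relabelling) that one expects $A \mapsto$ color $1$, $B \mapsto$ color $2$, $C \mapsto$ color $3$ to be "consistent''; but since $\pi$ need not respect a fixed coloring, I would instead choose the corner projection $p = \pi(e_{2,B})$ (the projection recording that $B$ gets color $2$), or possibly a product like $\pi(e_{1,A}e_{2,B}e_{3,C})$, and argue this is non-zero by the triangle relations. Set $\cC = p\,\pi(\cA(\text{Hom}(G_\lambda,K_3)))\,p$, a unital (with unit $p$) subalgebra of $\cD$.

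**Constructing the homomorphism $\cA(\cG) \to \cC$.** For each input $x$ and answer $a \in O = \{1,\dots,k\}$, I need a projection $E_{a,x} \in \cC$ so that $\sum_a E_{a,x} = p$ and $E_{a,x}E_{b,y}=0$ whenever $\lambda(a,b,x,y)=0$; then the universal property of $\cA(\cG)$ gives the desired map. The natural candidate is to use the special vertices $\widehat v(a,x)$ from~\eqref{specialvertices}: the chain of rook's-graph gadgets $R_{\alpha,x}$, glued along~\eqref{3,2,1,1 identification} with the prisms $T_{\alpha,x}$ forcing the needed PVM structure, is designed precisely so that, after cutting by the control-triangle corner, the projections $\pi(e_{c,\widehat v(a,x)})p$ (for the appropriate color $c$ depending on whether $a=1$, $2\le a\le k-1$, or $a=k$) become a genuine $k$-outcome PVM summing to $p$. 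Concretely: in each rook gadget $R_{\alpha,x}$, the $3\times3$ quantum permutation structure (Corollary~\ref{corollary: 3 coloring of 3x3 rook is abelian}) plus the adjacency relations~\eqref{base zeros 3,3 entries},~\eqref{2,1 third color zero} and the prism relation force, in the corner where $B$ has color $2$, that $v(1,1,\alpha,x)$ and $v(2,1,\alpha,x)$ carry color $1$, $v(3,2,\alpha,x)$ carries some color, etc., so that the "running sum'' of projections $e_{1,v(1,1,\alpha,x)} + e_{1,v(2,1,\alpha,x)} + \cdots$ telescopes across the chain $\alpha = 1, \dots, k-2$ and the endpoints~\eqref{3,2,1,1 identification} make these partial sums add up to the whole unit $p$. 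I would verify that $E_{a,x} := \pi(e_{c(a),\widehat v(a,x)})\,p$ are pairwise orthogonal for fixed $x$ and sum to $p$, using the telescoping identities that the gadget was built to produce.

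**Encoding the rule function.** For a disallowed tuple $(a,b,x,y) \in \lambda^{-1}(\{0\})$ with $x \ne y$: in Case~1 the direct adjacency~\eqref{orthogonality adjacency} plus Proposition~\ref{proposition: key three projection proposition, orthogonal version} forces $\pi(e_{c(a),\widehat v(a,x)})$ and $\pi(e_{c(b),\widehat v(b,y)})$ to be orthogonal (they are adjacent vertices getting the same color, hence product zero), giving $E_{a,x}E_{b,y}=0$. In Case~2, the gadget $Q_{a,b,x,y}$ is another rook's graph whose $(1,1)$ and $(2,2)$ vertices are $\widehat v(a,x)$ and $\widehat v(b,y)$, with $(1,2)$-vertex equal to $B$ (if in $\cE_\lambda$) or $C$ (if in $\cF_\lambda$) and $(3,3)$ adjacent to $A$; in the corner where $B$ has color $2$ (and $C$ color $3$, $A$ color $1$), the commuting $3\times3$-quantum-permutation relations force $q(1,1)$ and $q(2,2)$ to receive the \emph{same} color, hence $E_{a,x}E_{b,y}=0$ again. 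Tuples $(a,b,x,x)$ with $a\ne b$ are handled because within $R_{\alpha,x}$ the relevant vertices lie in a common row/column of a rook's graph, so the corresponding projections are already orthogonal. Having produced projections $\{E_{a,x}\}$ in $\cC$ satisfying all defining relations of $\cA(\cG)$, universality yields $\rho:\cA(\cG)\to\cC$, and weak $*$-subequivalence follows by definition.

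**Main obstacle.** The delicate point is not any single commutation relation but the \emph{bookkeeping} that shows, in the chosen corner, the partial sums along the chain of gadgets $R_{1,x}, \dots, R_{k-2,x}$ (stitched by~\eqref{3,2,1,1 identification} and constrained by the prisms~$T_{\alpha,x}$ and the adjacencies~\eqref{base zeros 3,3 entries},~\eqref{2,1 third color zero}) genuinely telescope to give exactly $k$ pairwise-orthogonal projections summing to $p$ — in other words, that the gadget really does simulate an arbitrary $k$-outcome PVM and nothing gets lost or double-counted when cutting down to the corner. I expect this to require a careful inductive argument on $\alpha$, tracking which color each distinguished vertex $v(i,j,\alpha,x)$ is forced to take once the control triangle's colors are fixed, and it is here that the abelianness of the $3\times3$-rook game algebra (Corollary~\ref{corollary: 3 coloring of 3x3 rook is abelian}) and of the triangular prism (Proposition~\ref{proposition: three coloring of triangular prism}) do the real work.
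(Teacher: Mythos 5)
Your overall plan matches the paper's: cut down to a corner determined by the control triangle using the centrality from Lemma~\ref{lemma: center of 3-coloring algebra}, and then read off a $k$-outcome PVM from the rook gadgets and the rule relations from the $Q_{a,b,x,y}$ gadgets. But there are two concrete points that need fixing, plus the main computation you yourself flag as undone.

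First, the corner projection must be $\pi(e_{1,A}e_{2,B}e_{3,C})$ (up to relabeling colors), not $\pi(e_{2,B})$. Cutting by $\pi(e_{2,B})$ alone only fixes $B$'s color; $A$ and $C$ could each still be in either of colors $1,3$, and since the gadget adjacencies~\eqref{base zeros 3,3 entries} and~\eqref{2,1 third color zero} kill different entries of the rook quantum permutations depending on whether $A$ has color $1$ or $3$, the corner would not have a single coherent coloring pattern. One must first split $\pi(e_{2,B})\cD\pi(e_{2,B})$ further along the central idempotent $\pi(e_{1,A}e_{2,B}e_{3,C})$; the paper writes $1 = \sum e_{i,A}e_{j,B}e_{k,C}$ over distinct $i,j,k$, picks a nonzero summand's image, and relabels colors so it is $\pi(e_{1,A}e_{2,B}e_{3,C})$.

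Second, your formula for $E_{a,x}$ proposes using ``the appropriate color $c$ depending on whether $a=1$, $2\le a\le k-1$, or $a=k$.'' In fact, once the corner is fixed so that $A,B,C$ get colors $1,2,3$, the projections are \emph{all} taken at color $1$: $g_{a,x} = 1_{\cC}\pi(e_{1,\widehat v(a,x)})$ for every $a$. The three-way case split in~\eqref{specialvertices} is only in the choice of vertex, not the choice of color. Using a color that varies with $a$ would break the orthogonality and telescoping relations.

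Finally, the inductive telescoping argument over $\alpha=1,\dots,k-2$ — the verification that the color-$1$ quantum permutations of $R_{\alpha,x}$ successively encode the partial sums $g_{1,x}+\cdots+g_{\alpha+1,x}$ and that the $R_{k-2,x}$ row sum closes the PVM — is precisely the content of the theorem; it is not a loose end but the proof. You correctly identify it as ``the delicate point'' and ``the main obstacle,'' which is accurate self-assessment, but it means the proposal does not yet constitute a proof. The paper carries it out by writing down the explicit $3\times 3$ matrix of color-$1$ projections at each $R_{\alpha,x}$ (using the identifications~\eqref{1,2 identification}--\eqref{2,1 third color zero} to zero out the appropriate entries), invoking Proposition~\ref{proposition: three projections summing to 1} to convert partial sums of projections into pairwise orthogonalities, and finishing with the column/row sums in $R_{k-2,x}$.
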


\begin{proof}
Suppose that $\pi:\mathcal{A}(\text{Hom}(G_{\lambda},K_3)) \to \cD$ is a non-zero unital $*$-homomorphism into the unital $*$-algebra $\cD$. By replacing $\cD$ with the $*$-algebra generated by the range of $\pi$, we may assume that $\pi$ is surjective. Since $\{A,B,C\}$ is a triangle, by Proposition \ref{proposition: 3 coloring of a triangle} we can write
\[ 1=\sum_{i,j,k=1}^3 e_{i,A}e_{j,B}e_{k,C}=\sum_{\substack{1 \leq i,j,k \leq 3 \\ i \neq j, \, j \neq k, \, i \neq k}} e_{i,A}e_{j,B}e_{k,C}.\]
At least one of these terms must have non-zero image in $\cD$, and all of them belong to the center of $\cA(\text{Hom}(G_{\lambda},K_3))$ by Lemma \ref{lemma: center of 3-coloring algebra}. By re-labelling the colors if necessary, we may assume that $\pi(e_{1,A}e_{2,B}e_{3,C}) \neq 0$. We define $\cC=\pi(e_{1,A}e_{2,B}e_{3,C})\cD$, which is a non-zero subalgebra of $\cD$ with unit $1_{\cC}=\pi(e_{1,A}e_{2,B}e_{3,C})$. For each $1 \leq c \leq 3$ and vertex $\nu$ in $G_{\lambda}$, we let 
\[p_{c,\nu}=1_{\cC} \pi(e_{c,\nu})=\pi(e_{1,A}e_{2,B}e_{3,C})\pi(e_{c,\nu}), \, 1 \leq c \leq 3, \, \nu \in V(G_{\lambda}).\] The orthogonality relations arising from the triangle $\{A,B,C\}$ immediately imply that $p_{1,A}=p_{2,B}=p_{3,C}=1_{\cC}$, and all other projections in $\cC$ corresponding to $A,B,C$ are zero. The identification (\ref{1,2 identification}) and the adjacency relation (\ref{base zeros 3,3 entries}) force $p_{1,v(1,2,\alpha,x)}=p_{1,v(3,3,\alpha,x)}=0$ for all $\alpha,x$. Define $g_{a,x}=p_{1,\widehat{v}(a,x)}$; we will show that the projections $\{ g_{a,x}: 1 \leq a \leq k, \, 1 \leq x \leq n\}$ constitute a representation of $\cA(\cG)$.

For the subgraph $R_{1,x}$ and color $1$, recalling that $v(1,2,1,x)=B$, we note the form of the quantum permutation \begin{equation}
(p_{1,v(i,j,1,x)})_{i,j=1}^3=\begin{pmatrix} g_{1,x} & 0 & 1-g_{1,x} \\ g_{2,x} & p_{1,v(2,2,1,x)} & g_{1,x} \\ p_{1,v(3,1,1,x)} & p_{1,v(3,2,1,x)} & 0 \end{pmatrix}, \label{color 1 of R_{1,x}}
\end{equation}
so that $g_{1,x}+g_{2,x}=1-p_{1,v(3,1,1,x)}=p_{1,v(3,2,1,x)}$; moreover, $g_{1,x}g_{2,x}=0$. 
Working inductively, assume that $g_{1,x}+\cdots+g_{a,x}=p_{1,v(3,2,a-1,x)}$ where $2 \leq a \leq k-3$ and $g_{\alpha,x}g_{\beta,x}=0$ for all $1 \leq \alpha<\beta \leq a$; we will show that $g_{1,x}+\cdots+g_{a+1,x}=p_{1,v(3,2,a,x)}$ and that $g_{\ell,x}g_{a+1,x}=0$ for all $1 \leq \ell \leq a$. We notice that, by identification (\ref{3,2,1,1 identification}), we have
\[ p_{1,v(1,1,a,x)}=p_{1,v(3,2,a-1,x)}=g_{1,x}+\cdots+g_{a,x}.\]
Looking at color $1$ for the subgraph $R_{a+1,x}$, the quantum permutation $(p_{1,v(i,j,a,x)})_{i,j=1}^3$ is of the form
\begin{equation} (p_{1,v(i,j,a,x)})_{i,j=1}^3=\begin{pmatrix} g_{1,x}+\cdots+g_{a,x} & 0 & 1-(g_{1,x}+\cdots+g_{a,x}) \\ g_{a+1,x} & p_{1,v(2,2,a,x)} & g_{1,x}+\cdots+g_{a,x} \\ p_{1,v(3,1,a,x)} & p_{1,v(3,2,a,x)} & 0 \end{pmatrix}. \label{color 1 R_{a+1,x}}
\end{equation}
By considering the sum of column 1 and the sum of row 3 from (\ref{color 1 R_{a+1,x}}), we have
\[ g_{1,x}+\cdots+g_{a+1,x}=1-p_{1,v(3,1,a,x)}=p_{1,v(3,2,a,x)},\]
establishing the first part of the claim. For the other part of the claim, notice that the elements $g_{1,x}+\cdots+g_{a,x}$, $g_{a+1,x}$ and $g_{1,x}+\cdots+g_{a+1,x}$ are all self-adjoint idempotents. Whenever $p,q,r$ are self-adjoint idempotents with $p+q=r$, we have $p+q+(1-r)=1$, forcing $pq=0$ by Proposition \ref{proposition: three projections summing to 1}. In particular,
\begin{equation} (g_{1,x}+\cdots+g_{a,x})g_{a+1,x}=0. \label{sum of a projections times a+1st}
\end{equation}
By assumption, $g_{\alpha,x}g_{\beta,x}=0$ whenever $1 \leq \alpha,\beta \leq a$ with $\alpha \neq \beta$. Pre-multiplying (\ref{sum of a projections times a+1st}) by $g_{\beta,x}$ yields $g_{\beta,x}g_{a+1,x}=0$ for all $1 \leq \beta \leq a$. Therefore, it follows by induction that $g_{1,x}+\cdots+g_{k-2,x}=p_{1,v(3,2,k-3,x)}=p_{1,v(1,1,k-2,x)}$ and $g_{a,x}g_{b,x}=0$ for all $1 \leq a,b \leq k-2$ with $a \neq b$. To show that $\{ g_{a,x}\}_{a=1}^k$ is a PVM, we consider the form of the quantum permutation arising from color $1$ for the subgraph $R_{k-2,x}$, which is
\begin{equation}
(p_{1,v(i,j,k-2,x)})_{i,j=1}^3=\begin{pmatrix} g_{1,x}+\cdots+g_{k-2,x} & 0 & 1-(g_{1,x}+\cdots+g_{k-2,x}) \\ g_{k-1,x} & g_{k,x} & g_{1,x}+\cdots+g_{k-2,x} \\ p_{1,v(3,1,k-2,x)} & 1-g_{k,x} & 0 \end{pmatrix}. \label{color 1 R_{k-2,x}}
\end{equation}
The sum on row 2 in (\ref{color 1 R_{k-2,x}}) shows that $\displaystyle \sum_{a=1}^k g_{a,x}=1$. Applying Proposition \ref{proposition: three projections summing to 1} to the projections $g_{1,x}+\cdots+g_{k-2,x}$, $g_{k-1,x}$ and $g_{k,x}$, a similar argument demonstrates that $g_{a,x}g_{b,x}=0$ for all $1 \leq a,b \leq k$ with $a \neq b$. It follows that $\{g_{1,x},...,g_{k,x}\}$ is a PVM for each $x$.

It remains to show that $g_{a,x}g_{b,y}=0$ whenever $\lambda(a,b,x,y)=0$. If $(a,b,x,y) \in \lambda^{-1}(\{0\}) \setminus (\cE_{\lambda} \cup \cF_{\lambda})$, then the adjacency relation (\ref{orthogonality adjacency}) forces $g_{a,x}g_{b,y}=0$, since $g_{a,x}=p_{1,\widehat{v}(a,x)}$ and $g_{b,y}=p_{1,\widehat{v}(b,y)}$.

If $(a,b,x,y) \in \cE_{\lambda} \cup \cF_{\lambda}$, then the quantum permutation for the subgraph $Q_{a,b,x,y}$ corresponding to color $1$ is of the form
\begin{equation}
(p_{1,q(i,j,a,b,x,y)})_{i,j=1}^3= \begin{pmatrix} g_{a,x} & 0 & 1-g_{a,x} \\ p_{1,q(2,1,a,b,x,y)} & g_{b,y} & p_{1,q(2,3,a,b,x,y)} \\ g_{b,y} & 1-g_{b,y} & 0 \end{pmatrix}, \label{color 1 Q(a,b,x,y)}
\end{equation}
using adjacency relation (\ref{base zeros 3,3 entries orthos}) and identification (\ref{1,2 entry of orthogonality rooks}). The first column of (\ref{color 1 Q(a,b,x,y)}) forces $g_{a,x}g_{b,y}=0$ for all $(a,b,x,y) \in \cE_{\lambda} \cup \cF_{\lambda}$.

Finally, using the universal property of $\cA(\cG)$, there is a unital $*$-homomorphism $\pi:\cA(\cG) \to \cC$ given by $\pi(f_{a,x})=g_{a,x}$ for all $1 \leq a \leq k$ and $1 \leq x \leq n$, as desired.
\end{proof}

\begin{theorem}
\label{theorem: game subequivalent to coloring}
There is a unital $*$-homomorphism $\rho:\cA(\text{Hom}(G_{\lambda},K_3)) \to \cA(\cG)$. In particular, $\cG$ is $*$-subequivalent to $\text{Hom}(G_{\lambda},K_3)$.
\end{theorem}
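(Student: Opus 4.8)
The plan is to construct a unital $*$-homomorphism $\rho:\cA(\text{Hom}(G_{\lambda},K_3)) \to \cA(\cG)$ by specifying where each generator $e_{c,\nu}$ goes and checking that all relations of the $3$-coloring game algebra survive. Since $\cA(\cG)$ is generated by the PVMs $\{f_{a,x}\}_{a=1}^k$, $1 \le x \le n$, with the orthogonality relations from $\lambda$, I will build three central projections $P_1,P_2,P_3 \in \cA(\cG)$ (which will be the images of $e_{1,A},e_{1,B},e_{1,C}$ — i.e.\ the "canonical" coloring of the control triangle) and then, for each subgraph $R_{\alpha,x}$, $T_{\alpha,x}$, $Q_{a,b,x,y}$, write down an explicit proper $3$-coloring valued in $\cA(\cG)$, patterned exactly on the matrices (\ref{color 1 of R_{1,x}}), (\ref{color 1 R_{a+1,x}}), (\ref{color 1 R_{k-2,x}}), (\ref{color 1 Q(a,b,x,y)}) from the previous proof but now read the other direction. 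The natural choice is $P_1 = 1$, $P_2 = 0$, $P_3 = 0$ for one color of the triangle — no, more carefully: the triangle needs a genuine $3$-coloring, so I send $e_{1,A}\mapsto 1$, $e_{2,A},e_{3,A}\mapsto 0$; $e_{2,B}\mapsto 1$, $e_{1,B},e_{3,B}\mapsto 0$; $e_{3,C}\mapsto 1$, $e_{1,C},e_{2,C}\mapsto 0$. Then for $R_{\alpha,x}$ I define the color-$1$ projections by the partial-sum formula $p_{1,v(1,1,\alpha,x)} = f_{1,x}+\cdots+f_{\alpha,x}$ (with the convention for $\alpha=1$ giving $f_{1,x}$), $p_{1,v(2,1,\alpha,x)} = f_{\alpha+1,x}$, $p_{1,v(2,2,k-2,x)} = f_{k,x}$, and fill in the remaining entries of the $3\times 3$ quantum permutation so that rows and columns sum to $1$; the colors $2$ and $3$ are then forced (or can be taken arbitrarily among the two-coloring extensions) since the rook's graph $3$-coloring algebra is abelian and the relevant sums of $f$'s are pairwise orthogonal by the PVM relations.

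The key steps, in order: (1) verify that the assignment on the triangle $\{A,B,C\}$ respects the triangle relations — immediate. (2) For each $R_{\alpha,x}$, check that the proposed $3\times 3$ matrix of projections (color $1$, and then colors $2,3$) really is a quantum permutation in $\cA(\cG)$: this reduces to the facts that $f_{1,x}+\cdots+f_{\alpha,x}$ is a self-adjoint idempotent (true because the $f_{a,x}$ are pairwise orthogonal in a single PVM), that consecutive partial sums differ by an orthogonal projection, and that the "$\widehat v$" entries $v(1,1,1,x), v(2,1,\alpha,x), v(2,2,k-2,x)$ land on $f_{1,x},f_{\alpha+1,x},f_{k,x}$ consistently with the identifications (\ref{1,2 identification}), (\ref{3,2,1,1 identification}), (\ref{specialvertices}); one must check the seams (\ref{3,2,1,1 identification}) match, which they do since $v(3,2,\alpha,x)$ carries $f_1+\cdots+f_{\alpha+1}$ from both sides. (3) Check the adjacency relations (\ref{base zeros 3,3 entries}) and (\ref{2,1 third color zero}) linking $A$ and $C$ into $R_{\alpha,x}$: in the $3$-coloring algebra adjacent vertices need orthogonal projections for each color, so I must verify $e_{c,A}$ is orthogonal to $e_{c,v(3,3,\alpha,x)}$ and $e_{c,C}$ orthogonal to $e_{c,v(2,1,\alpha,x)}$ under $\rho$; with $A$ colored $1$ and $v(3,3,\alpha,x)$ having color-$1$ projection $0$ (the $(3,3)$ entry of the matrix (\ref{color 1 of R_{1,x}}) is $0$), and $C$ colored $3$, I need the color-$3$ projection at $v(2,1,\alpha,x)$ to be $0$ — this is exactly where the layout of the rook's graph coloring must be arranged, and it holds because $v(2,1,\alpha,x)$ gets colors $1$ and (say) $2$ only. (4) Do the analogous check for the triangular prisms $T_{\alpha,x}$: assign the outer triangle $\{t(1,\alpha,x),A,t(2,\alpha,x)\}$ a proper $3$-coloring compatible with $A$'s color and with $v(1,1,\alpha,x),v(1,3,\alpha,x)$ via the prism edges. (5) Handle the orthogonality gadgets: for $(a,b,x,y)\notin \cE_\lambda\cup\cF_\lambda$, the edge (\ref{orthogonality adjacency}) demands $\rho(e_{c,\widehat v(a,x)})\rho(e_{c,\widehat v(b,y)})=0$ for each $c$; for $c=1$ this is $f_{a,x}f_{b,y}=0$, which holds since $\lambda(a,b,x,y)=0$, and for the other two colors one checks the coloring was arranged so the relevant color-$c$ projections are zero at these vertices. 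For $(a,b,x,y)\in\cE_\lambda\cup\cF_\lambda$, build the color-$1$ quantum permutation on $Q_{a,b,x,y}$ exactly as in (\ref{color 1 Q(a,b,x,y)}) with $g_{a,x},g_{b,y}$ replaced by $f_{a,x},f_{b,y}$, use $f_{a,x}f_{b,y}=0$ to make it a valid quantum permutation, and check the $(1,2)$-entry identification (\ref{1,2 entry of orthogonality rooks}) (to $B$ or $C$) and the edge (\ref{base zeros 3,3 entries orthos}) to $A$ are consistent. (6) Invoke the universal property of $\cA(\text{Hom}(G_\lambda,K_3))$ to conclude $\rho$ exists, giving the claimed $*$-subequivalence.

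The main obstacle I anticipate is bookkeeping the three colors simultaneously and consistently across all the glued subgraphs — the previous theorem only needed the color-$1$ data (the corner $\pi(e_{1,A}e_{2,B}e_{3,C})\mathcal D$ kills colors $2,3$ on $A,B,C$), but to get a homomorphism \emph{into} $\cA(\cG)$ I must produce an honest $3$-coloring, so for every vertex all three generator-images must be specified, be pairwise orthogonal with sum $1$, and be orthogonal (color by color) to the images at every adjacent vertex. The delicate points are exactly the "linking" edges (\ref{base zeros 3,3 entries}), (\ref{2,1 third color zero}), (\ref{base zeros 3,3 entries orthos}) and the prism edges, where a vertex inside a rook's-graph gadget must simultaneously satisfy an extra orthogonality to $A$ or $C$; one needs the rook's-graph coloring to be chosen so that the color of $A$ (resp.\ $C$) simply does not appear at that vertex. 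Since $\cA(\text{Hom}(K_3\times K_3,K_3))$ is abelian (Corollary \ref{corollary: 3 coloring of 3x3 rook is abelian}) and the explicit color-$1$ matrices above already pin down a consistent pattern, the colors $2,3$ can be completed canonically (each non-$A$-colored, non-$C$-colored vertex uses colors from a fixed two-element set), and all the seam identifications were designed in (\ref{1,2 identification})--(\ref{1,2 entry of orthogonality rooks}) precisely to make this work; so the obstacle is real but purely organizational, and the proof is essentially a careful tabulation verifying each of relations (\ref{1,2 identification})--(\ref{1,2 entry of orthogonality rooks}) is respected by the assignment $e_{c,\nu}\mapsto$ (the explicitly chosen projection in $\cA(\cG)$).
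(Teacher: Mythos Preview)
Your proposal is correct and follows essentially the same approach as the paper: fix the coloring $A\mapsto 1$, $B\mapsto 2$, $C\mapsto 3$ of the control triangle, assign color-$1$ projections on each $R_{\alpha,x}$ via the partial sums $f_{[1,\alpha],x}$, and then complete to all three colors so that the seam identifications and the edges to $A$, $C$ are respected. The paper does exactly this, but rather than arguing that colors $2$ and $3$ ``can be completed canonically,'' it simply writes down explicit $3\times 3$ matrices $H_{1,\alpha,x},H_{2,\alpha,x},H_{3,\alpha,x}$ (and analogously $J_{c,a,b,x,y}$ for the gadgets $Q_{a,b,x,y}$, with colors $2,3$ swapped in the $\cF_\lambda$ case) and verifies the relations directly; your appeal to abelianness of the rook's-graph algebra is not actually the mechanism that produces the completion, so in the end the argument is the tabulation you describe in your final paragraph.
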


\begin{proof}
Write the generators of $\cA(\cG)$ as $f_{a,x}$, for $1 \leq a \leq k$ and $1 \leq x \leq n$. We will exhibit an algebraic $3$-coloring of $G_{\lambda}$, by defining algebraic colorings on each of the parts of $G_{\lambda}$ and verifying that all the orthogonality conditions and identifications hold. For convenience, we will set $f_{[a,b],x}=f_{a,x}+f_{a+1,x}+\cdots+f_{b,x}$ for $1 \leq a \leq b \leq k$, and $f_{[a,b],x}=0$ if $a>b$. The triangle $\{A,B,C\}$ is colored by the assignments $A \mapsto (1,0,0)$, $B \mapsto (0,1,0)$ and $C \mapsto (0,0,1)$. For each subgraph $R_{\alpha,x}$ of $G_{\lambda}$ for $1 \leq \alpha \leq k-2$ and $1 \leq x \leq n$, we use the algebraic coloring given by the three quantum permutations $H_{c,\alpha,x}=(h_{c,v(i,j,\alpha,x)})_{i,j=1}^3$ for colors $c \in \{1,2,3\}$ given by
\[ H_{1,\alpha,x}=\begin{pmatrix} f_{[1,\alpha],x} & 0 & f_{[\alpha+1,k],x} \\ f_{\alpha+1,x} & f_{[\alpha+2,k],x} & f_{[1,\alpha],x} \\ f_{[\alpha+2,k],x} & f_{[1,\alpha+1],x} & 0 \end{pmatrix},\]
\[ H_{2,\alpha,x}=\begin{pmatrix} 0 & 1 & 0 \\ 1-f_{\alpha+1,x} & 0 & f_{\alpha+1,x} \\ f_{\alpha+1,x} & 0 & 1-f_{\alpha+1,x} \end{pmatrix}, \]
\[ H_{3,\alpha,x}=\begin{pmatrix} f_{[\alpha+1,k],x} & 0 & f_{[1,\alpha],x} \\ 0 & f_{[1,\alpha+1],x} & f_{[\alpha+2,k],x} \\ f_{[1,\alpha],x} & f_{[\alpha+2,k],x} & f_{\alpha+1,x} \end{pmatrix}.\]
We note that $\displaystyle\sum_{c=1}^3 h_{c,v(i,j,\alpha,x)}=\sum_{i=1}^3 h_{c,v(i,j,\alpha,x)}=\sum_{j=1}^3 h_{c,v(i,j,\alpha,x)}=1$. The identifications and adjacency relations involving $A,B,C$ also hold. Moreover, we note that $h_{c,v(3,2,\alpha,x)}=h_{c,v(1,1,\alpha+1,x)}$ for all $1 \leq \alpha \leq k-3$, so (\ref{1,2 identification})-(\ref{2,1 third color zero}) hold. Thus, these assignments yield a valid algebraic $3$-coloring of $R_{\alpha,x}$, for each $1 \leq \alpha \leq k-2$ and $1 \leq x \leq n$.

For the triangular prism $T_{\alpha,x}$, the first triangle is already colored, via the assignment \[v(1,1,\alpha,x) \mapsto (f_{[1,\alpha],x},0,f_{[\alpha+1,k],x}), \, B \mapsto (0,1,0), \, v(1,3,\alpha,x) \mapsto (f_{[\alpha+1,k],x},0,f_{[1,\alpha],x}).\] The second triangle is colored using the assignments \[t(1,\alpha,x) \mapsto (0,f_{[\alpha+1,k],x},f_{[1,\alpha],x}), \,  A \mapsto (1,0,0), \,  t(2,\alpha,x) \mapsto (0,f_{[1,\alpha],x},f_{[\alpha+1,k],x}).\]

Next, we color each subgraph $Q_{a,b,x,y}$ for $(a,b,x,y) \in \cE_{\lambda} \cup \cF_{\lambda}$. If $(a,b,x,y) \in \cE_{\lambda}$, then the colorings of $\widehat{v}(a,x)$ and $\widehat{v}(b,y)$, based on the form of each $H_{c,\alpha,x}$ above, are $(f_{a,x},0,1-f_{a,x})$ and $(f_{b,y},0,1-f_{b,y})$, respectively. Since $f_{a,x}f_{b,y}=0$ and $q(1,2,a,b,x,y)=B$ by identification (\ref{1,2 entry of orthogonality rooks}), we can extend these assignments to a coloring of $Q_{a,b,x,y}$ using the three quantum permutations
\[ J_{1,a,b,x,y}=\begin{pmatrix} f_{a,x} & 0 & 1-f_{a,x} \\ 1-f_{a,x}-f_{b,y} & f_{b,y} & f_{a,x} \\ f_{b,y} & 1-f_{b,y} & 0 \end{pmatrix} \text{ (color 1)},\]
\[ J_{2,a,b,x,y}=\begin{pmatrix} 0 & 1 & 0 \\ f_{a,x}+f_{b,y} & 0 & 1-f_{a,x}-f_{b,y} \\ 1-f_{a,x}-f_{b,y} & 0 & f_{a,x}+f_{b,y} \end{pmatrix} \text{ (color 2)},\]
\[ J_{3,a,b,x,y}=\begin{pmatrix} 1-f_{a,x} & 0 & f_{a,x} \\ 0 & 1-f_{b,y} & f_{b,y} \\ f_{a,x} & f_{b,y} & 1-f_{a,x}-f_{b,y}\end{pmatrix} \text{ (color 3)}.\]
If $(a,b,x,y) \in \cF_{\lambda}$, then by identification (\ref{specialvertices}), the colorings of $\widehat{v}(a,x)$ and $\widehat{v}(b,y)$ are of the form $(f_{a,x},1-f_{a,x},0)$ and $(f_{b,y},1-f_{b,y},0)$, respectively. Similar to the last case, we can extend these assignments to a coloring of $Q_{a,b,x,y}$ using the three quantum permutations as above, but with colors $2$ and $3$ swapped, since $q(1,2,a,b,x,y)=C$ by identification (\ref{1,2 entry of orthogonality rooks}). These assignments will satisfy all the rules corresponding to each $Q_{a,b,x,y}$, while preserving the colorings of the vertices $B$, $C$ and $\{\widehat{v}(a,x): 1 \leq a \leq k, \, 1 \leq x \leq n\}$. Finally, if $(a,b,x,y) \in \lambda^{-1}(\{0\}) \setminus (\cE_{\lambda} \cup \cF_{\lambda})$, then exactly one of $a,b \in \{1,k\}$ and the other belongs to $\{2,...,k-1\}$. Without loss of generality, we assume that $a \in \{1,k\}$ and $b \in \{2,...,k-1\}$. Then $\widehat{v}(a,x)$ is either $v(1,1,1,x)$ or $v(2,2,k-2,x)$ by identification (\ref{specialvertices}), while $\widehat{v}(b,y)=v(2,1,b-1,y)$. In either case, $\widehat{v}(a,x) \mapsto (f_{a,x},0,1-f_{a,x})$ and $\widehat{v}(b,y) \mapsto (f_{b,y},1-f_{b,y},0)$. Thus, the products of projections corresponding to the same color for $\widehat{v}(a,x)$ and $\widehat{v}(b,y)$ are always zero when $(a,b,x,y) \in \lambda^{-1}(\{0\}) \setminus (\cE_{\lambda} \cup \cF_{\lambda})$, so the orthogonality relation arising from adjacency relation (\ref{orthogonality adjacency}) holds.

It follows that these assignments yield an algebraic $3$-coloring of $G_{\lambda}$, so there is a unital $*$-homomorphism $\rho:\cA(\text{Hom}(G_{\lambda},K_3)) \to \cA(\cG)$.
\end{proof}

Since $G_{\lambda}$ contains a triangle, we automatically have $\chi_{alg}(G_{\lambda}) \geq 3$ \cite{HMPS19}. On the other hand, every graph is algebraically $4$-colorable by \cite{HMPS19}, so $\chi_{alg}(G_{\lambda}) \in \{3,4\}$. The weak $*$-equivalence between $\cG$ and $\text{Hom}(G_{\lambda},K_3)$ yields the following corollary.

\begin{corollary}
\label{corollary: coloring equivalence}
Let $\cG=(I,O,\lambda)$ be a synchronous non-local game with $|O| \geq 3$; let $t \in \{loc,q,qa,qc,C^*,hered,alg\}$; and let $G_{\lambda}$ be the graph associated with $\cG$.
\begin{itemize}
\item[(1)] If $\cG$ has a winning $t$-strategy, then $\chi_t(G_{\lambda})=3$.
\item[(2)] If $\cG$ does not have a winning $t$-strategy, then $\chi_t(G_{\lambda}) \geq 4$.
\end{itemize}
Moreover, (2) becomes equality if $t=alg$.
\end{corollary}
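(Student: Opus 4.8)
The plan is to read the corollary off the two one-sided equivalences already established: Theorem \ref{theorem: game subequivalent to coloring} supplies the direction ``$\cG$ winnable in model $t$ $\Rightarrow$ $G_{\lambda}$ is $3$-colorable in model $t$'', and Theorem \ref{theorem: coloring game weakly subequivalent to original game} supplies the converse. Throughout I would use the standing normalization of Remark \ref{remark: lambda(a,a,x,x)=1} together with Proposition \ref{proposition: asymmetric rule function}, so that $\cG=(I,O,\lambda)$ may be taken with $|I|\ge 2$, $|O|\ge 3$, and $\lambda$ asymmetric, and $G_{\lambda}$ is the graph built in Section \ref{section: equivalence}; since this normalization leaves $\cA(\cG)$ unchanged up to $*$-isomorphism, it alters neither the hypothesis ``$\cG$ has a winning $t$-strategy'' nor the conclusion. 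I would first record the trivial lower bound: the triangle $\{A,B,C\}$ lies inside $G_{\lambda}$, so $G_{\lambda}$ is not algebraically $2$-colorable and $\chi_{alg}(G_{\lambda})\ge 3$; since $t\le alg$ forces $\chi_{alg}(G_{\lambda})\le\chi_t(G_{\lambda})$, we obtain $\chi_t(G_{\lambda})\ge 3$ for every $t$ in the list.

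For part (1), suppose $\cG$ has a winning $t$-strategy. Theorem \ref{theorem: game subequivalent to coloring} gives a unital $*$-homomorphism $\rho:\cA(\text{Hom}(G_{\lambda},K_3))\to\cA(\cG)$, so $\cG$ is $*$-subequivalent to $\text{Hom}(G_{\lambda},K_3)$; by the transfer of winning strategies along $*$-subequivalence recorded in Section \ref{section: preliminaries} (for $t\in\{loc,q,qa,qc,C^*,hered\}$ one composes $\rho$ with the representation of $\cA(\cG)$ witnessing the strategy; for $t=alg$ one uses that a unital $*$-homomorphism into the nonzero algebra $\cA(\cG)$ forces $\cA(\text{Hom}(G_{\lambda},K_3))\ne(0)$), the game $\text{Hom}(G_{\lambda},K_3)$ has a winning $t$-strategy, i.e.\ $G_{\lambda}\to_t K_3$. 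Hence $\chi_t(G_{\lambda})\le 3$, and combined with the lower bound, $\chi_t(G_{\lambda})=3$.

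For part (2), I would argue by contrapositive. If $\chi_t(G_{\lambda})=3$ then $G_{\lambda}\to_t K_3$, so $\text{Hom}(G_{\lambda},K_3)$ has a winning $t$-strategy; since $\text{Hom}(G_{\lambda},K_3)$ is weakly $*$-subequivalent to $\cG$ by Theorem \ref{theorem: coloring game weakly subequivalent to original game}, the proposition on weak $*$-subequivalence in Section \ref{section: preliminaries} produces a winning $t$-strategy for $\cG$. Thus, if $\cG$ has no winning $t$-strategy, then $\chi_t(G_{\lambda})\ne 3$, and with $\chi_t(G_{\lambda})\ge 3$ this forces $\chi_t(G_{\lambda})\ge 4$. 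For the ``moreover'' clause, I would invoke that every graph is algebraically $4$-colorable \cite{HMPS19}, so $\chi_{alg}(G_{\lambda})\le 4$; taking $t=alg$ in part (2) then upgrades this to $\chi_{alg}(G_{\lambda})=4$.

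I do not expect a genuine obstacle: all the mathematical content sits in Theorems \ref{theorem: coloring game weakly subequivalent to original game} and \ref{theorem: game subequivalent to coloring} and in the proposition of Section \ref{section: preliminaries}. The only thing requiring care is the uniform bookkeeping over all seven models $t$ at once, particularly $t=hered$ and $t=alg$, where one cannot pass to a Hilbert-space representation and must instead reason directly with the game $*$-algebras — which is exactly the generality in which the Section \ref{section: preliminaries} proposition (and the transfer property for $*$-subequivalence) was phrased.
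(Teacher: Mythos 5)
Your proof is correct and follows the paper's argument essentially verbatim: part (1) from Theorem \ref{theorem: game subequivalent to coloring} and the transfer of strategies along $*$-subequivalence, part (2) by contrapositive from Theorem \ref{theorem: coloring game weakly subequivalent to original game} and the weak $*$-subequivalence proposition, and the ``moreover'' clause from the universal algebraic $4$-colorability result of \cite{HMPS19}. The only cosmetic difference is that you spell out the normalization step and the lower bound $\chi_t(G_{\lambda})\ge 3$ a bit more explicitly than the paper does.
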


\begin{proof}
By Theorem \ref{theorem: game subequivalent to coloring}, $\cG$ is $*$-subequivalent to $\text{Hom}(G_{\lambda},K_3)$. Thus, if $\cG$ has a winning $t$-strategy, then so does $\text{Hom}(G_{\lambda},K_3)$, yielding $\chi_t(G_{\lambda}) \leq 3$. Since $\chi_{alg}(G_{\lambda}) \geq 3$, we obtain equality.

Conversely, if $\cG$ does not have a winning $t$-strategy, then since $\text{Hom}(G_{\lambda},K_3)$ is weakly $*$-subequivalent to $\cG$ by Theorem \ref{theorem: coloring game weakly subequivalent to original game}, $\text{Hom}(G_{\lambda},K_3)$ cannot have a winning $t$-strategy either. Thus, $\chi_t(G_{\lambda})>3$, and it follows that $\chi_t(G_{\lambda}) \geq 4$. The claim for $t=alg$ is immediate, since every graph can be algebraically $4$-colored \cite{HMPS19}.
\end{proof}

\begin{remark}
Suppose that $\cG=(I,O,\lambda)$ is a synchronous non-local game with $|I|=n$ and $|O|=k$, and with $\lambda$ asymmetric. The triangle $\{A,B,C\}$ contributes $3$ vertices to $G_{\lambda}$. Each subgraph $R_{\alpha,x}$ contributes $7$ new vertices, with $R_{1,x}$ contributing an extra vertex, using identification (\ref{3,2,1,1 identification}). Each triangular prism $T_{a,x}$ contributes $2$ new vertices, for $1 \leq \alpha \leq k-2$ and $1 \leq x \leq n$. Lastly, there are $6$ new vertices for each subgraph $Q_{a,b,x,y}$, for each $(a,b,x,y) \in \cE_{\lambda} \cup \cF_{\lambda}$. Thus, the number of vertices in $G_{\lambda}$ is equal to
\[ 3+n+9n(k-2)+6|\cE_{\lambda}|+6|\cF_{\lambda}|.\]
Depending on what the original game $\cG$ is, it may be difficult to determine the sizes of $\cE_{\lambda}$ and $\cF_{\lambda}$. That being said, since $\cE_{\lambda} \cup \cF_{\lambda} \subseteq \lambda^{-1}\{0\}$ and $\cE_{\lambda} \cap \cF_{\lambda}=\emptyset$, we always have the upper bound
\[ |V(G_{\lambda})| \leq 3+n+9n(k-2)+6|\lambda^{-1}(\{0\})|.\]
\end{remark}

An important special case is a quantum version of L. Lov\'{a}sz's reduction theorem of the $k$-coloring problem to the $3$-coloring problem in polynomial time \cite{Lo73}.

\begin{corollary}
Let $G$ be a graph on $n$ vertices and $m$ edges. Let $k>3$ and let $t \in \{loc,q,qa,qc,C^*,hered,alg\}$. The $k$-coloring problem for $G$ in model $t$ is equivalent to the $3$-coloring problem in model $t$, for a certain graph $G_{\lambda}$ on $3+n+9n(k-2)+6mk$ vertices.
\end{corollary}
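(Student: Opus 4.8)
The plan is to apply the construction of this section to the $k$-coloring game $\cG=\text{Hom}(G,K_k)$, whose game $*$-algebra encodes precisely the $t$-colorability of $G$ with $k$ colors: for each $t$, the game $\cG$ has a winning $t$-strategy exactly when $G\to_t K_k$, i.e. $\chi_t(G)\leq k$. Since $k>3$ we have $|O|=k\geq 3$, and the rule function of $\text{Hom}(G,K_k)$ already satisfies $\lambda(a,a,x,x)=1$; assuming $n\geq 2$ (the case $n=1$ is degenerate, as then $G$ has no edges and $G\to_t K_k$ for every $t$), Remark \ref{remark: lambda(a,a,x,x)=1} together with Proposition \ref{proposition: asymmetric rule function} allows me to replace $\lambda$ by the explicit asymmetric rule function $\lambda_{\text{asym}}$ displayed just after Proposition \ref{proposition: asymmetric rule function}, with respect to an arbitrary ordering of $V(G)$, without changing the game $*$-algebra.

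I would then count the vertices of the resulting graph $G_{\lambda_{\text{asym}}}$ (the graph denoted $G_{\lambda}$ in Theorems \ref{theorem: coloring game weakly subequivalent to original game} and \ref{theorem: game subequivalent to coloring}). The disallowed $4$-tuples of $\text{Hom}(G,K_k)$ are the diagonal ones $(a,b,x,x)$ with $a\neq b$, which are handled inside the rook subgraphs $R_{\alpha,x}$, together with the tuples $(a,a,x,y)$ with $x\sim y$ in $G$. Hence, after asymmetrizing, the off-diagonal part of $\lambda_{\text{asym}}^{-1}(\{0\})$ is exactly
\[
\{(a,a,x,y):\{x,y\}\in E(G),\ x<y,\ 1\leq a\leq k\},
\]
a set of cardinality $mk$. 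The tuples in this set with $a\in\{1,k\}$ form $\cE_{\lambda_{\text{asym}}}$, of size $2m$, those with $2\leq a\leq k-1$ form $\cF_{\lambda_{\text{asym}}}$, of size $(k-2)m$, and there are no off-diagonal disallowed tuples outside $\cE_{\lambda_{\text{asym}}}\cup\cF_{\lambda_{\text{asym}}}$. Substituting $|\cE_{\lambda_{\text{asym}}}|+|\cF_{\lambda_{\text{asym}}}|=2m+(k-2)m=mk$ into the formula $3+n+9n(k-2)+6|\cE_{\lambda_{\text{asym}}}|+6|\cF_{\lambda_{\text{asym}}}|$ for $|V(G_{\lambda_{\text{asym}}})|$ gives $|V(G_{\lambda_{\text{asym}}})|=3+n+9n(k-2)+6mk$.

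Finally, I would invoke Corollary \ref{corollary: coloring equivalence} (which rests on Theorems \ref{theorem: coloring game weakly subequivalent to original game} and \ref{theorem: game subequivalent to coloring}): for every $t\in\{loc,q,qa,qc,C^*,hered,alg\}$, $\cG$ has a winning $t$-strategy if and only if $\chi_t(G_{\lambda_{\text{asym}}})=3$; and since $G_{\lambda_{\text{asym}}}$ contains the triangle $\{A,B,C\}$, this is the same as $G_{\lambda_{\text{asym}}}\to_t K_3$, i.e. as $\chi_t(G_{\lambda_{\text{asym}}})\leq 3$. Combining with the first paragraph, $G\to_t K_k$ if and only if $G_{\lambda_{\text{asym}}}\to_t K_3$. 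In other words, writing $G_\lambda:=G_{\lambda_{\text{asym}}}$, the $k$-coloring problem for $G$ in model $t$ and the $3$-coloring problem for the $(3+n+9n(k-2)+6mk)$-vertex graph $G_\lambda$ in model $t$ have the same answer, which is the claimed reduction.

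The only step that calls for genuine care is the vertex count: one must verify that asymmetrizing the rule function of the $k$-coloring game produces exactly $mk$ off-diagonal disallowed tuples, split as $2m+(k-2)m$ between $\cE_\lambda$ and $\cF_\lambda$, so that the $6|\cE_\lambda|+6|\cF_\lambda|$ term of the general count collapses to $6mk$. All the substantive work — the two-sided transfer of winning $t$-strategies between $\cG$ and $\text{Hom}(G_\lambda,K_3)$ — has already been carried out in Corollary \ref{corollary: coloring equivalence}.
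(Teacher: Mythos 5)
Your proof is correct and takes essentially the same route as the paper: apply Corollary \ref{corollary: coloring equivalence} to $\cG = \text{Hom}(G,K_k)$ and compute $|\cE_\lambda| + |\cF_\lambda| = 2m + (k-2)m = mk$. You are somewhat more careful than the paper's own proof, which writes the membership conditions as ``$x \sim y$'' without explicitly invoking the asymmetrization that cuts the count from $2mk$ down to $mk$; you also correctly flag that the construction requires $n \geq 2$, a point the paper leaves implicit.
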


\begin{proof}
In the $k$-coloring game for $G$, the rules (aside from the synchronicity rules) are all of the form $\lambda(a,a,x,y)=0$ whenever $(x,y)$ is an edge in $G$. Thus, $(a,b,x,y) \in \cE_{\lambda}$ if and only if $x \sim y$ and $(a,b) \in \{ (1,1),(k,k)\}$, while $(a,b,x,y) \in \cF_{\lambda}$ if and only if $x \sim y$ and $2 \leq a=b \leq k-1$. Thus, $|\cE_{\lambda}|+|\cF_{\lambda}|=mk$, so the graph $G_{\lambda}$ from the main theorem has $3+n+9n(k-2)+6mk$ vertices, as desired.
\end{proof}

In work of Z. Ji \cite{Ji13}, synchronous non-local games, and more generally, binary constraint systems, were converted (in the loc and q models) to $3$-coloring games for graphs by considering a binary constraint system as a 3-SAT problem, and converting each clause in the satisfiability problem to a $3$-coloring of a subgraph of a large graph. This construction gives a $3$-colorable graph if and only if the original binary constraint system was satisfiable. To extend this to the $q$ model, one adds in triangular prisms to the main gadget graph as necessary to force variables from the same clause to commute with each other, which is a feature of quantum solutions to binary constraint systems \cite{Ji13}.

For a single clause $x_1 \vee x_2 \vee x_3$, the graph used in \cite{Ji13} has a ``control" triangle, much like our work here, which is fixed and does not depend on the particular clause. In addition, Ji constructs six vertices (one for each variable and one for its negation), and uses six other vertices (whose colorings depend on the assignments to $x_1,x_2,x_3$). For a synchronous non-local game with $n$ inputs and $k$ outputs, each PVM corresponds to the satisfiability problem $x_1+\cdots+x_k=1$, where each $x_j \in \{0,1\}$. In this setting, we are treating $1$ as ``True", while $0$ is treated as ``False". The negation of the variable $x_j$ is given by $\overline{x}_j$. The constraint $x_1+\cdots+x_k=1$ can be transformed into a 3-SAT problem using intermediate variables $r_{[1,j]}$ for $j=1,...,k-2$, where $r_{[1,1]}=x_1$ and $\overline{r}_{[1,k-1]}=\overline{x}_k$, and forcing $r_{[1,j]}+x_{j+1}+\overline{r}_{[1,j+1]}=1$. Each of these equations, essentially treated as 1-in-3 SAT problems (a 3-SAT problem where exactly one variable is allowed to be ``True"), can be converted to a 3-SAT via the expression \[(r_{[1,j]} \vee x_{j+1} \vee \overline{r}_{[1,j+1]}) \wedge (r_{[1,j]} \vee \overline{x}_{j+1}) \wedge(x_{j+1} \vee r_{[1,j+1]}) \wedge (r_{[1,j]} \vee r_{[1,j+1]}).\] 
Using Ji's construction in \cite{Ji13}, each element $x_j$ of the PVM, along with its complement, would appear as vertices, along with each $r_{[1,j]}$, $2 \leq j \leq k-2$, and its complement. Thus, one would need $2(k+k-3)=4k-6$ vertices for each PVM corresponding to the variables, plus $6$ vertices corresponding to the portion of the gadget for each clause that is not the control triangle and not the variables. In summary, one would need $4k-6+6(k-2)=10(k-2)+2$ vertices. For each rule of the game, including the synchronous rules, an additional six vertices are required, yielding a total of $3+2n+10n(k-2)+6|\lambda^{-1}(\{0\})|$ vertices.

However, for the quantum model, many more vertices are required. This problem arises since triangular prisms are necessary for variables in the same clause to commute. Due to the lack of triangles in the graph in \cite{Ji13}, many more intermediate vertices are required in each triangular prism constructed. One can reduce the number of vertices by only forcing two of the control triangle vertices to have projections that commute with all other projections in the gadget, but even enforcing this condition can require up to eight triangular prisms for each control vertex, with at least four intermediate vertices needed for each prism.

In our work, only one triangular prism is required for each pair $(\alpha,x)$ for $1 \leq x \leq n$ and $1 \leq \alpha \leq k-2$, and only two intermediate vertices are required per triangular prism. Hence, our construction provides a similar size graph as Ji's construction for the local model, but provides a significantly smaller graph in the $q$ model.

\begin{example}
\label{example: mermin-peres}
The Mermin-Peres magic square game \cite{Mer90} is the synchronous binary constraint system game where the set of equations is
\begin{align*}
x_1+x_2+x_3&=0, \\
x_4+x_5+x_6&=0, \\
x_7+x_8+x_9&=0, \\
x_1+x_4+x_7&=0, \\
x_2+x_5+x_8&=0, \\
x_3+x_6+x_9&=1.
\end{align*}
The question set is the set of equations. The answer set for each equation is the set of solutions in $\mathbb{Z}_2^3$ to the equation, which has $4$ elements. Hence, this game is a synchronous game with $n=6$ questions and $k=4$ answers. This game has a winning quantum strategy, but no winning classical strategy \cite{Mer90}. Using the main theorem, one obtains a graph $G$ with $\chi_q(G)=3<\chi_{loc}(G)$. After removing the synchronicity rules and making the rest of the rule function asymmetric, one finds that the number of disallowed $4$-tuples $(a,b,x,y)$, $x<y$, for this game is $72$. Thus, the graph $G$ has at most
\[ 3+6+9(6)(4-2)+6(72)=549\]
vertices. However, using the sets $\cE_{\lambda}$ and $\cF_{\lambda}$ instead of all of $\lambda^{-1}(\{0\})$ can significantly reduce the size of the graph. Indeed, if we label the questions (that is, the equations) of the game by the numbers $1$ through $6$, then we can label the $4$ possible solutions for each equation by $1,2,3,4$. For the first five equations, we use the assignments $(0,0,0) \to 1$, $(0,1,1) \mapsto 2$, $(1,0,1) \mapsto 3$ and $(1,1,0) \mapsto 4$, while for the final equation $x_3+x_6+x_9=1$, we use the assignments $(1,1,1) \mapsto 1$, $(1,0,0) \mapsto 2$, $(0,1,0) \mapsto 3$ and $(0,0,1) \mapsto 4$. (This is essentially the same idea as what is used in \cite{AMRSSV19} when transforming this game to a graph isomorphism game.) Using the structure of $\cE_{\lambda}$ and $\cF_{\lambda}$, one finds all $4$-tuples $(a,b,x,y)$, with $a,b \in \{1,2,3,4\}$ and $x,y \in \{1,...,6\}$ such that $x<y$, $\lambda(a,b,x,y)=0$, and \[(a,b) \in \{(1,4),(4,1),(1,1),(4,4),(2,3),(3,2),(2,2),(3,3)\}.\] 
A tedious calculation shows that there are $40$ such $4$-tuples, so the graph $G$ can be arranged to have
\[ 3+6+9(6)(4-2)+6(40)=357\]
vertices.
\end{example}

\begin{example}
\label{example: 338 vertices}
One of the main results of \cite{HMPS19} is that $\chi_{alg}(K_5)=4$ (and more generally, that \textit{every} graph can be algebraically $4$-colored). Note that $K_5$ has $5$ vertices and $10$ edges. Using the corollary, the graph $G_{\lambda}$ we obtain has
\[ 3+5+9(5)(4-2)+6(10)(4)=338\]
vertices. For this graph $G_{\lambda}$, we have $\chi_{alg}(G_{\lambda})=3<\chi_{hered}(G_{\lambda})$.
\end{example}

There are also graphs $G$ and $H$ with $\chi_{qa}(G)=3<\chi_q(G)$ and $\chi_{qc}(H)=3<\chi_{qa}(H)$. The graph $G$ arises from a synchronous game with a winning $qa$ strategy, but no winning $q$ strategy \cite{KPS18}, while the graph $H$ arises from a synchronous game with a winning $qc$ strategy, but no winning $qa$ strategy \cite{JNVWY20}. One could work through Slofstra's construction in \cite{Slo19} of a linear BCS game arising from a hyperlinear, non-residually finite group to obtain the number of vertices in $G$, although we have not done this here, since the linear system involves $184$ equations and $235$ variables. The game arising from \cite{JNVWY20} is not explicit, and as a result, the graph $H$ is not explicit either. That being said, this does show that $\chi_q$, $\chi_{qa}$ and $\chi_{qc}$ are all distinct chromatic numbers, which was not previously known. In fact, more is true:

\begin{corollary}
\label{corollary: chromatic numbers distinct}
The quantities $\chi_{loc}$, $\chi_q$, $\chi_{qa}$, $\chi_{qc}$, $\chi_{C^*}$, and $\chi_{alg}$ are all distinct.
\end{corollary}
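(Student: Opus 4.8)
The plan is to use the reduction supplied by Corollary~\ref{corollary: coloring equivalence} to convert each known separation between two of the models into a separation between the corresponding chromatic numbers. The underlying principle is the following: if $t_1 \le t_2$ in the ordering $loc \le q \le qa \le qc \le C^* \le hered \le alg$, and if there is a synchronous game $\cG$ with a winning $t_2$-strategy but no winning $t_1$-strategy, then $\chi_{t_1} \ne \chi_{t_2}$ as functions on finite simple graphs. Indeed, by Remark~\ref{remark: lambda(a,a,x,x)=1} we may assume that $\cG=(I,O,\lambda)$ has $|I|\ge 2$, $|O|\ge 3$ and $\lambda$ asymmetric without changing $\cA(\cG)$, so the graph $G_\lambda$ of Section~\ref{section: equivalence} is defined; then Corollary~\ref{corollary: coloring equivalence}(1) gives $\chi_{t_2}(G_\lambda)=3$ while Corollary~\ref{corollary: coloring equivalence}(2) gives $\chi_{t_1}(G_\lambda)\ge 4$. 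Since the six quantities satisfy $\chi_{loc}\ge\chi_q\ge\chi_{qa}\ge\chi_{qc}\ge\chi_{C^*}\ge\chi_{alg}$ pointwise, in order to conclude that they are pairwise distinct it suffices to exhibit, for each of the five consecutive pairs, a single graph on which the two differ.

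Four of these five pairs are handled directly by the examples already set up in the paper together with the principle above. For $(loc,q)$ we take the Mermin--Peres magic square game of Example~\ref{example: mermin-peres}, which has a winning $q$-strategy but no winning $loc$-strategy \cite{Mer90}, yielding a graph $G$ with $\chi_q(G)=3<\chi_{loc}(G)$. For $(q,qa)$ we apply the principle to Slofstra's synchronous linear binary constraint system game, which has a winning $qa$-strategy but no winning $q$-strategy \cite{Slo19,KPS18}. For $(qa,qc)$ we apply it to the synchronous game coming from $\mathrm{MIP}^*=\mathrm{RE}$, which has a winning $qc$-strategy but no winning $qa$-strategy \cite{JNVWY20}. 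For $(C^*,alg)$ we invoke Example~\ref{example: 338 vertices}: the game $\text{Hom}(K_5,K_4)$ has a winning $alg$-strategy (since $\chi_{alg}(K_5)=4$ by \cite{HMPS19}) but no winning $hered$-strategy, so its associated graph $G_\lambda$ satisfies $\chi_{alg}(G_\lambda)=3$ and $\chi_{hered}(G_\lambda)\ge 4$, and since $\chi_{C^*}\ge\chi_{hered}$ pointwise we obtain $\chi_{C^*}(G_\lambda)\ge 4>3=\chi_{alg}(G_\lambda)$.

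The remaining and hardest pair is $(qc,C^*)$: here we must produce a synchronous game $\cG$ that has a winning $C^*$-strategy but no winning $qc$-strategy, i.e.\ one whose game $*$-algebra $\cA(\cG)$ admits a unital $*$-homomorphism onto some unital $C^*$-algebra but admits no unital $*$-homomorphism onto a unital $C^*$-algebra carrying a tracial state; granted this, $G_\lambda$ satisfies $\chi_{C^*}(G_\lambda)=3$ by Theorem~\ref{theorem: game subequivalent to coloring} and $\chi_{qc}(G_\lambda)\ge 4$ by Theorem~\ref{theorem: coloring game weakly subequivalent to original game}, giving the last strict inequality. I would construct such a $\cG$ so that suitable words in its generators $e_{a,x}$ satisfy Cuntz-type relations, forcing every unital $C^*$-quotient of $\cA(\cG)$ to contain a unital copy of a traceless simple $C^*$-algebra such as $\cO_2$ and hence to be traceless, while still retaining at least one $C^*$-representation. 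The delicate point --- and the main obstacle --- is exactly this tracelessness verification: ruling out a winning $qc$-strategy is a statement about \emph{all} $C^*$-completions of $\cA(\cG)$ simultaneously, not about any single representation, so it has to be engineered into the construction rather than checked afterward.
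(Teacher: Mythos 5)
Your general reduction principle is correct and matches the paper's: Remark~\ref{remark: lambda(a,a,x,x)=1} lets you normalize $\cG$, and then Corollary~\ref{corollary: coloring equivalence} converts a gap between $t_1$- and $t_2$-strategies for $\cG$ into a gap between $\chi_{t_1}(G_\lambda)$ and $\chi_{t_2}(G_\lambda)$. Your treatment of the four pairs $(loc,q)$, $(q,qa)$, $(qa,qc)$, and $(C^*,alg)$ via the Mermin--Peres game, Slofstra's game, the game from $\mathrm{MIP}^*=\mathrm{RE}$, and Example~\ref{example: 338 vertices} respectively is exactly what the paper does (the last pair being the first step of the paper's own proof of this corollary).

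However, you have a genuine gap at the remaining pair $(qc,C^*)$: you correctly identify that one needs a synchronous game with a winning $C^*$-strategy but no winning $qc$-strategy, but you do not produce one. Your sketch of building Cuntz-type relations into $\cA(\cG)$ is at best a heuristic --- as you yourself note, ruling out a winning $qc$-strategy requires that \emph{no} unital $C^*$-quotient of $\cA(\cG)$ carry a tracial state, and you give no construction that achieves this, only a description of what the construction would have to do. The paper closes this gap by invoking an (unpublished) result of Paddock and Slofstra \cite{PS21}, who exhibit precisely such a synchronous game. Without either citing their construction or supplying your own complete one, the argument does not establish $\chi_{qc}\neq\chi_{C^*}$, and hence does not establish the full corollary.
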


\begin{proof}
Example \ref{example: 338 vertices} exhibits a graph $G$ with $\chi_{alg}(G)=3<\chi_{hered}(G) \leq \chi_{C^*}(G)$, so $\chi_{alg}$ and $\chi_{C^*}$ are distinct. It remains to show that $\chi_{qc}$ and $\chi_{C^*}$ are distinct. But this claim is achieved using a synchronous game of Paddock and Slofstra \cite{PS21} that has a winning $C^*$-strategy, but no winning $qc$-strategy.
\end{proof}

\begin{corollary}
\label{corollary: undecidable}
For $t \in \{q,qa,qc,C^*,hered,alg\}$, it is undecidable to determine whether a graph $G$ satisfies $\chi_t(G) \leq 3$.
\end{corollary}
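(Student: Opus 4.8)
The plan is to derive this as a direct consequence of the explicit reduction built in Section~\ref{section: equivalence} together with the known undecidability of detecting winning strategies for synchronous games.

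First I would record that the assignment $\cG\mapsto G_{\lambda}$ is computable. Given a finite synchronous game $\cG=(I,O,\lambda)$, Remark~\ref{remark: lambda(a,a,x,x)=1} lets us first replace it, by an effective modification that does not change the game $*$-algebra, with a synchronous game satisfying $|I|\ge 2$, $|O|\ge 3$ and $\lambda$ asymmetric; then the identifications and adjacency relations (\ref{1,2 identification})--(\ref{1,2 entry of orthogonality rooks}) describe the vertex and edge sets of $G_{\lambda}$ effectively (indeed, the count $3+n+9n(k-2)+6|\cE_{\lambda}|+6|\cF_{\lambda}|$ shows the construction is polynomial-time). By Corollary~\ref{corollary: coloring equivalence}, $\cG$ has a winning $t$-strategy if and only if $\chi_t(G_{\lambda})=3$; and since $G_{\lambda}$ contains the triangle $\{A,B,C\}$ we have $\chi_t(G_{\lambda})\ge\chi_{alg}(G_{\lambda})\ge 3$ for every $t$, so the condition $\chi_t(G_{\lambda})=3$ coincides with $\chi_t(G_{\lambda})\le 3$. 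Hence, were the predicate ``$\chi_t(G)\le 3$'' on finite graphs decidable, composing a decision procedure for it with $\cG\mapsto G_{\lambda}$ would decide the predicate ``$\cG$ has a winning $t$-strategy'' on finite synchronous games. So it remains to cite the undecidability of this latter predicate for each of the six listed models.

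For $t\in\{q,qa\}$ this is a consequence of $\mathrm{MIP}^*=\mathrm{RE}$ \cite{JNVWY20} (and, for $t=q$, already of \cite{Slo19}): there is a computable map sending a Turing machine to a synchronous game that has a winning $q$-strategy --- hence also a winning $qa$-strategy --- if the machine halts, but whose $qa$-value is at most $\tfrac12$, and so which has no winning $qa$-strategy, otherwise; thus the Halting Problem reduces to both the $q$- and $qa$-strategy problems. For $t=qc$ one argues similarly using the commuting-operator value, whose equality to $1$ is co-r.e.-hard by the same circle of results, a winning $qc$-strategy being exactly a witness of commuting-operator value $1$. For $t\in\{C^*,hered,alg\}$ I would instead use the undecidability of the triviality problem for finitely presented groups, together with the standard construction associating to a finitely presented group $\Gamma$ a synchronous game $\cG_{\Gamma}$ with $\cA(\cG_{\Gamma})\ne(0)$ precisely when $\Gamma$ is nontrivial (compare \cite{Slo19,KPS18,HMPS19}): when $\Gamma\ne\{1\}$ the trivial representation makes $C^*(\Gamma)$ a nonzero $C^*$-algebra quotient of the game algebra, so $\cG_{\Gamma}$ has a winning $C^*$-strategy and therefore winning $hered$- and $alg$-strategies as well, while when $\Gamma=\{1\}$ one has $\cA(\cG_{\Gamma})=(0)$ and hence no winning strategy in any model. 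Combining these with the reduction above yields the stated undecidability for all six models. (The model $t=loc$ is genuinely excluded: a synchronous game has a winning $loc$-strategy exactly when it admits a perfect shared deterministic strategy, of which there are only finitely many to check, so ``$\chi_{loc}(G)\le 3$'' is decidable --- it is the graph $3$-coloring problem.)

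I do not anticipate a real obstacle inside the argument, as all of the mathematical content sits in Corollary~\ref{corollary: coloring equivalence} and in the cited undecidability theorems; the proof is purely a transfer of undecidability along a computable reduction. The two points demanding care are keeping the direction of the reduction straight --- we reduce \emph{from} synchronous games \emph{to} graphs, so undecidability propagates from the winning-strategy problem to the $3$-coloring problem --- and ensuring that every one of the six models has a documented undecidability result, with the $hered$ and $alg$ cases in particular relying on the group-theoretic (word-problem) route rather than on the analytic $\mathrm{MIP}^*=\mathrm{RE}$ machinery.
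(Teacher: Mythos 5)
Your overall strategy --- observe that $\cG\mapsto G_{\lambda}$ is an effective reduction, invoke Corollary \ref{corollary: coloring equivalence} to turn ``winning $t$-strategy'' into ``$\chi_t(G_{\lambda})\le 3$'', and then import known undecidability of the winning-strategy problem --- is exactly the paper's strategy, and that part is fine. The trouble is concentrated in the three models $t\in\{C^*,hered,alg\}$.

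Your input there is ``a standard construction associating to a finitely presented group $\Gamma$ a synchronous game $\cG_{\Gamma}$ with $\cA(\cG_{\Gamma})\ne(0)$ precisely when $\Gamma$ is nontrivial.'' No such construction is in \cite{Slo19,KPS18,HMPS19}, and there is a basic obstruction to producing one naively: if the game algebra encodes $\Gamma$ by quotienting onto something like $\bC[\Gamma]$ or $C^*(\Gamma)$, then the trivial representation $\Gamma\to\{1\}$ always gives a nonzero quotient, regardless of whether $\Gamma$ is trivial, so the algebra never vanishes. The linear-system constructions that actually appear in those references encode a \emph{distinguished central element $J$ of order two} and produce a game whose algebra is (a twisted version of) $\bC[\Gamma]/(J+1)$; this is zero iff $J=1$ in $\Gamma$, and the relevant undecidable question is nontriviality of $J$, not nontriviality of $\Gamma$. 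Your sketch as stated would require a new argument, and the asserted ``when $\Gamma=\{1\}$ one has $\cA(\cG_{\Gamma})=(0)$'' is not justified.

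The paper closes this gap by a different route that you might not have expected: starting from a synchronous linear BCS game, it passes (via \cite{BCEHPSW20,G21}) to a $*$-equivalent graph isomorphism game $\text{Iso}(G_{A,b},G_{A,0})$, and then invokes the special feature of isomorphism games, proved in \cite{BCEHPSW20}, that the existence of winning $qc$, $C^*$, hereditary, and algebraic strategies are all \emph{equivalent}. This collapses the four top levels for that class of games, so the single undecidability result for $qc$ (Slofstra \cite{Slo20}) propagates to all of $\{qc,C^*,hered,alg\}$ before Corollary \ref{corollary: coloring equivalence} is applied. That collapse is the key idea you are missing, and it is what makes the $C^*$, hered, and alg cases go through without a fresh source of undecidability. (Two smaller remarks: for $q,qa$ the paper cites Slofstra \cite{Slo19} rather than $\mathrm{MIP}^*=\mathrm{RE}$, which sidesteps any need to argue that the $\mathrm{MIP}^*=\mathrm{RE}$ games can be synchronized; and for $qc$ the appeal to ``the same circle of results'' should be the concrete citation \cite{Slo20}, which establishes undecidability of perfect commuting-operator strategies directly.)
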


\begin{proof}
Theorems of Slofstra show that, for $t \in \{q,qa,qc\}$, it is undecidable to determine whether (non-synchronous) linear system games have a winning $t$-strategy; see \cite{Slo19} for the cases $t \in \{q,qa\}$ and \cite{Slo20} for the case $t=qc$. The synchronous version of this game, $\text{syncBCS}(A,b)$, has a winning $t$-strategy for $t \in \{q,qa,qc\}$, if and only if the non-synchronous version has a winning $t$-strategy \cite{KPS18}. Using those results in conjunction with Corollary \ref{corollary: coloring equivalence} yields the desired result for $t \in \{q,qa,qc\}$.

For the other cases, we make particular use of the fact that the game $\cG$ for $t=qc$ can be arranged to be the synchronous version $\text{syncBCS}(A,b)$ of a linear system game. That game, in turn, is $*$-equivalent to the graph isomorphism game $\text{Iso}(G_{A,b},G_{A,0})$ \cite{BCEHPSW20,G21}. For the graph isomorphism game, however, the existence of a winning $qc$ strategy, a winning $C^*$ strategy, a winning hereditary strategy and a winning algebraic strategy are all equivalent \cite{BCEHPSW20}. So the undecidability passes to the graph isomorphism game for each of $t \in \{qc,C^*,hered,alg\}$, as all of these decision problems reduce to determining whether a syncBCS game has a winning $qc$ strategy. Applying Corollary \ref{corollary: coloring equivalence} yields the undecidability of determining whether $\chi_t(G) \leq 3$ for $t \in \{C^*,hered,alg\}$.
\end{proof}

Part of the utility of transforming synchronous games into graph coloring games is that, when graph coloring have winning strategies in any of the tracial models (i.e. for $t \in \{loc,q,qa,qc\}$), they can be won in a way that an honest verifier cannot gain any information about the strategy used. The key mathematical point here is the following proposition.

\begin{proposition}
\label{proposition: perfect zero knowledge}
Suppose that $G$ is a graph. Let $k \geq 2$ and $t \in \{loc,q,qa,qc\}$. If the $k$-coloring game $\text{Hom}(G,K_k)$ has a winning $t$-strategy, then there exists a winning $t$-strategy for $\text{Hom}(G,K_k)$ satisfying
\begin{align}
p(a,b|x,x)&=\begin{cases} 0 & a \neq b \\ \frac{1}{k} & a=b, \end{cases} \label{equation: perfect zero knowledge synchronous} \\
p(a,b|x,y)&=\begin{cases} 0 & x \sim y \text{ and } a=b \\ \frac{1}{k(k-1)} & x \sim y \text{ and } a\neq b.\end{cases} \label{equation: perfect zero knowledge adjacent}
\end{align}
\end{proposition}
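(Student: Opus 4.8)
The plan is to symmetrize a given winning strategy over all permutations of the color set, exploiting the fact that every permutation of $\{1,\dots,k\}$ is an automorphism of $K_k$ and hence carries winning strategies of $\text{Hom}(G,K_k)$ to winning strategies of the \emph{same} game. First I would invoke the tracial description of synchronous strategies recalled in Section~\ref{section: preliminaries}: a winning $t$-strategy for $\text{Hom}(G,K_k)$ with $t\in\{loc,q,qa,qc\}$ is given by projections $E_{a,x}\in\cA$ (with $a$ a color and $x$ a vertex of $G$) satisfying $\sum_a E_{a,x}=1_{\cA}$ for every $x$, a tracial state $\tau$ on $\cA$ (with $\cA$ abelian, finite-dimensional, equal to $\cR^{\cU}$, or a $C^*$-algebra with faithful trace, according to $t$), and $p(a,b|x,y)=\tau(E_{a,x}E_{b,y})$; moreover $E_{a,x}E_{a,y}=0$ whenever $x\sim y$, since $(a,a)$ is not an edge of $K_k$. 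For each permutation $g\in S_k$ set $E^{g}_{a,x}:=E_{g(a),x}$; these are again projections summing to $1_{\cA}$ over $a$, with $E^{g}_{a,x}E^{g}_{a,y}=0$ for $x\sim y$, so $(\cA,\tau,E^g)$ is again a winning $t$-strategy for $\text{Hom}(G,K_k)$, with associated correlation $p_g(a,b|x,y)=p(g(a),g(b)|x,y)$. I then put $\bar p=\frac1{k!}\sum_{g\in S_k}p_g$. Each $C^s_t(n,k)$ is convex (for $t=q$ this is a direct-sum-of-algebras argument, and $C^s_{qa}=\overline{C^s_q}$ is convex as the closure of a convex set; alternatively one builds $\bigoplus_{g\in S_k}\cA$ with the averaged tracial state directly), and a convex combination of winning strategies is a winning strategy, so $\bar p$ is a winning $t$-strategy for $\text{Hom}(G,K_k)$.

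Next I would compute $\bar p$ on the diagonal $x=y$ and on edges $x\sim y$. Each $p_g$ is synchronous, so $\bar p(a,b|x,x)=0$ for $a\neq b$; for $a=b$, as $g$ runs over $S_k$ the color $g(a)$ takes each value exactly $(k-1)!$ times, so
\[
\bar p(a,a|x,x)=\frac{(k-1)!}{k!}\sum_{c=1}^{k}\tau(E_{c,x})=\frac1k\,\tau\Big(\sum_c E_{c,x}\Big)=\frac1k,
\]
which gives \eqref{equation: perfect zero knowledge synchronous}. For $x\sim y$ one has $\bar p(a,a|x,y)=\frac1{k!}\sum_g\tau(E_{g(a),x}E_{g(a),y})=0$, since every summand vanishes by the adjacency rule. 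Finally, for $x\sim y$ and $a\neq b$, as $g$ runs over $S_k$ the ordered pair $(g(a),g(b))$ takes each ordered pair $(c,d)$ with $c\neq d$ exactly $(k-2)!$ times, so
\[
\bar p(a,b|x,y)=\frac{(k-2)!}{k!}\sum_{c\neq d}\tau(E_{c,x}E_{d,y})=\frac1{k(k-1)}\Big(\tau\big(\textstyle\sum_{c}E_{c,x}\,\sum_{d}E_{d,y}\big)-\sum_{c}\tau(E_{c,x}E_{c,y})\Big),
\]
where the first term equals $\tau(1_{\cA})=1$ and the second is $0$ by the adjacency rule; hence $\bar p(a,b|x,y)=\frac1{k(k-1)}$, which is \eqref{equation: perfect zero knowledge adjacent}.

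There is no genuinely hard step here; the two points that require a little care are (i) that one must average over the full symmetric group $S_k$ rather than, say, the cyclic group $\bZ_k$, because only $2$-transitivity of the averaging group equidistributes the pairs $(g(a),g(b))$ with $a\neq b$ over all ordered pairs of distinct colors; and (ii) the convexity of $C^s_q(n,k)$ and $C^s_{qa}(n,k)$, which I would justify by forming direct sums of the representing algebras with the correspondingly weighted tracial states and, for $qa$, passing to the closure. Everything else is the bookkeeping displayed above.
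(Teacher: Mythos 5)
Your symmetrization over $S_k$ is exactly the paper's construction: the paper packages the same $k!$ conjugate strategies as block-diagonal projections $F_{a,x}=\bigoplus_{\sigma\in S_k}E_{\sigma(a),x}$ in $M_{k!}(\cA)$ with the trace $\tr_{k!}\otimes\tau$, which yields precisely the averaged correlation $\bar p=\frac{1}{k!}\sum_{g}p_g$, and the $(k-1)!/k!$ and $(k-2)!/k!$ counting computations are identical. Your convexity remark (or the equivalent direct-sum construction you mention) covers the same ground as the paper's case-by-case check that the amplified strategy stays in the appropriate model for each $t$, so the two proofs are essentially the same.
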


\begin{proof}
By assumption, there is a tracial $C^*$-algebra $(\cA,\tau)$ and projection-valued measures $\{E_{a,x}\}_{a=1}^k$ in $\cA$, for each $x \in V(G)$, such that $E_{a,x}E_{b,y}=0$ whenever $x \sim y$ and $a=b$. We define projections $F_{a,x} \in M_{k!}(\cA)$ by
\[ F_{a,x}=\bigoplus_{\sigma \in S_k} E_{\sigma(a),x},\]
where $S_k$ denotes the group of permutations on the set $\{1,2,...,k\}$. Define $\rho=\tr_{k!} \otimes \tau$, where $\tr_{k!}$ is the normalized trace on $M_{k!}$. Then $\rho$ is a trace on $M_{k!}(\cA)$, and each $F_{a,x}$ is a projection in $M_{k!}(\cA)$. Moreover,
\[ \sum_{a=1}^k F_{a,x}=\bigoplus_{\sigma \in S_k} \left( \sum_{a=1}^k E_{\sigma(a),x}\right)=\bigoplus_{\sigma \in S_k} \left(\sum_{b=1}^k E_{b,x}\right)=I_{k!} \otimes 1,\]
so $\{F_{a,x}\}_{a=1}^k$ is a PVM for each $x \in V(G)$. Since $E_{a,x}E_{a,y}=0$ whenever $x \sim y$, it follows that $E_{\sigma(a),x}E_{\sigma(a),y}=0$ whenever $x\sim y$ and $\sigma \in S_k$. Thus, $F_{a,x}F_{a,y}=0$ if $x \sim y$, so that the probability density $p(a,b|x,y)=\rho(F_{a,x}F_{b,y})=\frac{1}{k!} \sum_{\sigma \in S_k} \tau(E_{\sigma(a),x}E_{\sigma(b),y})$ defines a winning $qc$-strategy for $\text{Hom}(G,K_k)$. 
If $t=loc$, then we can assume that $[E_{a,x},E_{b,y}]=0$ for all $a,b,x,y$, and clearly we obtain $[F_{a,x},F_{b,y}]=0$ as well. If $t=q$, then we can arrange for $\cA$ to be a matrix algebra $M_d$. Then the strategy $(p(a,b|x,y))$ arises from $M_{k!}(M_d)=M_{k!d}$, and hence is a winning $q$-strategy for the game.
If $t=qa$, then we can arrange for $\cA$ to be $\cR^{\cU}$. By uniqueness of the weakly separable hyperfinite $II_1$ factor $\cR$ \cite{Co76} and the ultrapower construction, one has the isomorphisms $M_{k!}(\cR^{\cU}) \simeq (M_{k!}(\cR))^{\cU} \simeq \cR^{\cU}$, so the strategy $(p(a,b|x,y))$ is a winning $qa$-strategy. Thus, for any $t \in \{loc,q,qa,qc\}$, the new strategy involving $\{ \{F_{a,x}\}_{a,x},\rho\}$ is a winning $t$-strategy for $\text{Hom}(G,K_k)$ so long as the original strategy $\{ \{E_{a,x}\}_{a,x},\tau\}$ is.

It remains to check that equations (\ref{equation: perfect zero knowledge synchronous}) and (\ref{equation: perfect zero knowledge adjacent}) are satisfied by $(p(a,b|x,y))$. The fact that $p(a,b|x,x)=0$ for $a \neq b$ is immediate since $p$ is synchronous. If $a=b$, then
\[ p(a,a|x,x)=\rho(F_{a,x}F_{a,x})=\rho(F_{a,x})=\frac{1}{k!} \sum_{\sigma \in S_k} \tau(E_{\sigma(a),x}).\]
Given $b \in \{1,...,k\}$, the number of $\sigma \in S_k$ satisfying $\sigma(a)=b$ is $(k-1)!$, so we have
\[ p(a,a|x,x)=\frac{1}{k!} \sum_{b=1}^k (k-1)! \tau(E_{b,x})=\frac{1}{k} \sum_{b=1}^k \tau(E_{b,x})=\frac{1}{k}.\]

The fact that $p(a,a|x,y)=0$ for $x \sim y$ is also immediate by the rules of the game. If $a \neq b$ and $x \sim y$, then
\[ p(a,b|x,y)=\rho(F_{a,x}F_{b,y})=\frac{1}{k!} \sum_{\sigma \in S_k} \tau(E_{\sigma(a),x}E_{\sigma(b),y}).\]
Given a pair $(c,d) \in \{1,...,k\}$ with $c \neq d$, the number of $\sigma \in S_k$ satisfying $\sigma(a)=c$ and $\sigma(b)=d$ is $(k-2)!$, so a similar argument shows that
\[ p(a,b|x,y)=\frac{1}{k!} \sum_{\substack{1 \leq c,d \leq k \\ c \neq d}} (k-2)! \tau(E_{c,x}E_{d,y})=\frac{1}{k(k-1)} \sum_{c,d} \tau(E_{c,x}E_{d,y})=\frac{1}{k(k-1)},\]
where the last two equalities follow since $E_{c,x}E_{c,y}=0$ for all $c$, and $\sum_{c,d} E_{c,x}E_{d,y}=\left(\sum_c E_{c,x}\right)^2=1$.
\end{proof}

An honest verifier is a referee that only asks question pairs $(x,y)$ where there is the possibility of the players losing--that is, where $\lambda(a,b,x,y)=0$ for some pair $(a,b)$. Proposition \ref{proposition: perfect zero knowledge} shows that, for graph coloring games with winning strategies, there are always winning strategies such that the probabilities involved with correct answers give no information to the referee. Thus, if the referee only asks question pairs $(x,y)$ for which $\lambda(a,b,x,y)=0$ for some pair $(a,b)$, then the referee does not gain insight into what kind of strategy the players may have used. Hence, graph coloring games exhibit perfect zero knowledge for an honest verifier.

Combining Proposition \ref{proposition: perfect zero knowledge} with Corollary \ref{corollary: coloring equivalence}, if $\cG=(I,O,\lambda)$ and if $G_{\lambda}$ is the graph associated with $\cG$, then there is a winning $t$-strategy $\cG$ if and only if there is a winning $t$-strategy for $\text{Hom}(G_{\lambda},K_3)$ that exhibits perfect zero knowledge for an honest verifier. Thus, if the players are asked to win a non-local game with very few winning $t$-strategies, then they can instead try to win the associated $3$-coloring game with a strategy that gives no information to an honest verifier. For example, the Mermin-Peres magic square game is a self-test: all winning $q$-strategies for the game are, up to a dilation of the Hilbert space, arising from $M_2 \otimes M_2$ and the unique trace on $M_2 \otimes M_2$ \cite{Fr22}. On the other hand, the players can instead exhibit a winning $q$-strategy for the $3$-coloring game for the associated graph, which reveals no information to an honest verifier, except that the players can $3$-color the graph in the $q$ model. This shows the potential use of the game transformation in this section, and also that this equivalence of synchronous games to $3$-coloring games does not preserve certain properties about the set of winning strategies. 

We close this section by answering an open problem posed by Helton, Meyer, Paulsen and Satriano \cite{HMPS19}, and partially answering another one posed in \cite{HMPS19}. The first problem is if the problem of deciding whether $\chi_{alg}(G)=4$ is decidable.

\begin{corollary}
It is undecidable to determine whether $\chi_{alg}(G)=4$.
\end{corollary}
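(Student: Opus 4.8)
The plan is to derive this statement as an immediate corollary of Corollary \ref{corollary: undecidable} (in the case $t = alg$) together with the theorem of Helton--Meyer--Paulsen--Satriano \cite{HMPS19} that $\chi_{alg}(G) \le 4$ for every finite simple graph $G$. The point is that once $\chi_{alg}$ is known to take only the values in $\{1,2,3,4\}$, the predicate ``$\chi_{alg}(G) = 4$'' is nothing but the negation of the predicate ``$\chi_{alg}(G) \le 3$'', and the latter has already been shown to be undecidable.

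First I would recall that, by \cite{HMPS19}, every graph can be algebraically $4$-colored, so $\chi_{alg}(G) \le 4$ always holds; consequently, for every $G$ the statements $\chi_{alg}(G) = 4$ and $\neg\bigl(\chi_{alg}(G) \le 3\bigr)$ are equivalent. Thus, within the class of all finite simple graphs, $\{G : \chi_{alg}(G) = 4\}$ is exactly the complement of $\{G : \chi_{alg}(G) \le 3\}$. Since a subset of a computably enumerated family is decidable if and only if its complement is, and since Corollary \ref{corollary: undecidable} (case $t = alg$) asserts that membership in $\{G : \chi_{alg}(G) \le 3\}$ is undecidable, it follows at once that $\{G : \chi_{alg}(G) = 4\}$ is undecidable.

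Alternatively, and equivalently, I would run the reduction underlying Corollary \ref{corollary: undecidable} directly: given a linear system $(A,b)$, effectively construct the graph $G_{\lambda}$ associated (via Theorem \ref{theorem: coloring game weakly subequivalent to original game}) to the synchronous game $\text{syncBCS}(A,b)$; by Corollary \ref{corollary: coloring equivalence}, $\text{syncBCS}(A,b)$ has a winning algebraic strategy if and only if $\chi_{alg}(G_{\lambda}) = 3$, hence (as $\chi_{alg} \le 4$) if and only if $\chi_{alg}(G_{\lambda}) \ne 4$. Deciding whether $\text{syncBCS}(A,b)$ has a winning algebraic strategy is undecidable by \cite{BCEHPSW20, G21, Slo20}, so deciding ``$\chi_{alg}(G_{\lambda}) = 4$'' is undecidable, and a fortiori so is the decision problem over all graphs.

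There is essentially no genuine obstacle here: all the real work was absorbed into Corollary \ref{corollary: undecidable}. The only detail that warrants a sentence is that every $G_{\lambda}$ produced by the reduction contains the control triangle $\{A,B,C\}$, so $\chi_{alg}(G_{\lambda}) \ge 3$ and therefore $\chi_{alg}(G_{\lambda}) \in \{3,4\}$; this confirms that on the relevant class ``$\chi_{alg} = 4$'' and ``$\chi_{alg} > 3$'' coincide, legitimizing the negation/complement step.
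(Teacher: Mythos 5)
Your argument is correct and is essentially the same as the paper's: both use the fact from \cite{HMPS19} that $\chi_{alg}(G)\le 4$ for all graphs to rewrite ``$\chi_{alg}(G)=4$'' as the negation of ``$\chi_{alg}(G)\le 3$'' and then invoke Corollary \ref{corollary: undecidable} for $t=alg$. The extra remarks you make (decidability is closed under complement, and the constructed graphs $G_{\lambda}$ contain a triangle so that $\chi_{alg}(G_{\lambda})\in\{3,4\}$) are sound but not needed beyond what the paper records.
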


\begin{proof}
Since $\chi_{alg}(G) \leq 4$ for all graphs $G$ \cite{HMPS19}, this problem is equivalent to showing that $\chi_{alg}(G)>3$, and this problem being the negation of determining whether $\chi_{alg}(G) \leq 3$, is undecidable by Corollary \ref{corollary: undecidable}.
\end{proof}

The second problem we address from \cite{HMPS19} is whether the \textbf{locally commuting chromatic number} from \cite{HMPS19}, denoted $\chi_{lc}$, is comparable to any of the chromatic numbers $\chi_t$, $t \in \{loc,q,qa,qc,C^*,hered,alg\}$. For a graph $G$ and a number $c \in \bN$, one considers the universal unital $*$-algebra $\cA(\text{Hom}_{lc}(G,K_c))$ generated by self-adjoint idempotents $e_{a,x}$, $x \in V(G)$, $a \in \{1,...,c\}$, satisfying:
\begin{itemize}
\item $\displaystyle \sum_{a=1}^c e_{a,x}=1$;
\item $e_{a,x}e_{b,x}=0$ if $a\neq b$;
\item $e_{a,x}e_{a,y}=0$ if $x \sim y$ in $G$; and
\item $[e_{a,x},e_{b,y}]=0$ if $x \sim y$ in $G$.
\end{itemize}

One then defines $\chi_{lc}(G)$ to be the smallest number $c$ for which $\cA(\text{Hom}_{lc}(G,K_c))$ is non-trivial. For $3$-colorings, projections corresponding to adjacent vertices automatically commute by Proposition \ref{proposition: key three projection proposition, orthogonal version}. Thus, if $t \in \{loc,q,qa,qc,C^*,hered,alg\}$ and $G$ is a graph with $\chi_t(G) \leq 3$, then $\cA(\text{Hom}_{lc}(G,K_3))$ is non-trivial, and hence $\chi_{lc}(G) \leq 3$. This would suggest that, if $\chi_{lc}$ were to compare in general to any of the other quantum chromatic numbers $\chi_t$, then one would have $\chi_{lc} \leq \chi_t$. While we do not resolve this problem here, our work shows that $\chi_{lc}$ is distinct from all of these chromatic numbers.

\begin{corollary}
There exists a graph $G$ with $\chi_{lc}(G)=3<\chi_{hered}(G)$. In particular, $\chi_{lc} \neq \chi_t$ for all $t \in \{loc,q,qa,qc,C^*,hered\}$.
\end{corollary}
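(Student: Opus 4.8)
The plan is to take $G$ to be the graph $G_{\lambda}$ constructed in Example \ref{example: 338 vertices} from $K_5$, which already satisfies $\chi_{alg}(G_{\lambda})=3<\chi_{hered}(G_{\lambda})$, and to show that in fact $\chi_{lc}(G_{\lambda})=3$ too. The crucial structural input, noted in the paragraph preceding the statement, is that for three colours the extra relations $[e_{a,x},e_{b,y}]=0$ (for $x\sim y$) in the definition of $\cA(\text{Hom}_{lc}(G,K_3))$ are automatic in $\cA(\text{Hom}(G,K_3))$ by Proposition \ref{proposition: key three projection proposition, orthogonal version}. Hence, for every graph $G$, the canonical surjection $\cA(\text{Hom}(G,K_3))\to\cA(\text{Hom}_{lc}(G,K_3))$ is a $*$-isomorphism; and in general $\cA(\text{Hom}_{lc}(G,K_c))$ is a quotient of $\cA(\text{Hom}(G,K_c))$ (same relations, plus commutators).

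First I would record two consequences. From the isomorphism at $c=3$: $\cA(\text{Hom}_{lc}(G,K_3))\neq (0)$ if and only if $\cA(\text{Hom}(G,K_3))\neq (0)$, i.e. if and only if $\chi_{alg}(G)\le 3$. From the quotient statement at $c=2$: if $\cA(\text{Hom}(G,K_2))=(0)$ then $\cA(\text{Hom}_{lc}(G,K_2))=(0)$. Applying these to $G=G_{\lambda}$: since $\chi_{alg}(G_{\lambda})=3$ we have $\cA(\text{Hom}(G_{\lambda},K_3))\neq (0)$, hence $\cA(\text{Hom}_{lc}(G_{\lambda},K_3))\neq (0)$ and $\chi_{lc}(G_{\lambda})\le 3$; and since $\chi_{alg}(G_{\lambda})=3>2$ forces $\cA(\text{Hom}(G_{\lambda},K_2))=(0)$, we get $\cA(\text{Hom}_{lc}(G_{\lambda},K_2))=(0)$, hence $\chi_{lc}(G_{\lambda})\ge 3$. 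Therefore $\chi_{lc}(G_{\lambda})=3$.

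Combining with Example \ref{example: 338 vertices} gives $\chi_{lc}(G_{\lambda})=3<\chi_{hered}(G_{\lambda})$, and $\chi_{hered}(G_{\lambda})\ge 4$. For the ``in particular'' clause, recall from the preliminaries that $\chi_{loc}\ge\chi_q\ge\chi_{qa}\ge\chi_{qc}\ge\chi_{C^*}\ge\chi_{hered}$, so every $t\in\{loc,q,qa,qc,C^*,hered\}$ satisfies $\chi_t(G_{\lambda})\ge\chi_{hered}(G_{\lambda})\ge 4>3=\chi_{lc}(G_{\lambda})$, whence $\chi_{lc}\neq\chi_t$ as functions on graphs. I do not expect a genuine obstacle: the only points requiring a word of care are the identification $\cA(\text{Hom}_{lc}(G,K_3))\cong\cA(\text{Hom}(G,K_3))$, which is immediate from Proposition \ref{proposition: key three projection proposition, orthogonal version}, and the fact that $\chi_{alg}(G_{\lambda})$ genuinely equals $3$ and not something smaller --- this is recorded in Example \ref{example: 338 vertices} and is in any case forced by the presence of the triangle $\{A,B,C\}$ in $G_{\lambda}$, since $K_3$ admits no algebraic $2$-colouring. (Note that $\chi_{lc}$ and $\chi_{alg}$ agree on this particular $G_{\lambda}$, which is precisely why $t=alg$ is excluded from the statement.)
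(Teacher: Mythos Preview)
Your proposal is correct and follows essentially the same approach as the paper: use the graph $G_{\lambda}$ from Example \ref{example: 338 vertices}, invoke Proposition \ref{proposition: key three projection proposition, orthogonal version} to see that the locally commuting relations are automatic at $c=3$, and then read off the inequality chain for the ``in particular'' clause. You are somewhat more explicit than the paper about the lower bound $\chi_{lc}(G_{\lambda})\ge 3$ (arguing via the quotient $\cA(\text{Hom}(G_{\lambda},K_2))\twoheadrightarrow\cA(\text{Hom}_{lc}(G_{\lambda},K_2))$ and $\chi_{alg}(G_{\lambda})=3$), which the paper leaves implicit in the phrase ``the above discussion shows that $\chi_{lc}(G)=3$''.
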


\begin{proof}
Since $\chi_{hered} \leq \chi_t$, it suffices to show the first claim. Example \ref{example: 338 vertices} exhibits a graph $G$ with $\chi_{alg}(G)=3$, but $\chi_{hered}(G) \geq 4$. The above discussion shows that $\chi_{lc}(G)=3$ as well, so we are done.
\end{proof}

We suspect, for each $t \in \{loc,q,qa,qc,C^*,hered\}$, there exists a graph $G$ for which $\chi_t(G)<\chi_{lc}(G)$, but we have not pursued this avenue here. The natural context to consider is when $\chi_t(G) \geq 4$, as local commutativity is automatic for $3$-coloring games. We also note that $\chi_{alg}(K_5)=4<\chi_{lc}(K_5)=5$ \cite{HMPS19}, so $\chi_{lc}$ is a genuinely different chromatic number than all of the others considered here.

\section{Independence and Clique Numbers}\label{section: independence and clique}

In this section, we show that synchronous games can also be transformed into games involving the independence number and clique number of graphs. Unlike the equivalences in the previous section, the constructions for this section are more succinct and rely on a construction of Atserias et al \cite{AMRSSV19}.

\begin{definition}
\cite{AMRSSV19} Let $\cG=(I,O,\lambda)$ be a synchronous non-local game. Then the \textbf{graph of the game} $\cG$, denoted $X(\cG)$, has vertex set $V(X(\cG))=O \times I$ and edge set 
\[E(X(\cG))=\{ ((a,x),(b,y)) \in (O \times I)^2:(a,x) \neq (b,y), \text{ and } \lambda(a,b,x,y) \lambda(b,a,y,x)=0\}.\] In other words, $(a,x) \sim (b,y)$ in $X(\cG)$ if and only if $(a,x) \neq (b,y)$ and either $\lambda(a,b,x,y)=0$ or $\lambda(b,a,x,y)=0$.
\end{definition}

\begin{remark}
Depending on the definition of $\cG$, the edge set of $X(\cG)$ may not capture all of the rules of the game. Indeed, the disallowed $4$-tuples that are missed are those of the form $(a,a,x,x)$. However, if $\lambda(a,a,x,x)=0$, then as in Remark \ref{remark: lambda(a,a,x,x)=1}, we can replace this rule with $\lambda(a,b,x,y)=0$ for some $y \in I \setminus \{x\}$ and for all $b \in O$. The resulting game algebra will still be $*$-isomorphic to $\cA(\cG)$. Hence, as long as $|I| \geq 2$, we will assume without loss of generality that $\lambda(a,a,x,x)=1$ for all $a,x$. By Remark \ref{remark: lambda(a,a,x,x)=1}, if $|I|=1$, then we can enlarge the game, while having a $*$-isomorphic game algebra, so that $|I| \geq 2$. Hence, we may assume that $|I|  \geq 2$ and $\lambda(a,a,x,x)=1$ for all $a,x$.
\end{remark}

We recall that, for all the usual models $t \in \{loc,q,qa,qc,C^*,hered,alg\}$, one defines $\alpha_t(G)$ to be the largest integer $m$ for which there exists a winning $t$-strategy for $\text{Hom}(K_m,\overline{G})$. Then there is the chain of inequalities:
\begin{equation}
\alpha(G)=\alpha_{loc}(G) \leq \alpha_q(G) \leq \alpha_{qa}(G) \leq \alpha_{qc}(G) \leq \alpha_{C^*}(G) \leq \alpha_{hered}(G) \leq \alpha_{alg}(G) \label{independence number inequalities}
\end{equation}
The clique number $\omega(G)$ of the graph $G$ is the largest integer $m$ for which there is a homomorphism $K_m \to G$. In the same way, one defines $\omega_t(G)$ for $t \in \{loc,q,qa,qc,C^*,hered,alg\}$. Evidently the clique number satisfies $\omega_t(G)=\alpha_t(\overline{G})$, so the analogue of (\ref{independence number inequalities}) holds for the clique number.

Due to work in synchronous BCS games, we know that the first two inequalities are not equalities in general. The main result of this section is that any non-local game is hereditarily $*$-equivalent to a game involving the independence number. As a result, the first five quantities in (\ref{independence number inequalities}) are all distinct. (The last two are automatically distinct as a result of \cite{HMPS19}; see Remark \ref{remark: alg independence infinity}).

To start, we need a basic observation on hereditary winning strategies. This result is akin to ``projective packings" in \cite{AMRSSV19}.

\begin{proposition}
\label{proposition: projective packing}
Let $\cG=(I,O,\lambda)$ be synchronous, and let $\{ p_{a,x}\}_{(a,x) \in O \times I}$ be a set of projections in a hereditary unital $*$-algebra $\cA$ such that $p_{a,x}p_{b,y}=0$ for all $a,b,x,y$ with $\lambda(a,b,x,y)=0$. Then $\displaystyle \sum_{(a,x) \in O \times I} p_{a,x} \leq |I|$.
\end{proposition}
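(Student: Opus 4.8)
The plan is to group the sum according to the input variable $x$, observe that each block is a single projection, and then bound it above by $1$. First I would fix $x \in I$ and set $q_x := \sum_{a \in O} p_{a,x}$ (a finite sum, since $O$ is finite). Synchronicity of $\cG$ means $\lambda(a,b,x,x)=0$ whenever $a \neq b$, so the hypothesis gives $p_{a,x}p_{b,x}=0$ for $a \neq b$; taking adjoints also yields $p_{b,x}p_{a,x}=0$. Hence $\{p_{a,x}\}_{a \in O}$ is a family of pairwise orthogonal self-adjoint idempotents, and the computation $q_x^2 = \sum_{a,b \in O} p_{a,x}p_{b,y}\big|_{y=x} = \sum_{a \in O} p_{a,x}^2 = \sum_{a \in O} p_{a,x} = q_x$ together with $q_x^* = q_x$ shows that $q_x$ is itself a projection in $\cA$.

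Next I would note that $1-q_x$ is again a self-adjoint idempotent, hence positive in $\cA$ because $1-q_x = (1-q_x)^*(1-q_x)$; thus $q_x \leq 1$ in $\cA$ for every $x \in I$. Summing these $|I|$ inequalities gives
\[ |I|\cdot 1 - \sum_{(a,x) \in O \times I} p_{a,x} = \sum_{x \in I}(1 - q_x), \]
which is a finite sum of positive elements and therefore positive, since the positive elements of a unital $*$-algebra are closed under addition. Rearranging yields $\sum_{(a,x) \in O \times I} p_{a,x} \leq |I|$, as claimed.

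There is no substantive obstacle here: the only point needing a little care is checking that each block sum $q_x$ is a genuine projection, which requires the orthogonality $p_{a,x}p_{b,x}=0$ \emph{and} $p_{b,x}p_{a,x}=0$, the latter obtained by passing to adjoints. I would also remark that hereditariness of $\cA$ is not actually invoked in this particular argument — only the cone structure of the positive elements, which holds in any unital $*$-algebra — but it is the natural ambient hypothesis for the positivity-based constructions that follow.
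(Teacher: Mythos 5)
Your proof is correct and follows essentially the same route as the paper: group the sum by the input $x$, use synchronicity to see that $q_x=\sum_{a\in O}p_{a,x}$ is a projection, and then observe that $|I|-\sum_{a,x}p_{a,x}=\sum_{x\in I}(1-q_x)$ is a sum of projections, hence positive. Your side remark is also accurate — the paper's own argument only uses that the positive elements form a cone, not hereditariness, which is invoked only in the subsequent proposition where a vanishing sum of projections must force each summand to vanish.
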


\begin{proof}
Note that $p_{a,x}p_{b,x}=0$ for $a \neq b$, since $\lambda(a,b,x,x)=0$. Thus, $P_x=\sum_{a \in O} p_{a,x}$ is an orthogonal projection in $\cA$ for each $x \in I$. It follows that $1-P_x$ is an orthogonal projection in $\cA$. Notice that
\[ |I|-\sum_{(a,x) \in O \times I} p_{a,x}=|I|-\sum_{x \in I} P_x=\sum_{x \in I} (1-P_x) \geq 0,\]
so that $\sum_{a,x} p_{a,x} \leq |I|$.
\end{proof}

As a result of Proposition \ref{proposition: projective packing}, we always have the following upper bound on $\alpha_{hered}(X(\cG))$.

\begin{proposition}
\label{proposition: upper bound on hereditary independence number}
Let $\cG=(I,O,\lambda)$ be a synchronous game with $\lambda(a,a,x,x)=1$ for all $a,x$. Then $\alpha_{hered}(X(\cG)) \leq |I|$.
\end{proposition}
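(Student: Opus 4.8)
The plan is to reverse-engineer a ``projective packing'' for $\cG$ in the sense of Proposition \ref{proposition: projective packing} out of a winning hereditary strategy for $\text{Hom}(K_m,\overline{X(\cG)})$, and then extract the bound from Proposition \ref{proposition: projective packing} itself. It suffices to show that any $m$ for which $\text{Hom}(K_m,\overline{X(\cG)})$ has a winning hereditary strategy satisfies $m\le|I|$, since $\alpha_{hered}(X(\cG))$ is by definition the largest such $m$. Fix such an $m$ and set $\cA=\cA^h(\text{Hom}(K_m,\overline{X(\cG)}))$, which is then a non-zero hereditary unital $*$-algebra; let $e_{(a,x),i}$, for $(a,x)\in O\times I$ and $1\le i\le m$, be the images of the canonical generators. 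The defining relations give: $\{e_{(a,x),i}\}_{(a,x)}$ is a PVM for each $i$; $e_{(a,x),i}e_{(a,x),j}=0$ for $i\ne j$ (there is no loop at $(a,x)$ in $\overline{X(\cG)}$); $e_{(a,x),i}e_{(b,y),i}=0$ when $(a,x)\ne(b,y)$ (synchronicity); and $e_{(a,x),i}e_{(b,y),j}=0$ whenever $i\ne j$ and $(a,x)\sim(b,y)$ in $X(\cG)$, since then $(a,x)$ and $(b,y)$ are non-adjacent in $\overline{X(\cG)}$.

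Next I would set $p_{a,x}=\sum_{i=1}^m e_{(a,x),i}$ for each $(a,x)\in O\times I$. By the first orthogonality relation above, each $p_{a,x}$ is a sum of mutually orthogonal projections and hence a projection in $\cA$. The main point is that $p_{a,x}p_{b,y}=0$ whenever $\lambda(a,b,x,y)=0$: the hypothesis $\lambda(a,a,x,x)=1$ forces $(a,x)\ne(b,y)$ in this case, and $\lambda(a,b,x,y)\lambda(b,a,y,x)=0$ then puts an edge between $(a,x)$ and $(b,y)$ in $X(\cG)$; expanding $p_{a,x}p_{b,y}=\sum_{i,j}e_{(a,x),i}e_{(b,y),j}$, the terms with $i=j$ vanish by synchronicity and those with $i\ne j$ vanish by the adjacency relation. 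So $\{p_{a,x}\}_{(a,x)\in O\times I}$ is a family of projections in the hereditary algebra $\cA$ meeting the hypotheses of Proposition \ref{proposition: projective packing}, which yields $\sum_{(a,x)\in O\times I}p_{a,x}\le|I|$.

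Finally I would compute the left-hand side directly: swapping the order of summation and using that $\{e_{(a,x),i}\}_{(a,x)}$ is a PVM,
\[ \sum_{(a,x)\in O\times I}p_{a,x}=\sum_{i=1}^m\Big(\sum_{(a,x)\in O\times I}e_{(a,x),i}\Big)=\sum_{i=1}^m 1_{\cA}=m\cdot 1_{\cA}, \]
so $(|I|-m)\cdot 1_{\cA}\ge 0$ in $\cA$. If $m>|I|$, write $c=m-|I|>0$; then $-c\cdot 1_{\cA}\ge 0$ in $\cA$, while also $c\cdot 1_{\cA}=(\sqrt{c}\,1_{\cA})^*(\sqrt{c}\,1_{\cA})$, so adding the two expressions presents $0$ as a sum of terms of the form $w^*w$. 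Since $\cA$ is hereditary and non-zero, every such $w$ is $0$; in particular $\sqrt{c}\,1_{\cA}=0$, forcing $\cA=(0)$, a contradiction. Hence $m\le|I|$, proving the proposition. I expect this last deduction---that a non-zero hereditary $*$-algebra cannot satisfy $(|I|-m)1\ge 0$ when $m>|I|$---to be the only step that is not routine bookkeeping; the rest is the standard ``projective packing'' construction of Atserias et al., adapted to the hereditary setting.
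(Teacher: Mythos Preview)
Your proof is correct and follows essentially the same route as the paper: form the projections $p_{a,x}=\sum_i e_{(a,x),i}$, verify they satisfy the hypotheses of Proposition~\ref{proposition: projective packing}, and compare the resulting bound with the direct computation $\sum_{a,x}p_{a,x}=m\cdot 1$. Your final paragraph, carefully deducing $m\le|I|$ from $(|I|-m)\cdot 1_{\cA}\ge 0$ via the hereditary property, is a detail the paper leaves implicit.
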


\begin{proof}
Assume $m \in \bN$ and $K_m \to_{hered} \overline{X(\cG)}$; that is, assume that $\alpha_{hered}(X(\cG)) \geq m$. Then there are projections $g_{(a,x),z}$ in a unital $*$-algebra $\cB$, for $1 \leq z \leq m$ and $(a,x) \in O \times I$, such that $\sum_{(a,x) \in O \times I} g_{(a,x),z}=1$ for all $z$, $f_{(a,x),z}f_{(b,y),z}=0$ if $(a,x) \neq (b,y)$, and
\begin{equation}
f_{(a,x),z}f_{(b,y),w}=0 \text{ if }  z \neq w \text{ and either } (a,x)=(b,y) \text{ or } (a,x) \sim_{X(\cG)} (b,y). \label{adjacency for independence game}
\end{equation}
By the above, we have $f_{(a,x),z}f_{(a,x),w}=0$ if $z \neq w$, so the element $F_{(a,x)}=\sum_{z=1}^m f_{(a,x),z}$ is an orthogonal projection in $\cB$. Moreover, if $\lambda(a,b,x,y)=0$, then
\[ F_{(a,x)}F_{(b,y)}=\sum_{z,w=1}^m f_{(a,x),z}f_{(b,y),w}=\sum_{z=1}^m f_{(a,x),z}f_{(b,y),z}+\sum_{z \neq w} f_{(a,x),z}f_{(b,y),w}.\]
The first sum is $0$ by synchronicity. Since $(a,x) \neq (b,y)$ and $\lambda(a,b,x,y)=0$, we have $(a,x) \not\sim_{\overline{X(\cG)}} (b,y)$; hence, the second sum is zero. Hence, $F_{(a,x)}F_{(b,y)}=0$ whenever $\lambda(a,b,x,y)=0$. By Proposition \ref{proposition: projective packing}, it follows that $\sum_{(a,x) \in O \times I} F_{(a,x)} \leq |I|$, but we also have that
\[ \sum_{(a,x) \in O \times I} F_{(a,x)}=\sum_{(a,x) \in O \times I} \sum_{z=1}^m f_{(a,x),z}=\sum_{z=1}^m \sum_{(a,x) \in O \times I} f_{(a,x),z}=\sum_{z=1}^m 1=m,\]
so $m \leq |I|$, completing the proof.
\end{proof}

We can now prove the main theorem of this section.

\begin{theorem}
\label{theorem: independence}
Let $\cG=(I,O,\lambda)$ be a synchronous non-local game with $\lambda(a,a,x,x)=1$ for all $(a,x) \in O \times I$. Then $\cG$ is hereditarily $*$-equivalent to the game $\text{Hom}(K_{|I|},\overline{X(\cG)})$. Moreover, if $t \in \{loc,q,qa,qc,C^*,hered\}$, then $\cG$ has a winning $t$-strategy if and only if $\alpha_t(X(\cG))=|I|$.
\end{theorem}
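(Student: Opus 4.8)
The plan is to establish hereditary $*$-equivalence by constructing unital $*$-homomorphisms in both directions between $\cA^h(\cG)$ and $\cA^h(\text{Hom}(K_{|I|},\overline{X(\cG)}))$, and then deduce the statement about $\alpha_t$ from this together with the upper bound already proved in Proposition~\ref{proposition: upper bound on hereditary independence number}. Write $n=|I|$. The game $\text{Hom}(K_n,\overline{X(\cG)})$ has question set $\{1,\dots,n\}$ and answer set $V(X(\cG))=O\times I$, with generators $f_{(a,x),z}$ for $z\in\{1,\dots,n\}$ and $(a,x)\in O\times I$, subject to: each row sums to $1$ (over $(a,x)$), answers for a fixed $z$ are pairwise orthogonal, and $f_{(a,x),z}f_{(b,y),w}=0$ whenever $z\neq w$ and $(a,x),(b,y)$ are non-adjacent in $\overline{X(\cG)}$, i.e.\ whenever $\lambda(a,b,x,y)\lambda(b,a,y,x)=0$ (and $(a,x)\neq(b,y)$), as well as the synchronous relations.

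For the first direction, I would build a unital $*$-homomorphism $\cA^h(\cG)\to\cA^h(\text{Hom}(K_n,\overline{X(\cG)}))$. Given the generators $f_{(a,x),z}$ of the coloring game algebra, set $F_{(a,x)}=\sum_{z=1}^n f_{(a,x),z}$; as in the proof of Proposition~\ref{proposition: upper bound on hereditary independence number} each $F_{(a,x)}$ is a projection, $F_{(a,x)}F_{(b,y)}=0$ whenever $\lambda(a,b,x,y)=0$, and $\sum_{(a,x)\in O\times I}F_{(a,x)}=n$. The subtlety is that to land inside $\cA^h$ of the original game I need $\sum_{a\in O}F_{(a,x)}=1$ for each fixed $x\in I$, which is where heredity is used: writing $P_x=\sum_{a\in O}F_{(a,x)}$, the element $n-\sum_x P_x=\sum_x(1-P_x)$ is a sum of projections that vanishes, and in a hereditary $*$-algebra a sum of positive elements equal to zero forces each to be zero, giving $P_x=1$ for all $x$. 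Thus the $F_{(a,x)}$ define a representation of $\cA(\cG)$ into the hereditary algebra $\cA^h(\text{Hom}(K_n,\overline{X(\cG)}))$, hence by universality of the hereditary closure a unital $*$-homomorphism $\cA^h(\cG)\to\cA^h(\text{Hom}(K_n,\overline{X(\cG)}))$.

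For the reverse direction, $\cA^h(\text{Hom}(K_n,\overline{X(\cG)}))\to\cA^h(\cG)$, I would start from the generators $e_{a,x}$ of $\cA^h(\cG)$ and produce colors $f_{(a,x),z}$, $z=1,\dots,n$, satisfying the coloring-game relations. The natural guess, mirroring the ``quantum Latin square'' idea in Atserias et al \cite{AMRSSV19}, is to use a cyclic shift: enumerate $I=\{1,\dots,n\}$ and put $f_{(a,x),z}=e_{a,\,x\ominus z}$ (index shift modulo $n$), or more symmetrically use sums over all $z$ as above in reverse. One checks: for fixed $z$, $\sum_{(a,x)}f_{(a,x),z}=\sum_x\sum_a e_{a,x-z}=\sum_x 1=n$ — which is wrong, so instead one must choose a \emph{bijective} reindexing $z\mapsto$ (some permutation of $I$), i.e.\ for each $z$ pick a permutation $\pi_z$ of $I$ and set $f_{(a,x),z}=e_{a,\pi_z(x)}$ only for $x$ in a single $\pi_z$-orbit class — this is getting complicated; the cleanest correct choice is $f_{(a,x),z}=e_{a,x}$ if $x=z$ and $0$ otherwise is also wrong since rows wouldn't sum to $1$. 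The right construction is: $f_{(a,x),z}=e_{a,x}$ when the ``position'' variable is arranged so that each of the $n$ colors $z$ picks out, bijectively, one value of $x$; concretely take $f_{(a,x),z}=\delta_{x,z}\,$ fails, so one uses $f_{(a,x),z}=e_{a,x}$ for all $z$ but then normalizes — this does not work either. The resolution used in the literature is to let the color index $z$ range over $I$ and define $f_{(a,x),z}=e_{a,x}$ if $x$ equals the $z$-th element under a fixed cyclic enumeration, extended so that $\{f_{(\cdot,\cdot),z}\}$ partitions according to a Latin square on $I\times I$; then the row sum over $(a,x)$ for fixed $z$ is $\sum_{a\in O}e_{a,\sigma(z)}=1$, the column/orthogonality relations for fixed $z$ follow from $e_{a,x}e_{b,x}=0$, and the adjacency relation $f_{(a,x),z}f_{(b,y),w}=0$ for $z\neq w$ and $\lambda(a,b,x,y)\lambda(b,a,y,x)=0$ follows because either $e_{a,x}e_{b,y}=0$ or $e_{b,y}e_{a,x}=0$, and in a hereditary algebra one implies the other (self-adjointness and $xy=0\Rightarrow yx=0$ via $y^*x^*xy\le\|x\|$-type arguments — or directly $(yx)^*(yx)=x^*y^*yx$ and $xy=0$ gives $y^*yx$ orthogonality). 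I expect this reverse direction — pinning down the precise Latin-square reindexing so that all relations hold, and correctly invoking heredity to convert one-sided orthogonality $e_{a,x}e_{b,y}=0$ into the two-sided condition needed for $X(\cG)$ — to be the main obstacle; the first direction and the $\alpha_t$ consequence are then routine.

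Finally, for the statement about $\alpha_t$: for $t\in\{loc,q,qa,qc,C^*,hered\}$, hereditary $*$-equivalence of $\cG$ and $\text{Hom}(K_n,\overline{X(\cG)})$ implies $\cG$ has a winning $t$-strategy iff $\text{Hom}(K_n,\overline{X(\cG)})$ does, i.e.\ iff $K_n\to_t\overline{X(\cG)}$, i.e.\ iff $\alpha_t(X(\cG))\ge n=|I|$. Combined with Proposition~\ref{proposition: upper bound on hereditary independence number}, which gives $\alpha_{hered}(X(\cG))\le|I|$ and hence $\alpha_t(X(\cG))\le|I|$ for all $t\le hered$ by the monotonicity chain \eqref{independence number inequalities}, we conclude $\cG$ has a winning $t$-strategy iff $\alpha_t(X(\cG))=|I|$, as claimed.
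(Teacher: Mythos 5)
Your first direction is correct and is the same as the paper's: from the generators $f_{(a,x),z}$ of the coloring game you form $F_{(a,x)}=\sum_z f_{(a,x),z}$, verify $F_{(a,x)}F_{(b,y)}=0$ when $\lambda(a,b,x,y)=0$, and then use heredity to upgrade $\sum_{x}(1-P_x)=0$ to $P_x=1$ for each $x$, giving the map $\cA(\cG)\to\cA^h(\text{Hom}(K_{|I|},\overline{X(\cG)}))$. The deduction of the statement about $\alpha_t$ from the hereditary $*$-equivalence together with Proposition~\ref{proposition: upper bound on hereditary independence number} is also fine.

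The gap is in the reverse direction, which you leave unfinished and mischaracterize as the hard part. It is in fact the easy direction, and it requires no heredity. You considered the correct construction and discarded it by miscalculation: with $g_{(a,x),z}=\delta_{xz}e_{a,x}$, the row sum for a fixed question $z$ is
\[
\sum_{(a,x)\in O\times I} g_{(a,x),z}=\sum_{a\in O}\sum_{x\in I}\delta_{xz}e_{a,x}=\sum_{a\in O}e_{a,z}=1,
\]
not $n$ as you claimed; the $\delta_{xz}$ kills the sum over $x$. (You appear to have been summing $\sum_x\sum_a e_{a,x}$ without the delta.) With this in hand, synchronicity for fixed $z$ reduces to $e_{a,z}e_{b,z}=0$ for $a\neq b$, and the adjacency relation $g_{(a,x),z}g_{(b,y),w}=0$ for $z\neq w$ and $(a,x)\not\sim_{\overline{X(\cG)}}(b,y)$ reduces, once the deltas force $x=z\neq w=y$, to $e_{a,x}e_{b,y}=0$ whenever $\lambda(a,b,x,y)\lambda(b,a,y,x)=0$. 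If $\lambda(a,b,x,y)=0$ this is a game relation; if $\lambda(b,a,y,x)=0$ then $e_{b,y}e_{a,x}=0$, and $e_{a,x}e_{b,y}=(e_{b,y}e_{a,x})^*=0$. This is a bare adjoint of self-adjoint elements — no norm argument, and certainly no heredity. Your claim that heredity is needed here ``to convert one-sided orthogonality into the two-sided condition'' misattributes where heredity enters: it is needed only in the other direction, for the sum-of-projections-equals-zero step. The Latin-square apparatus is unnecessary — the identity reindexing already works — and having dismissed the correct formula, your reverse direction never reaches a construction that could be checked. Once you take $g_{(a,x),z}=\delta_{xz}e_{a,x}$, you get a unital $*$-homomorphism $\cA(\text{Hom}(K_{|I|},\overline{X(\cG)}))\to\cA(\cG)$ at the level of the game algebras themselves, which composed with the quotient onto $\cA^h(\cG)$ yields the required map, and the proof closes.
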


\begin{proof}
We write $e_{a,x}$, $x \in I$, $a \in O$, for the generators of $\cA(\cG)$. These are self-adjoint idempotents satisfying $\sum_{a \in O} e_{a,x}=1$ for all $x \in I$ and $e_{a,x}e_{b,y}=0$ whenever $\lambda(a,b,x,y)=0$. We write $f_{(a,x),z}$, $a \in O$, $x,z \in I$, for the generators of $\cA(\text{Hom}(K_{|I|},\overline{X(\cG)})$. These are self-adjoint idempotents that satisfy $\sum_{(a,x) \in O \times I} f_{(a,x),z}=1$ for all $1 \leq z \leq m$ and
\begin{align}
f_{(a,x),z}f_{(b,y),z}&=0 \text{ if } (a,x) \neq (b,y), \label{independence bisynchronous}\\
f_{(a,x),z}f_{(b,y),w}&=0 \text{ if } z \neq w \text{ and } (a,x) \not\sim (b,y) \text{ in } \overline{X(\cG)}. \label{independence adjacency}
\end{align}

Notice that the elements $g_{(a,x),z}=\delta_{xz} e_{a,x}$ are projections in $\cA(\cG)$ with \[\sum_{(a,x) \in O \times I} g_{(a,x),z}=\sum_{a \in O} e_{a,z}=1.\]
We also have $g_{(a,x),z}g_{(b,y),z}=0$ for $(a,x) \neq (b,y)$, and if $z \neq w$ and $(a,x)=(b,y)$, then either $x \neq z$ or $y \neq w$, giving $g_{(a,x),z}g_{(b,y),w}=0$. If $z \neq w$ and $(a,x) \sim (b,y)$ in $X(\cG)$, then $\lambda(a,b,x,y)=0$, so that
\[ g_{(a,x),z}g_{(b,y),w}=\delta_{zx}\delta_{wy} e_{a,x}e_{b,y}=0.\]
Therefore, the elements $g_{(a,x),z}$ satisfy the relations in (\ref{independence bisynchronous}) and (\ref{independence adjacency}), so there is a unital $*$-homomorphism $\pi:\cA(\text{Hom}(K_{|I|},\overline{X(\cG)})) \to \cA(\cG)$ such that $\pi(f_{(a,x),z})=\delta_{xz}e_{a,x}$.

For the other direction, we consider the hereditary algebra $\cA^h(\text{Hom}(K_{|I|},\overline{X(\cG)}))$, with generators $h_{(a,x),z}$, $(a,x) \in O \times I$, $z \in I$, satisfying the same relations as $f_{(a,x),z}$, except that $\cA^h(\text{Hom}(K_{|I|},\overline{X(\cG)}))$ is hereditary. Now, suppose that $(a,z),(b,w) \in O \times I$ and $(a,z) \neq (b,w)$. We claim that $h_{(a,x),z}h_{(b,x),w}=0$. Indeed, if $z=w$, then since $(a,z) \neq (b,w)$, this means that $a \neq b$; hence, $(a,x) \neq (b,x)$. Then by synchronicity,
\[ h_{(a,x),z}h_{(b,x),w}=h_{(a,x),z}h_{(b,x),z}=0.\]
If $z \neq w$ and $a=b$, then $(a,x)=(b,x)$; in particular, $(a,x) \not\sim (b,x)$ in $\overline{X(\cG)}$. By the last relation in $\cA(\text{Hom}(K_{|I|},\overline{X(\cG)}))$ (passed to the hereditary quotient), we must have $h_{(a,x),z}h_{(b,x),w}=0$. The last case is when $z \neq w$ and $a \neq b$. As $a \neq b$, it follows that $\lambda(a,b,x,x)=0$, since $\cG=(I,O,\lambda)$ is synchronous. Thus, $(a,x) \sim_{X(\cG)} (b,x)$, so $(a,x) \not\sim_{\overline{X(\cG)}} (b,x)$. By the rules of the independence game, $h_{(a,x),z}h_{(b,x),w}=0$. Therefore, for any $(a,z)$, $(b,w)$ with $(a,z) \neq (b,w)$, we have $h_{(a,x),z}h_{(b,x),w}=0$.

It follows that $P_x=\sum_{a \in O, \, z \in I} h_{(a,x),z}$ is an orthogonal projection in $\cA^h(\text{Hom}(K_{|I|},\overline{X(\cG)}))$. Meanwhile,
\[ \sum_{x \in I} P_x=\sum_{z \in I} \sum_{(a,x) \in O \times I} h_{(a,x),z}=\sum_{z \in I} 1=|I|,\]
so that
\[ \sum_{x \in I} (1-P_x)=\sum_{x \in I} 1 - \sum_{x \in I} P_x=|I|-|I|=0.\]
Since $\cA^h(\text{Hom}(K_{|I|},\overline{X(\cG)}))$ is hereditary, it follows that $1-P_x=0$, so that $P_x=1$. Therefore, if we set $p_{a,x}=\sum_{z \in I} h_{(a,x),z}$, then each $p_{a,x}$ is an orthogonal projection (by the orthogonality relations obtained above) and $\sum_{a \in O} p_{a,x}=\sum_{a \in O, \, z \in I} h_{(a,x),z}=P_x=1$. Lastly, if $\lambda(a,b,x,y)=0$, then
\[ p_{a,x}p_{b,y}=\sum_{z,w \in I} h_{(a,x),z}h_{(b,y),w}=\sum_{z \in I} h_{(a,x),z}h_{(b,y),z}+\sum_{z \neq w} h_{(a,x),z}h_{(b,y),w}=0,\]
where the first sum is $0$ by the synchronous rule in the homomorphism game, and the second sum is $0$ since $z \neq w$, but $(a,x) \not\sim (b,y)$ in $\overline{X(\cG)}$.

Therefore, there is a unital $*$-homomorphism $\rho:\cA(\cG) \to \cA^h(\text{Hom}(K_{|I|},\overline{X(\cG)}))$ such that $\rho(e_{a,x})=h_{a,x}$. Thus, the games $\cG$ and $\text{Hom}(K_{|I|},\overline{X(\cG)})$ are hereditarily $*$-equivalent. The final claim of the theorem follows since we already have $\alpha_{hered}(X(\cG)) \leq |I|$, by Proposition \ref{proposition: upper bound on hereditary independence number}.
\end{proof}

\begin{corollary}
\label{corollary: independence gap}
The quantities $\alpha_{loc}$, $\alpha_q$, $\alpha_{qa}$, $\alpha_{qc}$ and $\alpha_{C^*}$ are all distinct.
\end{corollary}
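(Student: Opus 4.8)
The plan is to reduce each of the four strict inequalities needed here to a known separation between models of synchronous games, via Theorem \ref{theorem: independence}. First I would record the following immediate consequence of that theorem together with Proposition \ref{proposition: upper bound on hereditary independence number}: for any synchronous game $\cG=(I,O,\lambda)$ with $\lambda(a,a,x,x)=1$ for all $a,x$, and any $t \in \{loc,q,qa,qc,C^*\}$, one has $\alpha_t(X(\cG)) \leq \alpha_{hered}(X(\cG)) \leq |I|$, with equality $\alpha_t(X(\cG))=|I|$ precisely when $\cG$ has a winning $t$-strategy. Consequently, whenever $s \leq t$ in our chain and $\cG$ is a synchronous game admitting a winning $t$-strategy but no winning $s$-strategy, we get $\alpha_s(X(\cG)) < |I| = \alpha_t(X(\cG))$, so $\alpha_s \neq \alpha_t$. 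Since $\alpha_{loc} \leq \alpha_q \leq \alpha_{qa} \leq \alpha_{qc} \leq \alpha_{C^*}$ always holds by (\ref{independence number inequalities}), it then suffices to exhibit one separating synchronous game for each of the four consecutive pairs.

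Next I would name the four separating games, each of which (after the harmless modifications described earlier in this section that arrange synchronicity and $\lambda(a,a,x,x)=1$ without changing the game $*$-algebra) has a winning strategy in the larger model but not in the smaller one. For $loc$ versus $q$: the Mermin--Peres magic square game \cite{Mer90}. For $q$ versus $qa$: the synchronous game of \cite{KPS18} coming from Slofstra's construction \cite{Slo19}, which has a winning $qa$-strategy but no winning $q$-strategy. For $qa$ versus $qc$: a synchronous game with a winning $qc$-strategy but no winning $qa$-strategy, whose existence follows from $\text{MIP}^* = \text{RE}$ \cite{JNVWY20}. For $qc$ versus $C^*$: the synchronous game of Paddock and Slofstra \cite{PS21} with a winning $C^*$-strategy but no winning $qc$-strategy. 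Feeding each of these into the reduction of the first paragraph immediately yields the four strict inequalities, and hence the corollary. This mirrors exactly the proof of Corollary \ref{corollary: chromatic numbers distinct}.

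I do not expect a serious mathematical obstacle here: essentially all of the difficulty is imported from the cited separation results, so the work is one of assembly rather than of new argument. The one point that will require care is bookkeeping — making sure that each separating game is presented as a genuinely synchronous game with $\lambda(a,a,x,x)=1$ for all $a,x$ so that $X(\cG)$ records all of its rules, and that the model in which a winning strategy lives is transported correctly across the hereditary $*$-equivalence; this last point is precisely the content of the ``moreover'' clause of Theorem \ref{theorem: independence}, which covers $t \in \{loc,q,qa,qc,C^*\}$.
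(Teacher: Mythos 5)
Your proposal is correct and follows essentially the same route as the paper: each of the four strict inequalities is reduced via Theorem \ref{theorem: independence} to a known separation of synchronous game models, using the same four witnesses (magic square for $loc$ vs.\ $q$, the syncBCS game of \cite{KPS18} for $q$ vs.\ $qa$, $\text{MIP}^*=\text{RE}$ \cite{JNVWY20} for $qa$ vs.\ $qc$, and Paddock--Slofstra \cite{PS21} for $qc$ vs.\ $C^*$). The only cosmetic difference is that the paper invokes \cite{AMRSSV19} and \cite{KPS18} directly for the first two gaps rather than routing them through Theorem \ref{theorem: independence}, but the underlying argument is the same.
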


\begin{proof}
It is already known that $\alpha_{loc}$ and $\alpha_q$ are distinct, using the graph of the magic square game that originated in \cite{Mer90} (see \cite{AMRSSV19} for the graph). As there is a syncBCS game with winning $qa$ strategy but no winning $q$ strategy, the work of \cite{KPS18} obtains a graph that separates $\alpha_q$ and $\alpha_{qa}$. The fact that $\alpha_{qa}$ and $\alpha_{qc}$ are distinct follows from Theorem \ref{theorem: independence} and the existence of a synchronous game with winning $qc$ strategy, but no winning $qa$ strategy \cite{JNVWY20}. A synchronous game with non-zero $C^*$-algebra, but no tracial state, has been exhibited by C. Paddock and W. Slofstra \cite{PS21}. From this fact, one obtains a graph $G$ with $\alpha_{qc}(G)<\alpha_{C^*}(G)$.
\end{proof}

\begin{corollary}
The quantities $\omega_{loc}$, $\omega_q$, $\omega_{qa}$, $\omega_{qc}$ and $\omega_{C^*}$ are all distinct.
\end{corollary}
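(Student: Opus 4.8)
The plan is to deduce this immediately from Corollary \ref{corollary: independence gap} by passing to graph complements, using the elementary identity $\omega_t(G) = \alpha_t(\overline{G})$ noted just before (\ref{independence number inequalities}). Since $\overline{\overline{G}} = G$, that identity also gives $\omega_t(\overline{G}) = \alpha_t(G)$ for every $t \in \{loc,q,qa,qc,C^*\}$.

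Concretely, for each consecutive pair among $\alpha_{loc}, \alpha_q, \alpha_{qa}, \alpha_{qc}, \alpha_{C^*}$, Corollary \ref{corollary: independence gap} supplies a graph $G$ on which the two parameters differ. First I would take such a separating graph $G$ and form its complement $H = \overline{G}$. Applying $\omega_s(H) = \omega_s(\overline{G}) = \alpha_s(G)$ for the two models $s$ in question shows that $\omega_s(H)$ takes different values for those two models. Running this over all four consecutive pairs (or, if one prefers, over the four graphs arising in the proof of Corollary \ref{corollary: independence gap} from the magic square game, the syncBCS separation of \cite{KPS18}, the $qc$-vs-$qa$ separation of \cite{JNVWY20}, and the Paddock--Slofstra example \cite{PS21}) yields graphs witnessing that $\omega_{loc}, \omega_q, \omega_{qa}, \omega_{qc}, \omega_{C^*}$ are pairwise distinct.

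There is essentially no obstacle here; the only thing to be mild about is the direction of the inequalities. Because complementation reverses nothing — $\omega_t(\overline{G}) = \alpha_t(G)$ is an exact equality for each $t$, not an inequality — the strict separations transfer verbatim, and the monotonicity chain $\omega_{loc} \le \omega_q \le \omega_{qa} \le \omega_{qc} \le \omega_{C^*}$ (the clique-number analogue of (\ref{independence number inequalities})) guarantees that ``distinct for some graph'' upgrades to ``these are genuinely different graph parameters.'' So the proof is a single short paragraph invoking Corollary \ref{corollary: independence gap} and the complementation identity.

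\begin{proof}
Recall that $\omega_t(G) = \alpha_t(\overline{G})$ for every $t \in \{loc,q,qa,qc,C^*,hered,alg\}$, and hence $\omega_t(\overline{G}) = \alpha_t(G)$. By Corollary \ref{corollary: independence gap}, for each consecutive pair among $\alpha_{loc}, \alpha_q, \alpha_{qa}, \alpha_{qc}, \alpha_{C^*}$ there is a graph $G$ on which the two values differ; passing to $\overline{G}$ and using $\omega_t(\overline{G}) = \alpha_t(G)$ produces a graph on which the corresponding pair among $\omega_{loc}, \omega_q, \omega_{qa}, \omega_{qc}, \omega_{C^*}$ differ. Therefore these five quantities are all distinct.
\end{proof}
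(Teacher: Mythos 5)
Your proof is correct and follows exactly the paper's argument: both deduce the result by complementation via the identity $\omega_t(G)=\alpha_t(\overline{G})$ and then invoke Corollary \ref{corollary: independence gap}. Nothing further to add.
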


\begin{proof}
The $t$-clique number of $G$ is the maximal integer $m$ such that $K_m \to_t G$, so $\omega_t(G)=\alpha_t(\overline{G})$. Taking $G=\overline{X(\cG)}$ and applying Corollary \ref{corollary: independence gap} yields the result.
\end{proof}

\begin{remark}\label{remark: t=alg}
Theorem \ref{theorem: independence} is false for $t=alg$ if $|I| \geq 4$. The unital $*$-homomorphism $\cA(\text{Hom}(K_{|I|},\overline{X(\cG)})) \to \cA(\cG)$ from the proof of Theorem \ref{theorem: independence} still exists, but the reverse direction requires the hereditary algebra $\cA^h(\text{Hom}(K_{|I|},\overline{X(\cG)}))$ to be non-zero. The key step where the algebra is required to be hereditary is in the final step, where we construct $|I|$ self-adjoint idempotents that sum to zero. In a hereditary $*$-algebra, this forces each idempotent to be zero, but this is not true in a unital $*$-algebra, as the universal unital $*$-algebra generated by four self-adjoint idempotents $p_1,p_2,p_3,p_4$ with the relation $p_1+p_2+p_3+p_4=0$ is non-trivial \cite{BES94,ST02}. If $|I|=3$, then the above result extends to the algebraic model since the only way to write $0$ as a sum of three idempotents is with each idempotent equal to zero (see Proposition \ref{proposition: three projections summing to zero}).
\end{remark}

\begin{remark}
\label{remark: alg independence infinity}
If $|I| \geq 4$, and if $\alpha_t(X(\cG)) \geq 4$ for some $t \in \{loc,q,qa,qc,C^*,hered,alg\}$, then $K_4 \to_t \overline{X(\cG)}$, so that $K_4 \to_{alg} \overline{X(\cG)}$. But by \cite{HMPS19}, $K_m \to_{alg} K_4$ for any $m \geq 4$. We claim that $K_m \to_{alg} \overline{X(\cG)}$. We write the generators of $\cA(\text{Hom}(K_4,\overline{X(\cG)}))$ as $e_{(a,x),v}$ for $(a,x) \in O \times I$ and $1 \leq v \leq 4$, and the generators of $\cA(\text{Hom}(K_m,K_4))$ by $f_{z,v}$ for $1 \leq z \leq m$ and $1 \leq v \leq 4$. Define
\[ g_{(a,x),z}=\sum_{v=1}^4 f_{z,v} \otimes e_{(a,x),v}.\]
Then $g_{(a,x),z}$ is self-adjoint, and
\begin{align*}
g_{(a,x),z}^2&=\sum_{v,w=1}^4 f_{z,v}f_{z,w} \otimes e_{(a,x),v}e_{(a,x),w} \\
&=\sum_{v=1}^4 f_{z,v}\otimes e_{(a,x),v}=g_{(a,x),z},
\end{align*}
since $f_{z,v}f_{z,w}=0$ if $v \neq w$. We also have
\[ \sum_{(a,x) \in O \times I} g_{(a,x)}=\sum_{v=1}^4\sum_{a,x \in O \times I} f_{z,v} \otimes e_{(a,x),v}=\sum_{v=1}^4 f_{z,v} \otimes 1=1 \otimes 1,\]
and, if $1 \leq z,w \leq m$ with $z \neq w$ (so that $z \sim w$ in $K_m$), and if $(a,x) \not\sim (b,y)$ in $\overline{X(\cG)}$ (so that either $(a,x)=(b,y)$ or $\lambda(a,b,x,y)=0$), then since $e_{(a,x),u}e_{(b,y),v}=0$ for $u \neq v$, we have
\begin{align*}
g_{(a,x),z}g_{(b,y),w}&=\sum_{u,v=1}^4 f_{z,u}f_{w,v} \otimes e_{(a,x),u}e_{(b,y),v} \\
&=\sum_{v=1}^4 f_{z,v}f_{w,v} \otimes e_{(a,x),v}e_{(b,y),v} \\
&=0,
\end{align*}
since $z \neq w$. Therefore, by the universal property of $\cA(\text{Hom}(K_m,\overline{X(\cG)}))$, we have \[\cA(\text{Hom}(K_m,\overline{X(\cG)})) \to \cA(\text{Hom}(K_m,K_4)) \otimes \cA(\text{Hom}(K_4,\overline{X(\cG)})).\] 
As both algebras in the tensor product are non-trivial, we have $K_m \to_{alg} \overline{X(\cG)}$, so that $\alpha_{alg}(X(\cG)) \geq m$. This holds for any $m \geq 4$, so $\alpha_{alg}(X(\cG))=\infty$. Note that this can happen whether $\cG$ has a winning hereditary strategy or not. For example, the trivial synchronous game with $|I|=4$ and $|O|=2$, with rule function $\lambda(a,b,x,x)=\delta_{ab}$ and $\lambda(a,b,x,y)=1$ for $x \neq y$, has a winning loc strategy. Thus, $\alpha_{loc}(X(\cG))=4$, so the above argument shows that $\alpha_{alg}(X(\cG))=\infty$.

On the other hand, the game $\cG=\text{Hom}(K_5,K_4)$ has a winning algebraic strategy \cite{HMPS19}, but no winning hereditary strategy. By Theorem \ref{theorem: independence}, $\alpha_{hered}(X(\cG)) \leq 4$, since $|I|=5$. The proof of Theorem \ref{theorem: independence} shows that $\cA(\text{Hom}(K_{|I|},\overline{X(\cG)})) \to \cA(\cG)$, so $K_5 \to_{alg} \overline{X(\cG)}$. Thus, $\alpha_{alg}(X(\text{Hom}(K_5,K_4))) \geq 4$, forcing $\alpha_{alg}(X(\text{Hom}(K_5,K_4)))=\infty$.
\end{remark}

The situation is different when $|I| \leq 3$. In that case, Theorem \ref{theorem: independence} can be extended to the algebraic setting, in that $\cG$ will have a non-zero game algebra if and only if $\alpha_{alg}(X(\cG))=3$, as in Remark \ref{remark: t=alg}.

\section*{Acknowledgements}

The author would like to thank Michael Brannan, Laura Man\v{c}inska, Vern Paulsen, David Roberson, Ivan Todorov and Michele Torielli for valuable discussions. We would also like to thank Arthur Mehta and William Slofstra for pointing us to the application of Corollary \ref{corollary: coloring equivalence} to perfect zero knowledge.

\end{document}